\documentclass[a4paper]{article}
\usepackage{fullpage}
\usepackage{amsmath, amsthm, amssymb, mathtools}
\usepackage{natbib}
\usepackage[dvipsnames]{xcolor}
\colorlet{UrlBlue}{MidnightBlue!70!black}
\colorlet{darkgreen}{green!60!black}
\usepackage[
  unicode=true,
  colorlinks=true,
  linkcolor=MidnightBlue,
  citecolor=OliveGreen,
  urlcolor=UrlBlue,
  filecolor=Magenta,
  bookmarksopen=true,
  bookmarksnumbered=true
]{hyperref}
\usepackage[capitalize]{cleveref}
\usepackage{enumerate}
\usepackage{booktabs}

\usepackage{algorithm}
\usepackage{algpseudocodex}

\usepackage{nicefrac}
\usepackage{subcaption}
\usepackage{bm}
\usepackage{optidef}
\usepackage{thmtools}
\usepackage{placeins}
\usepackage{setspace}
\usepackage{titlesec}
\titlespacing*{\subsection}
  {0pt}   
  {1.2ex plus .2ex minus .1ex}  
  {0.5ex plus .1ex}              

\titlespacing*{\section}
  {0pt}
  {2.5ex plus .3ex minus .2ex}
  {1.8ex plus .2ex}

\usepackage{newpxtext,newpxmath}

\usepackage{tikz}
\usetikzlibrary{shapes,patterns,calc,spy}
\usepackage{pgfplots}
\pgfplotsset{compat=1.17}
\usetikzlibrary{arrows,automata,math,backgrounds,decorations.pathreplacing,decorations.markings,patterns,arrows.meta,positioning}
\tikzset{agent/.style={draw, circle, inner sep=3, outer sep=3}}
\tikzset{trade/.style={draw, thick, -Latex}}
\usepackage{tikz-network}

\definecolor{burgundy}{HTML}{882426}
\definecolor{midnight}{HTML}{151B54}
\definecolor{bluegray}{rgb}{0.4, 0.6, 0.8}

\pgfplotsset{every tick label/.append style={font=\footnotesize}}

\pgfplotsset{tick style={ultra thick, tickwidth=0.25cm}}

\theoremstyle{plain}
\newtheorem{theorem}{Theorem}
\newtheorem{proposition}{Proposition}
\newtheorem{lemma}{Lemma}
\newtheorem{corollary}{Corollary}
\newtheorem{observation}{Observation}

\newtheorem{fact}{Fact}
\theoremstyle{definition}

\newtheorem{definition}{Definition}

\newtheorem{example}{Example}
\AtBeginEnvironment{example}{%
  \pushQED{\qed}%
}
\AtEndEnvironment{example}{\popQED}

\newenvironment{proof2}[1][Proof]{\par\vspace{\topsep}\noindent\textbf{#1.} }{\ \hfill$\square$\par\vspace{\topsep}}

\renewcommand{\vec}[1]{\bm{#1}}
\newcommand{\R}{\mathbb{R}}
\newcommand{\Rn}{\mathbb{R}^n}
\newcommand{\Z}{\mathbb{Z}}

\DeclareMathOperator*{\argmax}{\textnormal{arg\,max}}

\DeclareMathOperator*{\asc}{\textnormal{asc}}
\DeclareMathOperator*{\desc}{\textnormal{desc}}
\newcommand{\eb}{\vec{e}}

\newcommand{\bb}{\vec{b}}
\newcommand{\cb}{\vec{c}}
\newcommand{\pb}{\vec{p}}
\newcommand{\qb}{\vec{q}}

\newcommand{\ub}{\vec{u}}
\newcommand{\xb}{\vec{x}}
\newcommand{\nb}{\vec{n}}
\newcommand{\yb}{\vec{y}}
\newcommand{\zb}{\vec{z}}

\newcommand{\vb}{\vec{v}}
\newcommand{\offer}{\sigma}
\newcommand{\offers}{\vec{\offer}}
\newcommand{\market}{\mathcal{M}}
\newcommand{\auction}{\mathcal{A}}
\newcommand{\LIP}{\mathcal{L}}
\newcommand{\bbE}{\mathbb{E}}

\DeclarePairedDelimiter{\ceil}\lceil\rceil
\DeclarePairedDelimiter{\floor}\lfloor\rfloor

\makeatletter
\newcommand\blfootnote[1]{%
  \begingroup
    \renewcommand\thefootnote{}%
    \footnotetext{#1}%
    \addtocounter{footnote}{0}%
  \endgroup
}
\makeatother

\title{\textsc{Decentralized Trading Networks: Equilibria and Fairness}}

\author{
Simon Finster\footnote{Johannes Kepler University Linz and Austrian Institute of Economic Research, \href{mailto:simon.finster@jku.at}{simon.finster@jku.at}}
\and
Paul W.~Goldberg\footnote{University of Oxford, \href{mailto:paul.goldberg@cs.ox.ac.uk}{paul.goldberg@cs.ox.ac.uk}}
\and
Edwin Lock\footnote{King's College London, \href{mailto:edwin.lock@kcl.ac.uk}{edwin.lock@kcl.ac.uk}}
\and
Matilde Tullii\footnote{Fairplay Team, CREST, Ensae - IP Paris, \href{mailto:matilde.tullii@ensae.fr}{matilde.tullii@ensae.fr}}
}

\date{\vspace{0.6cm} 23 February 2026}
\begin{document}

\maketitle

\begin{abstract}
We explore stability and fairness considerations in decentralized networked markets with bilateral contracts, building on the trading networks framework \citep{hatfield2013stability}. In our trading network game, we show that a well-defined subset of Nash equilibria can be supported as competitive equilibria. Considering an offer-based trading dynamic as well as a stochastic price clock market, we prove new convergence results to Nash equilibrium and competitive equilibrium, providing a rationale for stability properties in decentralized, dynamic trading networks. Turning to the tension between fairness and (core) stability, we prove several negative results: inessential agents always receive zero utility in any core outcome, and even essential agents can get zero utility in all core outcomes.\\

\noindent \textbf{Keywords:} trading networks, decentralized trading, best response dynamics, stability, core, fairness, full substitutability\\
\noindent \textbf{JEL codes:} C71, D47, D63, D85
\end{abstract}
\blfootnote{\emph{Acknowledgments}: We are grateful to Patrick Loiseau, Bruno Strulovici, Vijay Vazirani, and Alex Westkamp for helpful discussions. PWG was supported by UKRI/EPSRC grants EP/X040461/1 and EP/X038548/1.}

\section{Introduction}

Bilateral trading in supply chains, OTC financial markets, and labor markets is pervasive, yet we lack strategic and dynamic foundations for stability of market outcomes in these networked settings --- and we understand even less about who benefits from trade. In this paper, we prove new convergence results for decentralized trading and show that core stability imposes stark limits on fairness: essential agents, who are part of every efficient allocation, can be left with nothing, and inessential agents must receive zero surplus in every core outcome. We establish these results in the trading network market of \citet{hatfield2013stability}, a general framework in which agents act as buyers, sellers, and intermediary traders, engaging in bilateral contracts with connected counterparts. \citet{hatfield2013stability} proved fundamental results connecting competitive equilibria and stability. Dynamic stability and redistributive concerns, however, have remained largely unexplored.

We propose the trading network game, which builds on the trading network market. In our game, agents make offers to their trading counterparts. As many Nash equilibria are inefficient, we introduce a refinement, $\varepsilon$-tightness, that selects a pertinent subset of Nash equilibria and show that these can be connected to competitive equilibria under fully substitutable preferences \citep{hatfield2013stability,Hatfield-2019}. We demonstrate that both Nash equilibria and (approximate) competitive equilibria can emerge from decentralized trading. Moreover, we provide the first and fundamental insights into surplus sharing in (core) stable outcomes between agents in the network market.

Stability is the central requirement for equilibria in markets, whether competitive or strategic. The literature on trading networks to date has focused on static stability \citep{hatfield2013stability,ostrovsky2008stability,KizilkaleVohra2024ConstrainedTradingNetworks}, yet interest in dynamic price adjustment goes back to the tâtonnement process of \citet{Walras1874Elements} and the seminal works of \citet{Arrow-Hurwicz-1958,Arrow-Hurwicz-1959}.\footnote{More recently, \citet{chen2016decentralized,Cheung-et-al-2020,Goktas-et-al-2023} studied dynamic stability and convergence in related settings.} To fill this gap for trading networks, we study two dynamic processes and prove their convergence to stable outcomes. The first is an offer-based dynamic, originally proposed by \citet{lock2024decentralized}, in which agents are repeatedly called upon to adjust their offers. The second is a stochastic clock price market: central price clocks track current prices and get adjusted throughout trading, but agents' behavior remains decentralized. Agents are called upon randomly to update their demand at current prices, a much weaker requirement on coordination than concurrent updating.

Our convergence results guarantee that the market reaches a stable outcome --- but they say nothing about who benefits, raising the question of surplus sharing and fairness. A Pareto-efficient or utilitarian-efficient outcome need not be equitable, and the distribution of the trading surplus naturally matters to market participants. However, a purely egalitarian solution is often not stable or efficient. One may therefore wish to seek egalitarian solutions among outcomes that satisfy some defined fairness criterion. We show that competitive equilibria are easily too restrictive, e.g.,~they are incompatible with the egalitarian social welfare order (leximin order) \citep{Moulin-2003}. Thus, we consider fairness inside the core.\footnote{Previous works have studied many fairness concepts, for example, the nucleolus \citep{Schmeidler-1969}, or the leximin, leximax, or min-spread order \citep{Vazirani-2022-Core-imputations,vazirani2025fair}.} Our focus is on the fundamental discussion of who contributes to efficiency and should thus be rewarded with a positive surplus; e.g.,~in order to be incentivized to participate in the market, if not for normative fairness reasons. Our results for the trading network market highlight structural properties: we show that even core stability constrains fairness considerations considerably.

\subsection{Contributions}

As our first main contribution, we develop a strategic foundation for competitive equilibria in the trading network market. In our offer-based trading network game, we introduce the class of $\varepsilon$-tight Nash equilibria. Our \cref{theorem:NE-to-CE} shows that, for sufficiently small $\varepsilon$ that depends on the maximum vertex degree of the market, we can extend any $\varepsilon$-tight Nash equilibrium outcome to an integral competitive equilibrium, in polynomial time.

The second main contribution consists in the analysis of two different decentralized dynamics and strong convergence guarantees. For the first trading dynamic, defined by \citet{lock2024decentralized}, we generalize their result---which requires a substantially different proof technique---to 3-sparse markets with fully substitutable agents and bounded offers (\cref{prop:3-sparse-convergence}). Notably, our proof provides a novel approach that relies on the geometric interpretation of the demand first introduced by \citet{baldwin2019understanding}, which we believe might be useful to further extend the result. The second dynamic we propose corresponds to the evolution of a stochastic clock auction \citep{milgrom2020clock}. Our convergence result in \cref{prop:price-dynamic-converges} is obtained by establishing connections to stochastic subgradient descent, and relying on the inherited techniques \citep{garrigos2023handbook,bubeck2015convexoptimizationalgorithmscomplexity}.

Our third contribution establishes a mapping from core utility imputations of the associated cooperative market game to core outcomes in the trading network market. \Cref{prop:core-imputation-implementation} demonstrates that, starting from any efficient allocation, any core imputation can be implemented by prices as a core outcome. This result is fundamental as it allows any consideration of surplus allocation and fairness to take place in the simpler cooperative market game.

This insight also leads to one of our main theorems, an impossibility. Agents are essential if they are part of every efficient allocation in a market, and inessential otherwise, concepts that have been studied for the assignment game \citep{Vazirani-2022-Core-imputations,vazirani2025fair}. Because any core utility imputation can be implemented from any efficient allocation, and because an inessential agent must get zero utility in at least one efficient allocation, it must get zero utility in any core outcome (\cref{theorem:inessential-agents-zero-utility}). This substantially generalizes the lone wolf theorems from maximum weight matchings in matching markets \citep{McVitieWilson1970} and competitive equilibria in transferable utility exchange economies \citep{JagadeesanKominersRheingansYoo2020}. For markets with three essential agents or fewer, essential agents obtain positive utility in the leximin core outcome (\cref{prop:pos-utility-3-agents}). However, this result breaks down in markets with four or more essential agents.

On a technical level, much of our reasoning applies geometric techniques from \citet{baldwin2019understanding} to the trading network market. While it was understood previously that the trading network market is nested in unimodular demand types, we provide the first geometric interpretation of trading bundles in price space (\cref{counterexample}) and demonstrate how to explicitly connect the trading market to a goods market (\Cref{app:markets-to-auctions}).

\subsection{Further Related Literature}
The study of transferable utility markets with indivisible goods has evolved from the assignment game \citep{Koopmans-et-al-1957,Gale-1960,Shapley-1971} to markets with very general, unimodular demands \citep{baldwin2019understanding}, within which the trading network model of \citet{hatfield2013stability} is nested as a prominent special case. Beyond this literature, our work connects to several strands of research on decentralized markets, price adjustment processes, and fairness considerations.

Convergence properties of price adjustment processes have received considerable attention. E.g.,~\citet{chen2016decentralized} study the convergence of a random decentralized process in a labor market, establishing that infinite cycles cannot arise. The possible existence of cycling behavior is also a central challenge in our analysis of the two best response dynamics we study. \citet{Cheung-et-al-2020} identify a class of markets admitting a convex potential function whose negative gradient coincides with excess demand, thereby obtaining various convergence guarantees in the continuous-time setting.

A related line of work studies dynamic clock auction formats with randomized elements. \citet{Christodoulou-2022} study clock auctions in which an auctioneer serves strategic (non-truthful) bidders subject to feasibility constraints. They design a clock auction mechanism that is optimal in terms of the competitive ratio of social welfare, and propose a second, randomized mechanism that mitigates adversarial behavior arising from deterministic tie-breaking rules. \citet{Feldman-2025} address the same setting and seek to overcome the limitations imposed by the auctioneer's lack of information about bidders' private valuations. They design mechanisms for different informational regimes and evaluate performance in terms of competitive ratio.

Recent articles by \citet{Dworczak-2021} and \citet{Akbarpour-2024} have discussed equity-efficiency trade-offs and the use of market vs.~non-market mechanisms. Their approach with social welfare weights allows for very general redistributive concerns; however, their market settings are more restrictive. In the assignment game, \citet{Vazirani-2022-Core-imputations,vazirani2025fair} studies fair core imputations with a focus on computational properties and developing algorithms. In the auction literature, surplus distribution has only recently garnered attention \citep{Finster-2025-equitable-auctions}. In a Bayesian setting with privately informed bidders, they show that the trade-off between common and private values in the bidders' value structure is essential in determining the equitable pricing format, and in most cases some degree of pay-as-bid pricing is indispensable. \citet{jeong2023first} discuss bid space restrictions to incorporate diverse objectives in first-price auctions. Finally, fairness has received particular attention in the allocation of online advertisement to users \citep{Celis-2019,Chawla-2022}.

\subsection{Organization}
The rest of the paper is organized as follows. 
In \Cref{sec:network-trading-market}, we present the setting by defining the trading network market, the main relevant notions, such as stability concepts, and we discuss a geometric interpretation of the agents' demand. 
\Cref{sec:the-network-trading-game} focuses on the analysis of the market from a game theoretic viewpoint, providing a link between the common solution concept of this field, Nash equilibrium, and competitive equilibria.
Two different decentralized dynamics are introduced in \Cref{sec:decentr-trading}, which also contains the results detailing their convergence. 
Finally, the analysis of fairness in the market and the associated impossibility result are discussed in \Cref{sec:stability-fairness}. 
The proofs of the main results are deferred to \Cref{app:proofs}. 
Lastly, \Cref{app:sec:additional-examples,app:markets-to-auctions} contain explanatory examples and additional material.

\subsection{Notation}
Vectors are written in bold. Let $\Omega$ be an arbitrary ground set. For any fixed $\omega \in \Omega$, we write $\eb^\omega \in \{0, 1\}^\Omega$ for the indicator vector with value $1$ for entry $\omega$, and value $0$ for all other entries. For any $\varepsilon > 0$, let $\varepsilon \Z = \{\varepsilon z \mid z \in \Z \}$, and let $\varepsilon \Z^\Omega = \{\varepsilon \zb \mid \zb \in \Z^{\vert \Omega\vert} \}$ be the discrete $\varepsilon$-lattice. For any two vectors $\vb$ and $\zb$, let $\vb\times\zb$ be their component-wise product.

\section{The Trading Network Market}\label{sec:network-trading-market}

We first introduce the market studied in \citep{hatfield2013stability}. A market $\market = (I, \Omega, v)$ consists of a finite set $I$ of agents, a finite set $\Omega$ of possible bilateral trades, and agent valuations $v = (v^i)_{i \in I}$. The trades $\Omega$ can represent any goods or services (such as a physical object, an insurance contract, or a spectrum license), and are typically heterogeneous. Each trade $\omega \in \Omega$ has a buyer $b(\omega) \in I$ and a distinct seller $s(\omega) \in I$. Sets of trades are also called \textit{bundles}. The set $\Omega$ can contain multiple trades with the same buyer and seller, and an agent can be the buyer for some trades and the seller for others. Hence, $\Omega$ can be understood as defining a directed multi-graph on vertices $I$ in which every directed edge represents a trade. In what follows, we denote by $\Delta$ the maximum vertex degree of the market's underlying network, that is, the maximum number of trades in which any individual agent is involved.

For any agent $i \in I$ and bundle $\Psi \subseteq \Omega$ of trades, $\Psi_i$ denotes the trades in $\Psi$ that involve agent $i$; and $\Psi_{i \rightarrow}$ and $\Psi_{i \leftarrow}$ respectively denote the agent's `selling' and `buying' trades in $\Psi_i$. For each agent $i$ and trade $\omega$, let $\chi^i_\omega=1$ if $i$ is the buyer of $\omega$, $\chi^i_{\omega}=-1$ if $i$ is the seller, and $\chi^i_{\omega}=0$ otherwise.

An agent's \textit{valuation function} $v^i$ maps every possible bundle $\Psi \subseteq \Omega_i$ of her trades to an integer value or $-\infty$, and the empty bundle $\emptyset$ to $0$.%
\footnote{We assume integrality of valuations throughout for convenience. Our results can be extended to any rational valuations by scaling. The value $0$ of $\emptyset$ is without loss of generality, because we can add constants to valuation functions without affecting quasi-linear demand.}
Values for certain bundles can be negative, for instance when they represent production costs for a seller or intermediary agent, and a value of $-\infty$ allows the agent to exclude technologically infeasible bundles.
The utility $u^i(\Psi, \pb)$ an agent has for bundle $\Psi$ at prices $\pb \in \R^{\Omega_i}$ is the agent's value for $\Psi$ plus the income from selling trades $\Psi_{i \rightarrow }$ minus the spending on buying trades $\Psi_{i \leftarrow}$:

\begin{equation}
\label{eq:utility}
u^i(\Psi, \pb) \coloneqq v^i(\Psi) - \sum_{\omega \in \Psi} \chi^i_\omega p_\omega.
\end{equation}

This gives rise to an agent's demand $D^i$ consisting of the bundle $\Psi \subseteq \Omega_i$ that maximizes her utility at prices $\pb$:
\begin{equation}
\label{eq:demand}
D^i(\pb) \coloneqq \argmax_{\Psi \subseteq \Omega_i} u^i(\Psi, \pb).
\end{equation}

For several of our results (but not all), we consider agents with fully-substitutable preferences. We will state explicitly when full substitutability is assumed. An agent's preference is \textit{fully substitutable} if reducing the prices of some buying trades (weakly) decreases her demand for the remaining buying trades and (weakly) increases her demand of selling trades; and, conversely, raising the prices of some selling trades (weakly) decreases her demand for the remaining selling trades and (weakly) increases her demand of buying trades.\footnote{Our definition of full substitutability is the one from Definition A.3 in \citet{Hatfield-2019}.} For agents with only buying trades, or only selling trades, this definition coincides with the standard definition of substitutes introduced by \citet{kelso1982job} in the context of labor markets, also prevalent in the auction literature \citep{Ausubel-2006,Sun-Yang-2006,baldwin2023solving}.

\begin{definition}
\label{definition:full-substitutability}
The demand of agent $i$ is \textit{fully-substitutable} if both

\begin{enumerate}[(i)]
\item
for all price vectors $p, p' \in \mathbb{R}^{\Omega}$ such that
$p_{\omega} = p'_{\omega}$ for all $\omega \in \Omega_{i\to}$ and
$p_{\omega} \ge p'_{\omega}$ for all $\omega \in \Omega_{i \gets}$,
for every $\Psi \in D_i(p)$ there exists $\Psi' \in D_i(p')$ such that we have 
$\{\omega \in \Psi'_{i \gets} : p_{\omega} = p'_{\omega}\} \subseteq \Psi_{i \gets}$ and $\Psi_{i\to} \subseteq \Psi'_{i\to}$;

\item
for all price vectors $p, p' \in \mathbb{R}^{\Omega}$ such that
$p_{\omega} = p'_{\omega}$ for all $\omega \in \Omega_{i \gets}$ and
$p_{\omega} \le p'_{\omega}$ for all $\omega \in \Omega_{i\to}$,
for every $\Psi \in D_i(p)$ there exists $\Psi' \in D_i(p')$ such that we have 
$\{\omega \in \Psi'_{i\to} : p_{\omega} = p'_{\omega}\} \subseteq \Psi_{i\to}$ and $\Psi_{i \gets} \subseteq \Psi'_{i \gets}$.
\end{enumerate}
\end{definition}

In many cases, it is useful to define a tie-breaking rule that consistently selects a single demanded bundle from $D^i(\pb)$. For this, assume an ordering $\Omega = \{\omega_1, \ldots, \omega_m \}$ on the trades. At any prices $\pb \in \R^\Omega$, define $d^i(\pb)$ as the lexicographically smallest bundle in $D^i(\pb)$ of maximum cardinality. In \cref{lem:tie_breaking_rule} in the appendix, we show that perturbing an agent's integral valuation function by $\sum_{\omega_j} \frac{\varepsilon}{4^j}$ results in a valuation at which $d^i(\pb)$ is uniquely demanded at any prices $\pb \in \varepsilon \Z^\Omega$ in the $\varepsilon$-lattice.

\subsection{Efficiency, Competitive Equilibrium, Core}\label{sec:efficiency-and-stability}

We now define social welfare, the notions of efficiency and the stability concepts of competitive equilibrium and the core. An \textit{allocation} $(\Phi)_{i \in I}$ consists of a bundle $\Phi_i \subseteq \Omega_i$ for each agent $i$. It is \textit{feasible} if every trade $\omega \in \Omega$ is either contained in the sets $\Phi_{s(\omega)}$ and $\Phi_{b(\omega)}$ of the trade's seller and buyer, or not contained in either. 
There is a straightforward one-to-one correspondence between all feasible allocations and the allocations represented by a single set of trades $\Phi \subseteq \Omega$ with $\Phi_i = \Phi \cap \Omega_i$ for each agent $i \in I$. As we will consider only feasible allocations, we represent an allocation by a single set of trades $\Phi$ from now on.

\begin{definition}\label{def:social-welfare-market-value-efficiency}
    The \textit{social welfare} of allocation $\Phi \subseteq \Omega$ is $\sum_{i \in I} v^i(\Phi_i)$. The \emph{market value} of a market $\market$ is the maximum social welfare achievable over all feasible allocations, i.e.,~$\max_{\Phi \subseteq \Omega} \sum_{i \in I} v^i(\Phi_i)$. An allocation $\Phi$ is \emph{efficient} if it attains the market value.
\end{definition}

A \textit{market arrangement} $(\pb, \Phi)$ consists of a set of prices $\pb \in \R^\Omega$ and a feasible allocation $\Phi$ of trades. We are particularly interested in arrangements that give each agent a demanded bundle.

\begin{definition}[Competitive Equilibrium]
    A market arrangement $(\pb, \Phi)$ is a \textit{competitive equilibrium} iff, for every agent $i$, we have $\Phi_i \in D^i(\pb)$.
\end{definition}

\citet{hatfield2013stability} show that the first and second welfare theorems hold: (1) The allocation of a competitive equilibrium is efficient, and (2) every efficient set of trades can be extended by prices to form a competitive equilibrium. Moreover, a competitive equilibrium is guaranteed to exist in the trading network market if all agents' preferences are fully substitutable. We provide an alternative, substantially shorter, proof of the existence of competitive equilibrium and the first and second welfare theorems in \cref{app:markets-to-auctions}.\footnote{Note that substitutability in trading markets allows a certain complementarity between selling and buying trades, which does not break full substitutability. Generally, the substitutes condition is required to ensure the existence of competitive equilibria, with a few exceptions in markets with complements (see, e.g.,~\citet{Sun-Yang-2006} and \citet{Finster-2025}.}

We also define \textit{market outcomes} $(\pb, \Phi)$, which differ from an arrangement by the fact that the vector of prices $\pb \in \R^\Phi$ is indexed only by the trades in $\Phi$, and so the prices of all other trades are not specified.%
\footnote{Our definition of market outcomes is similar to \citet{hatfield2013stability}. They, however, define market outcomes using ``contracts'', a notion that we dispense with for simplicity.}
We will rely on this distinction in our treatment of decentralized trading.
Letting $a(\Psi)$ be the set of agents involved in at least one of the trades $\Psi$, we define the core among market outcomes by analogy with \citet{hatfield2013stability}.

\begin{definition}[Core of Market Outcomes]\label{definition:core-of-market-outcomes}
    A market outcome $(\pb, \Phi)$ is in the core if there exists no \textit{blocking outcome} $(\qb, \Psi)$ that satisfies $u^i(\qb, \Psi_i) \geq u^i(\pb, \Phi_i)$ for all agents $i \in C$ in coalition $C = a(\Psi)$, and strict inequality holds for at least one agent in $C$.
\end{definition}

Intuitively, an outcome is in the core if there exists no coalition of agents that can improve by agreeing on a different outcome between them, with at least one agent improving strictly. Results by \citet{hatfield2013stability} imply that every competitive equilibrium can be turned into a core outcome by restricting prices to the active trades, but not every core outcome can be extended to a competitive equilibrium by specifying prices for the inactive trades.

\subsection{The Geometry of Substitutes}
\label{section:geometry-preferences}

Many proofs in our paper rely on a geometric characterization of demand with indivisible goods introduced in \citet{baldwin2019understanding}. We summarize the main ideas of this geometric approach here, and refer to \citet[Appendix C]{baldwin2024implementing} for a rigorous treatment.

First, we introduce an alternative vector notation for bundles, which allows us to immediately apply the geometric tools of \citet{baldwin2019understanding}. A bundle is represented by a vector $\xb \in \{-1,0,1\}^\Omega$, where $x_\omega = -1$ denotes a \textit{sale} of trade $\omega$, $1$ denotes a \textit{purchase}, and $0$ denotes neither. For any agent $i \in I$, the bundle $\Psi_i \subseteq \Omega_i$ is thus represented by the vector $\xb \in \R^\Omega$ defined as
\[
x_\omega =
\begin{cases}
1 & \text{if } \omega \in \Psi_{i \leftarrow}, \\
-1 & \text{if } \omega \in \Psi_{i \rightarrow}, \\
0 & \text{if } \omega \in \Omega \setminus \Psi_i.
\end{cases}
\]
With this notation, we see that full substitutability is equivalent to the standard substitutes condition from the literature. This was also noted in \citep{hatfield2015full}.
\begin{definition}[\citep{milgrom2009substitute}]
A valuation $v$ is \emph{substitutes} if, for any prices $\pb' \geq \pb$ and any $\xb\in D_v(\pb)$, there exists $\xb'\in D_v(\pb')$ such that $x'_k\geq x_k$ for all $k$ such that $p_k=p'_k$.
\end{definition}
The vector notation also allows us to conveniently describe the geometric approach of \citep{baldwin2019understanding}. For any valuation $v$ and the demand correspondence $D$ arising from it, we define the \textit{Locus of Indifference Prices (LIP)} $\LIP_v = \{\pb \mid |D_v(\pb)| \geq 2 \}$ as the prices at which two or more bundles are demanded. The LIP of any valuation can be decomposed into $(n-1)$-dimensional linear pieces we call \textit{facets}. The connected components of $\R^\Omega \setminus \LIP_v$ form open convex polytopes referred to as \textit{unique demand regions (UDRs)}, as the demand is unique and constant at all prices within a UDR.

A LIP $\LIP_v$ is extended to a \textit{weighted LIP} $(\LIP, w)$ by associating each facet $F$ of $\LIP_v$ with weight $w_v(F)$ given by the (positive) greatest common divisor of the coordinate entries of $\xb - \yb$, where $\xb$ and $\yb$ are the bundles demanded in the UDRs on either side of $F$. Thus, the change in demand as we cross a facet $F$ is given by $w_v(F) \nb$, with $\nb$ being the normal vector of $F$ expressed as a primitive integer vector that points in the opposite direction to the price change. In our network market setting, the facets' weights satisfy $w_v(F) \in \{0,1\}$, as every trade is demanded in one copy and has only one seller and one buyer.
The orientations of facets also allow us to characterize substitutes geometrically. We use this characterization heavily in proofs.
\begin{fact}[{\citep{baldwin2019understanding}}]
\label{fact:substitutes-normals}
A valuation $v$ is substitutes iff every facet of its LIP is normal to $\eb^\omega$ or $\eb^\omega - \eb^\chi$ for some $\omega, \chi \in \Omega$.
\end{fact}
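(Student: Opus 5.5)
The plan is to prove the two directions separately. In both, I use the basic local fact about the LIP. Let $\pb$ be a point in the relative interior of a facet $F$ that lies on no other facet. Then exactly two bundles $\xb,\yb$ are demanded at $\pb$, one on each side of $F$, and $\xb-\yb = \pm w_v(F)\,\nb$, where $\nb$ is the primitive normal of $F$. I will also use the law of demand. If $\xb\in D_v(\pb)$ and $\xb'\in D_v(\pb')$, then adding the two optimality inequalities gives $(\xb'-\xb)\cdot(\pb'-\pb)\le 0$.

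\emph{($\Rightarrow$)} Suppose $v$ is substitutes, and let $F$ be a facet with generic point $\pb$ and bundles $\xb,\yb$ as above. Write $\zb=\xb-\yb$. I first claim that $\zb$ has at most one positive and at most one negative entry. Suppose instead $z_a,z_b>0$ with $a\neq b$, so $y_a<x_a$ and $y_b<x_b$. Raise $p_a$ by a small $\delta>0$ to get $\pb'$. By the law of demand we move into the UDR on the side demanding $\yb$; a sufficiently small step leaves $\pb'$ off every other facet. Now $\xb\in D_v(\pb)$, the only bundle demanded at $\pb'$ is $\yb$, and $p_b=p'_b$, yet $y_b<x_b$. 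This contradicts substitutes. The case of two negative entries is symmetric: swap the roles of $\xb$ and $\yb$ and raise a price where $\yb$ has the larger entry.

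It remains to pin down the magnitudes. In general one has to rule out normals such as $2\eb^a-\eb^b$, which is where the Baldwin--Klemperer argument is more delicate. In the present setting, however, every bundle of an agent satisfies $x_\omega\in\{0,\chi^i_\omega\}$, so every entry of $\xb-\yb$ lies in $\{-1,0,1\}$. Hence $\zb$, and therefore $\nb$, is of the form $\pm\eb^\omega$ or $\pm(\eb^\omega-\eb^\chi)$. Since a normal vector is only determined up to sign, this is the claimed form.

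\emph{($\Leftarrow$)} Assume every facet normal is of the form $\eb^\omega$ or $\eb^\omega-\eb^\chi$. Take $\pb'\ge\pb$ and $\xb\in D_v(\pb)$. I first reduce to raising a single coordinate. This is legitimate because the set of coordinates with $p_k=p'_k$ only shrinks along a chain of single-coordinate increases, so the substitutes conclusion composes along the chain.

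For a single coordinate $a$, I follow the segment $\pb+t\eb^a$. I perturb it generically in the other coordinates so that it crosses facets only at generic points, one at a time, and I pass to the limit at the end using upper hemicontinuity of demand. At each crossing the demand changes by some $\Delta=\pm w\nb$ with $\Delta\neq 0$. Since the price change is a positive multiple of $\eb^a$, the law of demand forces $\Delta_a<0$. If $\nb\propto\eb^a$, no other coordinate changes. If $\nb\propto \eb^\omega-\eb^\chi$ with $a\in\{\omega,\chi\}$, then the other nonzero entry of $\Delta$ is positive. In either case every coordinate whose price is unchanged weakly increases, and chaining over the crossings gives $\xb'\in D_v(\pb')$ with $x'_k\ge x_k$ whenever $p_k=p'_k$.

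The main obstacle is the start and end of this path, where $\pb$ or $\pb'$ may lie on the LIP, possibly on several facets at once. I would handle it as follows. Because $\xb\in D_v(\pb)$, the bundle $\xb$ is the unique demand on some open UDR whose closure contains $\pb$. I therefore start the perturbed path inside that UDR, just next to $\pb$, and take as $\xb'$ the bundle of the UDR where it ends near $\pb'$; by upper hemicontinuity this bundle is demanded at $\pb'$. The perturbations must be chosen small enough that no crossings are created or lost. This is the step I expect to require the most care.
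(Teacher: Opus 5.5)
The paper gives no proof of this statement; it imports it from \citet{baldwin2019understanding}, where it is part of the general demand-type theory for multi-unit bundles. Your self-contained argument, specialised to the trading-network setting, is essentially correct.

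In the ($\Rightarrow$) direction you raise a price at which $\xb-\yb$ is positive, land uniquely in $\yb$'s UDR, and contradict substitutes at the second positive coordinate. In the ($\Leftarrow$) direction you chain single-coordinate increases, follow a generic line parallel to $\eb^a$, and read off the sign pattern of each demand jump from the law of demand. Both are sound. The strict inequality $\Delta_a<0$ relies on the fact that a generic line parallel to $\eb^a$ never meets a facet whose normal is orthogonal to $\eb^a$; this is implicit in your genericity assumption.

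The citation buys generality and the polyhedral bookkeeping done once and for all: the structure at generic facet points, and demand changes along generic paths. Your route makes explicit why exactly these normals appear. In fact your use of $x_\omega\in\{0,\chi^i_\omega\}$ is necessary, not merely a shortcut, and your aside is slightly off. For multi-unit bundles, substitutes only forces at most one positive and at most one negative entry, so normals such as $2\eb^a-\eb^b$ are compatible with substitutes rather than ruled out. The list $\{\eb^\omega,\eb^\omega-\eb^\chi\}$ is the strong-substitutes type, and it coincides with substitutes here only because each trade comes in a single unit.

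Two small repairs in ($\Leftarrow$) are needed.

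\emph{First}, you assert that every $\xb\in D_v(\pb)$ is the unique demand on an open UDR whose closure contains $\pb$. This needs justification, because it fails for general multi-unit, non-concave valuations, where a bundle can be demanded only on the LIP. Here it holds because $\xb$ is a vertex of the box $\prod_{\omega\in\Omega_i}[\min(0,\chi^i_\omega),\max(0,\chi^i_\omega)]$ containing all bundles. Hence $\xb$ is an exposed point of $\mathrm{conv}\,D_v(\pb)$. Take $\qb$ with $\qb\cdot\xb<\qb\cdot\yb$ for all other $\yb\in D_v(\pb)$. Then $\xb$ is uniquely demanded on an open neighbourhood of $\pb+t\qb$ for small $t>0$.

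\emph{Second}, you do not need the perturbation to leave the original segment's crossings unchanged. That segment may pass through lower-dimensional faces, or run inside a facet whose normal is orthogonal to $\eb^a$, so there is nothing well defined to preserve. It suffices to pick a generic $\eta$ such that $\pb+\eta$ lies in the UDR of $\xb$ and $\eta$ is small enough that $D_v(\pb'+\eta)\subseteq D_v(\pb')$. The sign argument is applied only along the perturbed line, and upper hemicontinuity finishes the proof.
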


\cref{example:counterexample} illustrates how valuations translate into a geometric representation of individual and aggregate demand correspondences, highlighting especially the orientation of facets. The LIPs are represented by the black lines. Note that all such lines are horizontal, vertical, or have slope $1$; the conditions of \cref{fact:substitutes-normals} in two dimensions.

\begin{example}[label=counterexample]\label{example:counterexample}
Consider the following market with two agents, $I = \{1,2\}$, and two trades, $\Omega = \{\omega, \chi\}$. The trading network and the agents' valuations are given below. Agent 1's preferences are purely additive: the bundle's value for the set of trades $\omega$ and $\chi$ is simply the sum of individual trades. Note that agent 2's valuation is fully substitutable. In fact, the substitutes property is salient: being a trader of both trades gives the agent no higher value than being only a buyer of $\omega$. However, since the price of $\chi$ is received not paid, at high prices for $\chi$ and low prices of $\omega$, relatively speaking, she prefers to engage in both trades (see the upper left region in \cref{fig:demand-agent2}).
The shaded blue area in \cref{fig:aggregate-demand} marks the set of competitive equilibrium prices. At those prices, the individually demanded trades sum to the aggregate demand of $(0,0)$, and therefore the market~clears.
\end{example}

\begin{figure}[t]
\centering

\begin{subfigure}[t]{0.49\textwidth}
    \centering
    \begin{tikzpicture}[xscale=3.5, yscale=1.8]
        \node[agent] (s) at (0,0) {$1$};
        \node[agent] (b) at (1,0) {$2$};
        \draw[trade] (s) to[bend left] node[midway,fill=white] {$\omega$} (b);
        \draw[trade] (b) to[bend left] node[midway,fill=white] {$\chi$} (s);
    \end{tikzpicture}
    \caption{Trading network}
    \label{fig:two-agent-network}
\end{subfigure}\hfill
\begin{subfigure}[c]{0.49\textwidth}
    \centering
    \begin{tabular}{cccc}
        \toprule
        & $\{\omega\}$ & $\{\chi\}$ & $\{\omega,\chi\}$ \\
        \midrule
        Agent 1 & $-1$ & $1$ & $0$ \\
        Agent 2 & $1$ & $-1$ & $1$ \\
        \bottomrule
    \end{tabular}
    \caption{Valuations}
    \label{fig:two-agent-valuations}
\end{subfigure}

\vspace{0.6em}

\begin{subfigure}[t]{0.32\textwidth}
    \centering
    \includegraphics[width=\textwidth]{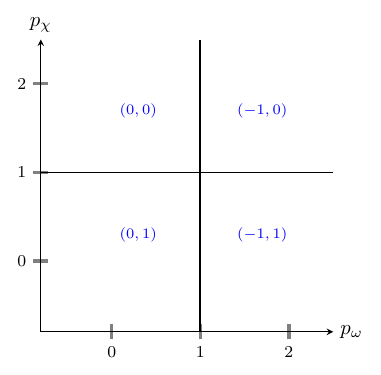}
    \caption{Demand of agent 1}
    \label{fig:demand-agent1}
\end{subfigure}\hfill
\begin{subfigure}[t]{0.32\textwidth}
    \centering
    \includegraphics[width=\textwidth]{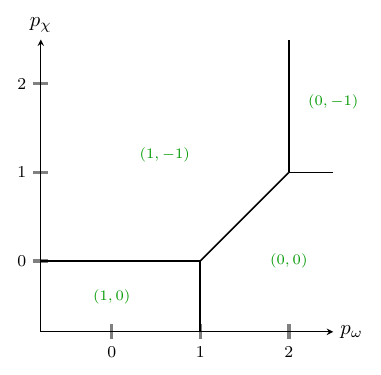}
    \caption{Demand of agent 2}
    \label{fig:demand-agent2}
\end{subfigure}\hfill
\begin{subfigure}[t]{0.32\textwidth}
    \centering
    \includegraphics[width=\textwidth]{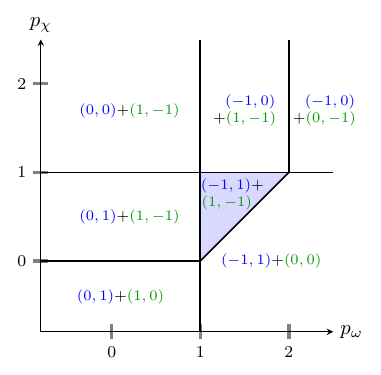}
    \caption{Aggregate demand}
    \label{fig:aggregate-demand}
\end{subfigure}

\caption{A two-agent trading network, associated valuations, and resulting demands.}
\label{fig:network-valuations-demand}
\end{figure}

\section{The Trading Network Game}\label{sec:the-network-trading-game}
We introduce the trading network game, a strategic game played on a trading network market in which each agent submits offers for her buying and selling trades. Our central equilibrium notion is that of an \textit{$\varepsilon$-tight Nash equilibrium}, in which the buying and selling offers on every trade are within $\varepsilon$ of each other. We show that, for sufficiently small $\varepsilon$, every $\varepsilon$-tight Nash equilibrium outcome can be extended to a competitive equilibrium (\cref{theorem:NE-to-CE}). Building on this correspondence, we then analyze decentralized trading dynamics and show that they converge to $\varepsilon$-tight Nash equilibria---and hence, by \cref{theorem:NE-to-CE}, to competitive equilibria---or to competitive equilibria directly.

The \textit{trading network game} is played on a trading network market. Each agent $i$ makes an offer $\offer^i_{\omega}$ for each of its trades $\omega \in \Omega_i$. Hence, each agent's action space is $\R^{\Omega_i}$, and we denote an agent's pure strategy by the vector $\offers^i$. We are interested only in pure strategies. We write $\offers^{-i}$ for the vector of offers that agent $i$ observes from her trading counterparts.

A trade is \textit{active} if its buyer's offer is equal to the seller's offer, and \textit{inactive} otherwise. Trading prices $\pb \in \R^\Phi$ for active trades $\Phi$ are given by $p_\omega = \offer^{b(\omega)}_\omega = \offer^{s(\omega)}_\omega$. These prices $\pb$, together with the active trades $\Phi$, thus lead to the \textit{market outcome $(\pb, \Phi)$ of $\offers$}. The utility of agent $i$ from $\offers$ is given by
\[
u^i(\offers) \coloneqq u^i(\Phi, \pb),
\]
where $(\pb, \Phi)$ is the outcome of $\offers$. This allows us to define Nash equilibria in our game.

\begin{definition}
    A \textit{Nash equilibrium} in the trading network game consists of offers $\offers$ such that no agent can profitably deviate unilaterally by updating her offers.
    A Nash Equilibrium $\offers$ is \textit{$\varepsilon$-tight} if $|\offer^{b(\omega)}_\omega - \offer^{s(\omega)}_\omega| \leq \varepsilon$ for every trade $\omega \in \Omega$.
\end{definition}

\begin{observation}
    In any Nash equilibrium $\offers$, we have $\offer^{b(\omega)}_\omega \leq \offer^{s(\omega)}_{\omega}$ for every trade $\omega$, and $\offer^{b(\omega)}_{\omega} = \offer^{s(\omega)}_{\omega}$ if the trade is active.
\end{observation}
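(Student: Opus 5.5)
The equality part is immediate from the definitions: a trade $\omega$ is active exactly when $\offer^{b(\omega)}_\omega = \offer^{s(\omega)}_\omega$. So the whole content is the inequality for \emph{inactive} trades. I will argue it by contradiction: fix a Nash equilibrium $\offers$ with outcome $(\pb,\Phi)$ and an inactive trade $\omega$ with buyer $b=b(\omega)$ and seller $s=s(\omega)$, and suppose $\offer^b_\omega > \offer^s_\omega$. The aim is a strictly profitable unilateral deviation for $b$ or for $s$.

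\textbf{Step 1: the two deviations.} Each of $b$ and $s$ has a one-coordinate deviation that activates $\omega$ and leaves every other trade's status unchanged.
\begin{itemize}
\item The buyer can lower her offer to $\offer^s_\omega$, which yields $\Phi_b\cup\{\omega\}$ at price $\offer^s_\omega$.
\item The seller can raise his offer to $\offer^b_\omega$, which yields $\Phi_s\cup\{\omega\}$ at price $\offer^b_\omega$.
\end{itemize}
By \eqref{eq:utility}, the gains are
\[
\Delta_b = v^b(\Phi_b\cup\{\omega\})-v^b(\Phi_b)-\offer^s_\omega,
\qquad
\Delta_s = v^s(\Phi_s\cup\{\omega\})-v^s(\Phi_s)+\offer^b_\omega .
\]

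\textbf{Step 2: sum the gains.} This gives
\[
\Delta_b+\Delta_s=\bigl(\offer^b_\omega-\offer^s_\omega\bigr)+M,
\]
where $M$ is the change in joint value of $b$ and $s$ from adding $\omega$. The first term is strictly positive by assumption. If $M\ge 0$, then $\Delta_b>0$ or $\Delta_s>0$, so some agent deviates profitably, contradicting Nash equilibrium.

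\textbf{Step 3 (main obstacle): the case $M<0$.} This includes $M=-\infty$, where adding $\omega$ is infeasible for one of the agents. Here the crossing gap alone yields no profitable deviation, and I do not see how to close this case from the stated definitions.
\begin{itemize}
\item First, I would try an offer on a different trade of $b$ or $s$. For example, the buyer could drop a conflicting active trade at the same time. But that is a different agent's trade, and changing its offer alters $\Phi$ elsewhere, so it is not obviously available or profitable.
\item Failing that, I would check whether the intended game convention makes an offer pair with $\offer^b_\omega>\offer^s_\omega$ ill-posed. For instance, crossing offers might be treated as active, which the paper's wording ``active iff equal'' would need to permit.
\end{itemize}
If neither route works, the inequality cannot be derived for such trades: agents with $-\infty$ marginal value for $\omega$ seem able to post crossing offers in equilibrium. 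I would then flag that the observation requires an additional convention excluding this case.
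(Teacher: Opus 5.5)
Your handling of the equality clause is right (it is just the definition of an active trade), and Step~2 is correct. The case it actually leaves open is $M \le -(\offer^b_\omega-\offer^s_\omega)$, not all of $M<0$. The case you cannot close is not a gap in technique. Under the paper's convention that a trade is active only when the two offers coincide, the inequality part of the observation is false, and the paper states it without proof.

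The clean way to see this is to characterize equilibria directly. Agent $i$'s utility depends on her own offers only through which of her trades are active. By matching or not matching, she can realize any $\Psi\subseteq\Omega_i$ at the counter-offers $\offers^{-i}$. Hence $\offers$ is a Nash equilibrium iff $\Phi_i\in D^i(\offers^{-i})$ for every agent $i$. This is also why both routes in Step~3 fail:
\begin{itemize}
\item $b$'s other trades lie in $\Omega_b$, so changing her offers on them is within her action space. The issue is not availability: no such multi-coordinate deviation can beat a demanded bundle.
\item Treating crossing offers as active would contradict the paper's explicit ``active iff equal'' definition.
\end{itemize}
For an inactive $\omega$, the equilibrium condition only says that $b$ rejects $\omega$ at price $\offer^s_\omega$ and $s$ rejects it at price $\offer^b_\omega$. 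This says nothing about the order of the two offers.

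Concretely, take a single trade $\omega$ from $s$ to $b$ with $v^b(\{\omega\})=0$ and $v^s(\{\omega\})=-1$. Fix any $\varepsilon\in(0,1)$ and set $\offer^s_\omega=\tfrac12-\tfrac{\varepsilon}{2}<\offer^b_\omega=\tfrac12+\tfrac{\varepsilon}{2}$. Matching would give either agent utility $-\tfrac12+\tfrac{\varepsilon}{2}<0$. So this is an $\varepsilon$-tight Nash equilibrium with crossing offers, with finite valuations and trivially fully substitutable agents. The phenomenon is therefore not confined to $-\infty$ marginal values.

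You should finish with this counterexample rather than leaving Step~3 open, and then state what is true: the equality clause, plus the inequality \emph{after a harmless normalization}. Given any Nash equilibrium, replace each crossing buyer offer $\offer^b_\omega>\offer^s_\omega$ on an inactive trade by $2\offer^s_\omega-\offer^b_\omega$. Then:
\begin{itemize}
\item the trade stays inactive and the buyer's counter-offers are unchanged;
\item the seller now faces a worse price on a trade outside $\Phi_s$, so $\Phi_s$ remains demanded.
\end{itemize}
The result is a Nash equilibrium with the same outcome and the same gaps $|\offer^b_\omega-\offer^s_\omega|$, hence still $\varepsilon$-tight. This normalized form is all that is used later, e.g.\ in the reduction at the start of the proof of \cref{proposition:empty-market-NE-CE}.
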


It is immediate that every game admits at least one Nash equilibrium. If all buying offers on all trades are very small (e.g.,~$-\infty$), and the selling offers are very high (e.g.,~$\infty$), for instance, then no trade is active and no agent can improve by updating her offer. This example also demonstrates that the outcomes of Nash equilibria need not lie in the core. Conversely, \cref{example:core-not-NE} in \cref{app:sec:additional-examples} demonstrates that not every core outcome can be implemented by a Nash equilibrium.

The ``bad'' equilibria with low buying and high selling offers are clearly not $\varepsilon$-tight for sufficiently small $\varepsilon > 0$. As we are interested in Nash equilibria that achieve efficient outcomes, we exclude such degenerate settings and focus on $\varepsilon$-tight Nash equilibria with small values for $\varepsilon$.

To guarantee existence of $\varepsilon$-tight Nash equilibria, we must restrict the preference domain. For instance, as we show, fully substitutable preferences are sufficient for the existence. In contrast, \cref{example:two-agents-two-opposite-trades} in \cref{app:sec:additional-examples} demonstrates non-existence of Nash equilibrium in a market with one substitutes agent and one complements agent if $\varepsilon$ is chosen sufficiently small.

More generally, if a market admits a competitive equilibrium, it is straightforward to construct an $\varepsilon$-tight Nash equilibrium for any $\varepsilon > 0$ from it. This holds even if not all agents are substitutes: For each active trade, set both the buying and selling offer to the price; for each inactive trade, subtract $\frac{\varepsilon}{2}$ from the price for the buying offer, and add this term to the selling offer. Intuitively, an agent unwilling to buy a trade $\omega$ at $p_\omega$ will also be unwilling to buy at $p_\omega + \frac{\varepsilon}{2}$, and similarly for a sale. The full proof is deferred to \Cref{proof:proposition:substitutes-implies-NE}.

\begin{proposition}
\label{proposition:substitutes-implies-NE}
Fix an arbitrary $\varepsilon > 0$. If a market admits a competitive equilibrium, then there exists an $\varepsilon$-tight Nash equilibrium. In particular, every market with fully substitutable preferences admits an $\varepsilon$-tight Nash equilibrium.
\end{proposition}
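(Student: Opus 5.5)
We take a competitive equilibrium $(\pb, \Phi)$ and build offers $\offers$ from it as the text sketches. For every active trade $\omega \in \Phi$, set $\offer^{b(\omega)}_\omega = \offer^{s(\omega)}_\omega = p_\omega$. For every inactive trade $\omega \notin \Phi$, set $\offer^{b(\omega)}_\omega = p_\omega - \frac{\varepsilon}{2}$ and $\offer^{s(\omega)}_\omega = p_\omega + \frac{\varepsilon}{2}$. Then $\varepsilon$-tightness holds by construction. The active trades are exactly $\Phi$, because the offers on inactive trades differ by $\varepsilon > 0$. The resulting market outcome is $(\pb|_\Phi, \Phi)$, and agent $i$ obtains utility $u^i(\Phi_i, \pb)$.

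The main step is to show that no agent $i$ has a profitable unilateral deviation. Fix $i$ and any alternative offer vector $\offers'^i$, and let $\Psi$ be the resulting set of active trades among $\Omega_i$, with prices $\pb'$. A trade $\omega \in \Omega_i$ can be active after the deviation only if $i$ matches the counterpart's fixed offer. Hence $p'_\omega$ equals the counterpart's offer.
\begin{itemize}
\item If $\omega \in \Phi$, then $p'_\omega = p_\omega$.
\item If $\omega \notin \Phi$ and $i$ is the buyer, then $i$ pays $p'_\omega = p_\omega + \frac{\varepsilon}{2} \ge p_\omega$.
\item If $\omega \notin \Phi$ and $i$ is the seller, then $i$ receives $p'_\omega = p_\omega - \frac{\varepsilon}{2} \le p_\omega$.
\end{itemize}
In every case $-\chi^i_\omega p'_\omega \le -\chi^i_\omega p_\omega$. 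The trades in $\Omega \setminus \Omega_i$ are unaffected, so agent $i$'s new utility is $u^i(\Psi, \pb') \le u^i(\Psi, \pb) \le u^i(\Phi_i, \pb)$. The last inequality holds because $\Phi_i \in D^i(\pb)$. So the deviation is not profitable.

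One subtlety needs care: whether a deviating agent can create or destroy activity only through its own offers. This holds because activity is determined per trade by both endpoints' offers, and the counterpart's offers are fixed. In particular, the agent can drop any trade in $\Phi_i$ by changing its offer, so every $\Psi \subseteq \Omega_i$ with the prices above is reachable, and no other outcome is. The argument never uses substitutability, so it applies to any market admitting a competitive equilibrium. This is the step I would check most carefully, but it appears routine.

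For the ``in particular'' clause, invoke the existence result of \citet{hatfield2013stability}: markets with fully substitutable preferences admit a competitive equilibrium. The paper also reproves this in \cref{app:markets-to-auctions}. Combining it with the construction above yields an $\varepsilon$-tight Nash equilibrium.
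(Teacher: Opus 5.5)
Your proof is correct and follows the paper's argument. It uses the same construction (both offers equal to $p_\omega$ on active trades, and $p_\omega \mp \frac{\varepsilon}{2}$ on inactive ones), the same observation that a deviating agent only faces counterpart offers weakly worse than $\pb$ (so $\Phi_i \in D^i(\pb)$ rules out profitable deviations), and the same appeal to \citet{hatfield2013stability} for the fully substitutable case; your trade-by-trade bound $-\chi^i_\omega p'_\omega \le -\chi^i_\omega p_\omega$ just spells out the step the paper states in one sentence.
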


\subsection{From Nash Equilibria to Competitive Equilibria}\label{sec:decentral-trading-impl-comp-equilibria}

\Cref{proposition:substitutes-implies-NE} demonstrates a strong connection between competitive and Nash equilibria, as Nash equilibria can be constructed from competitive equilibria. We now explore the converse, by understanding when a competitive equilibrium can be constructed from a tight Nash equilibrium in a way that preserves the trades which are active.

Recall that $\Delta$ is the maximum vertex degree in the network of the given market. We also say that a market arrangement is an \textit{$\varepsilon$-approximate competitive equilibrium} if every agent receives a bundle at $\pb$ that is within $\varepsilon$ of maximizing utility.  

\begin{definition}
\label{definition:competitive-equilibrium}
A market arrangement $(\pb, \Phi)$ is an \textit{$\varepsilon$-approximate competitive equilibrium} if
\[
u^i(\Phi_i, \pb) \geq u^i(\Psi, \pb) - \varepsilon, \forall \Psi \subseteq \Omega_i,\ \text{ for all } i \in I.
\]
\end{definition}
The following approximation result is immediate from the definition of demand and quasilinearity.
\begin{proposition}\label{prop:tight-NE-is-approx-CE}
Let $\Delta$ be the maximal vertex degree of any agent in the market network. If $\offers$ is an $\varepsilon$-tight Nash equilibrium with outcome $(\pb, \Phi)$, then we can construct an $\varepsilon\Delta$-approximate competitive equilibrium with price $p_\omega$ for each active trade $\omega \in \Phi$.
\end{proposition}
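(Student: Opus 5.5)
The construction is to keep the Nash prices on active trades and to assign each inactive trade a price taken from one of its two offers. For an active trade $\omega \in \Phi$, set $p_\omega = \offer^{b(\omega)}_\omega = \offer^{s(\omega)}_\omega$. For an inactive trade, the Observation gives $\offer^{b(\omega)}_\omega < \offer^{s(\omega)}_\omega$, and $\varepsilon$-tightness gives $\offer^{s(\omega)}_\omega - \offer^{b(\omega)}_\omega \leq \varepsilon$. There we set $p_\omega = \offer^{s(\omega)}_\omega$; the choice $\offer^{b(\omega)}_\omega$ would work equally well. With this choice, for every agent $i$ and every trade $\omega \in \Omega_i$, the price $p_\omega$ lies within $\varepsilon$ of agent $i$'s counterpart's offer. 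The aim is to compare agent $i$'s utility at $\pb$ with the utility she could secure by a unilateral deviation in the game.

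Fix an agent $i$ and any bundle $\Psi \subseteq \Omega_i$. The first step is to show that $i$ can deviate so that exactly $\Psi$ becomes active, by matching her offer to the counterpart's offer on each trade in $\Psi$. For each trade in $\Omega_i \setminus \Psi$ she offers something that differs from the counterpart's offer, for instance by withdrawing her offer by $\varepsilon$ in the unfavourable direction. The resulting outcome gives her bundle $\Psi$ at the counterparts' prices $\offer^{-i}_\omega$. Since $\offers$ is a Nash equilibrium,
\[
u^i(\Phi_i,\pb) \;\geq\; v^i(\Psi) - \sum_{\omega\in\Psi}\chi^i_\omega \offer^{-i}_\omega .
\]
On $\Phi_i$ the Nash prices coincide with $\pb$, so the left-hand side is indeed $u^i(\Phi_i,\pb)$.

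The second step compares the counterpart prices with $\pb$. For every $\omega \in \Psi$ we have $|\offer^{-i}_\omega - p_\omega| \leq \varepsilon$, and since $|\Psi| \leq |\Omega_i| \leq \Delta$,
\[
v^i(\Psi) - \sum_{\omega\in\Psi}\chi^i_\omega \offer^{-i}_\omega \;\geq\; u^i(\Psi,\pb) - \varepsilon\Delta .
\]
Combining the two displays yields $u^i(\Phi_i,\pb) \geq u^i(\Psi,\pb) - \varepsilon\Delta$ for all $\Psi$ and all $i$. This is exactly the definition of an $\varepsilon\Delta$-approximate competitive equilibrium, and the active prices are the $p_\omega$ required by the statement.

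The main subtlety is the deviation in the first step: we must check that the deviation really produces outcome $\Psi$ for agent $i$, and in particular that her payoff depends only on trades involving her. This holds because activity of a trade is determined by the two offers on it alone, so $i$ controls precisely which of her own trades become active. A minor point is the case of infinite counterpart offers, where the counterpart price is infinite. If $\varepsilon$-tightness is taken to exclude infinite offers, this case does not arise. Otherwise, buying such a trade at $+\infty$ or selling it at $-\infty$ yields utility $-\infty$, so the inequality holds trivially for those $\Psi$.
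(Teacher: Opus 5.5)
Your proof is correct and follows essentially the paper's argument: compare each agent's utility at $\pb$ with her utility at the counterpart offers $\offers^{-i}$, and use the Nash condition to compare $\Phi_i$ against an arbitrary $\Psi$. The differences are small: you fix the inactive-trade prices explicitly (at the seller's offer, so all of the $\varepsilon\Delta$ slack falls on the $\Psi$ side) and you justify the deviation step. The paper leaves the inactive prices implicit, effectively within $\varepsilon/2$ of both offers, and splits the slack as $\tfrac12\Delta\varepsilon$ on each side.
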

The proof of this proposition is in \Cref{proof:prop:tight-NE-is-approx-CE}. When $\varepsilon > 0$ is sufficiently small, we can improve on this result by constructing an exact competitive equilibrium from any $\varepsilon$-tight Nash equilibrium.

\begin{theorem}\label{theorem:NE-to-CE}
Fix a market. For any strictly positive $\varepsilon < \frac{1}{2\Delta - 2}$, the outcome of any $\varepsilon$-tight Nash equilibrium $\offers$ can be extended to an integral competitive equilibrium in polynomial time.
\end{theorem}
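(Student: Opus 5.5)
Our plan is to turn the Nash equilibrium conditions into a system of difference constraints on prices, and then solve that system over the integers. Let $(\pb, \Phi)$ be the outcome of the $\varepsilon$-tight Nash equilibrium $\offers$. We seek integral prices $\qb \in \Z^\Omega$ with $\Phi_i \in D^i(\qb)$ for every agent $i$. Since the allocation $\Phi$ is fixed, the condition $\Phi_i \in D^i(\qb)$ is a finite family of linear inequalities in $\qb$:
\[
v^i(\Phi_i) - \sum_{\omega \in \Phi_i} \chi^i_\omega q_\omega \;\ge\; v^i(\Psi) - \sum_{\omega \in \Psi} \chi^i_\omega q_\omega \qquad \text{for all } \Psi \subseteq \Omega_i .
\]
We intend to use the Nash conditions to show that a suitable fractional price vector satisfies all of these inequalities up to an error below $1$. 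Integrality of the valuations will then let us round that vector to an exact integral solution.

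\emph{Step 1: a candidate price vector.} For active trades we set $\tilde p_\omega = p_\omega$. For an inactive trade $\omega$, the Observation together with $\varepsilon$-tightness gives $\offer^{b(\omega)}_\omega \le \offer^{s(\omega)}_\omega \le \offer^{b(\omega)}_\omega + \varepsilon$, and we take $\tilde p_\omega$ anywhere in this interval. Next we apply the Nash condition to agent $i$. A deviation to a bundle $\Psi$ is implemented as follows: keep the offers on trades in $\Psi \cap \Phi_i$, match the counterpart's offer on trades in $\Psi \setminus \Phi_i$, and move the offers on trades in $\Phi_i \setminus \Psi$ away from the counterpart's offer. This deviation yields exactly $\Psi$. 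Since it is unprofitable,
\[
u^i(\Phi_i, \tilde\pb) \ge u^i(\Psi, \tilde\pb) - \varepsilon \,\lvert \Psi \setminus \Phi_i \rvert .
\]
This recovers \cref{prop:tight-NE-is-approx-CE}. The key observation is that the error comes only from newly added trades, and only through the gap between the offers on the two sides.

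\emph{Step 2: sharpening the error to below $1$.} A crude bound gives error $\varepsilon\Delta$, which is not quite enough. To obtain $\varepsilon(\Delta-1) < \tfrac12$ we plan a case split. If $\Phi_i \ne \emptyset$ or $\Psi \cap \Phi_i \ne \emptyset$, then $\lvert \Psi \setminus \Phi_i \rvert \le \Delta - 1$ unless $\Psi \supsetneq$-swaps everything. In the remaining extreme case, $\Psi$ consists of all $\Delta$ trades, all of them new, so $\Phi_i = \emptyset$. There we use the freedom in choosing $\tilde p_\omega$ on inactive trades: putting it at the agent-favourable endpoint (the buyer's offer for the buyer's trades) removes the error on that agent's side. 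This is awkward because each inactive trade has two sides. The intended resolution is to charge the $\varepsilon$ gap on one designated endpoint. We would instead move to a \emph{shifted} comparison: compare $\Psi$ with $\Psi$ minus one trade, using the fact that the losses are half-integers. This is where the $2$ in $2\Delta-2$ enters. For every agent and bundle we then get violations strictly less than $\tfrac12 \cdot 2 = 1$, or, after symmetric splitting of each gap, less than $\tfrac12$ on each side.

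\emph{Step 3: rounding.} Each inequality has the form $\sum_\omega c_\omega q_\omega \le$ an integer with $c \in \{-1,0,1\}^\Omega$. The constraint matrix consists of the incidence rows of substitutes-type vectors, which, after applying \cref{fact:substitutes-normals} to all agents, are totally unimodular in the network setting. Equivalently, the demand is of unimodular type. A relaxed system with right-hand side $b + \delta$, where $\lVert\delta\rVert_\infty < 1$, is feasible (witnessed by $\tilde\pb$). We then want to show that the integer system with right-hand side $b$ is also feasible. Our plan is to round $\tilde\pb$ along a generic direction, $q_\omega = \lfloor \tilde p_\omega + \theta \rfloor$ with a common random shift $\theta$, in the style of rounding for unimodular or network LPs. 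Because the relevant constraints involve differences of at most two price coordinates under the substitutes structure, the rounding preserves any inequality that was violated by less than $1$. Alternatively, we can solve the integral LP directly by total unimodularity, which gives polynomial time.

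The main obstacle is Step 2, namely obtaining a uniform slack below $1$ with the exact constant $2\Delta-2$. It also requires making sure that the rounding in Step 3 only needs the difference-constraint structure. We would first try proving Step 3 as an integral potential/shortest-path feasibility problem on the trades' price lattice.
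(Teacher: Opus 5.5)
\paragraph{Where the argument breaks.} The gap is in Step~3, and it is not a technicality. Step~2 is a red herring. Step~1 already bounds every violation by $\varepsilon\lvert\Psi\setminus\Phi_i\rvert\le\varepsilon\Delta<\Delta/(2\Delta-2)\le 1$. The appeal to ``half-integer losses'' also has no basis, since offers are arbitrary reals.

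In Step~3 you claim that a system of difference constraints $A\qb\ge\bb$ with integral $\bb$ has an integral solution as soon as some real point violates each row by less than $1$. This is false. Adding the rows along a cycle of the constraint graph gives $0\ge\sum\bb$, so per-row violations accumulate. For example, $q_1-q_2\ge 0$, $q_2-q_3\ge 0$, $q_3-q_1\ge 1$ is infeasible. Yet the point with $q_1-q_2=q_2-q_3=-\tfrac13$ violates each row by only $\tfrac13$. Correspondingly, common-shift flooring preserves rows that are already satisfied, but a row violated by $\delta\in(0,1)$ ends up violated by a full unit with probability $\delta$.

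Separately, your rows for arbitrary $\Psi$ have up to $\Delta$ nonzero entries. You must first restrict to the facet rows of each agent's demand region. These are of difference type by the local-optimality characterisation of substitutes demand.

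\paragraph{The statement as written fails.} No constant depending only on $\Delta$ can repair this. Take a chain $A_0\to A_1\to A_2\to A_3$ with trades $\omega_1,\omega_2,\omega_3$. Set $v^{A_0}(\{\omega_1\})=0$ and $v^{A_3}(\{\omega_3\})=1$. The intermediaries $A_1,A_2$ value buying-and-reselling at $0$ and either single trade at $-10$. All valuations are fully substitutable, and $\Delta=2$. Use (buyer, seller) offers $(-0.05,0.35)$, $(0.3,0.7)$, $(0.65,1.05)$ on $\omega_1,\omega_2,\omega_3$. Every agent's best deviation yields $-0.05$, so this is a $0.4$-tight Nash equilibrium with no active trade, and $0.4<\frac{1}{2\Delta-2}$. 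But trading along the chain has welfare $1$. Moreover, all agents demand $\emptyset$ only if $1\le p_{\omega_3}\le p_{\omega_2}\le p_{\omega_1}\le 0$, so no extension to a competitive equilibrium exists.

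The paper's proof avoids per-row slack through a structured shift. It shifts each agent's $\emptyset$-region so that all of them contain the vector of seller offers; for this the shift must be $\varepsilon\eb^{\Omega_{i\to}}$, not $\varepsilon\eb^{\Omega_{i\leftarrow}}$. It then takes a vertex $\vb=\widehat A^{-1}(\widehat\bb+\varepsilon\widehat\cb)$ and compares it with the integral point $\widehat A^{-1}\widehat\bb$. It fails at the same place, however. The asserted bound $(\widehat A^{-1}\widehat\cb)_\omega\in[1-\Delta,\Delta]$ does not hold, because $\widehat A^{-1}\widehat\cb$ propagates the shifts along paths. In the chain it equals $(1,2,3)$ at the vertex $\vb=(0.4,0.8,1.2)$.

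\paragraph{What your approach does prove.} Your closing shortest-path idea proves a corrected statement. Take $\tilde\pb$ equal to $p_\omega$ on active trades and to the seller's offer on inactive ones. Then $\tilde\pb$ exceeds the prices $\offers^{-i}$ faced by $i$ by at most $\varepsilon$, and only on $i$'s inactive selling trades. So each facet row of $i$ is violated by at most $\varepsilon$, and only when its $-1$ entry lies on an inactive trade. Each node of a simple cycle is the $-1$ endpoint of exactly one of its rows. Hence the integral right-hand sides along any cycle sum to at most $\varepsilon\lvert\Omega\setminus\Phi\rvert$. If $\varepsilon\lvert\Omega\setminus\Phi\rvert<1$ there is no positive cycle, and Bellman--Ford returns integral prices supporting $\Phi$ in polynomial time. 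The chain shows that a path-length bound of this kind is unavoidable.

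This construction moves the prices of active trades, and keeping a fractional active price can be impossible. Take a unit-demand buyer and two zero-cost sellers, with one trade active at $0.01$ and offers $(0,0.1)$ on the other. The only supporting price of the active trade is $0$.
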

\begin{proof}[Proof sketch.]
(The complete proof can be found in \Cref{proof:theorem:NE-to-CE}.)

Let $\offers$ be an $\varepsilon$-tight Nash equilibrium, with $\varepsilon < \frac{1}{2\Delta -2}$. We demonstrate that we can extend the outcome of the Nash equilibrium to a competitive equilibrium by finding integral prices for the inactive trades.
The first part of our proof proceeds along an analogous argument to the proof of Theorem 6 in \citep{hatfield2013stability}. First we reduce to a sub-market consisting only of the inactive trades, by removing the active trades and modifying each agent's valuation to reflect that the active trades are always available at their given prices. Specifically, for any agent $i$ and any subset $\Psi$ of inactive trades, the modified valuation captures the maximum utility the agent can achieve by combining trades from $\Psi$ with the active trades. This modified valuation inherits full substitutability from the original preferences.

The key step is then to show that this sub-market admits a competitive equilibrium in which every agent demands the empty bundle. This is the most technically challenging part of the proof and deviates from the proof in \citep{hatfield2013stability}. Once we have equilibrium prices for the sub-market where all inactive trades are rejected, we extend these prices back to the original market, combining them with the prices from the Nash equilibrium's outcome for the active trades to obtain a full competitive equilibrium.

To tackle the technically challenging part of the proof, consider a sub-market with only inactive trades, so each agent demands the empty bundle at their counterparties' current offers. The key insight is that the set $P^i$ of prices at which any agent $i$ demands nothing forms a convex polyhedron. Moreover, because preferences are fully substitutable, we can apply our geometric understanding of demand to show that these polyhedra have a special structure, in the sense that their boundaries are facets normal to specific ``substitutes'' vectors.
The main technical step constructs equilibrium prices by carefully combining these individual polyhedra. The proof shifts each agent's polyhedron $P^i$ slightly (by $\varepsilon \eb^{\Omega_{i \leftarrow}}$), finds a vertex $\vb$ of the intersection of these shifted polyhedra, and then rounds $\vb$ to obtain the final integral prices $\pb^*$. The bound $\varepsilon < \frac{1}{2\Delta-2}$ ensures that the rounded vector lies in each $P^i$, so each agent demands $\emptyset$ at $\pb^*$. Moreover, these final prices turn out to be simply the floor of each seller's offer or the ceiling of each buyer's offer; this follows from the geometric structure of substitutes preferences with integral valuations.
\end{proof}

The upper bound on $\varepsilon$ in \cref{theorem:NE-to-CE} is tight. \cref{example:counterexample} demonstrates, for instance, that the theorem statement fails to hold when $\varepsilon = \frac{1}{2\Delta - 2}$.

\begin{example}[continues=counterexample]
Let $\varepsilon \coloneqq \frac{1}{2\Delta -2} = \frac{1}{2}$. Clearly, the market admits an $\varepsilon$-tight Nash equilibrium $\offers$ given by $\offers^1 = (1.5, 0.5)$ and $\offers^2 = (1, 1)$, and neither trade happens. We now argue that there do not exist any prices for these two trades at which both agents demand $\emptyset$.
The set of prices at which agent $1$ demands $\emptyset$ is given by $\{\qb \in \R^\Omega \mid q_\omega \leq 1 \text{ and } q_\chi \geq 1\}$, and the set of prices at which agent $2$ demands $\emptyset$ is $\{\qb \in \R^\Omega \mid q_\omega \geq 1, q_\chi \leq 1, q_\omega - q_\chi \geq 1\}$. As these two sets are disjoint, there are no prices at which both agents demand $\emptyset$.
\end{example}

\section{Decentralized Trading in the Network Game}\label{sec:decentr-trading}

In this section, we discuss two models of decentralized dynamics behavior in the trading network game, and prove that the proposed dynamics indeed converge to an $\varepsilon$-tight Nash equilibrium and an $\varepsilon$-approximate competitive equilibrium, respectively.

Our first procedure is an offer-based best response dynamics due to \citet{lock2024decentralized}. We prove convergence of this dynamics for $3$-sparse market networks while \citet{lock2024decentralized} does it only for $2$-sparse networks, but conjecture it for all $m$, with experimental evidence. In this dynamics environment, agents are ``activated'' to change their strategy, i.e.,~the offers towards their trading partners. In doing so, they act myopically and adjust their offers only according to whether their current demand at current prices leads to clearing the trades they are facing or not.

Our second trading dynamics is a decentralized, stochastic clock market. As in conventional clock auctions, there is a central shared price for each trade. Instead of updating their own offer strategies, agents only report their demand at given prices to a central auctioneer. Subsequently, prices of the trades the agent does not demand change upwards for selling trades, and downwards for buying trades. The market is still partially decentralized: instead of all agents reporting their demand at current prices, the updating process is stochastic, with one agent at a time, chosen randomly, updating and reporting their demand. In networked markets, the fully centralized counterpart, in which all agents report their updated demand at current prices simultaneously, suffers from increased coordination complexity. Thus, we focus in this article on the decentralized version, and leave the fully centralized clock auction (resembling a tâtonnement process) to future work.\footnote{For concurrent updating of agents' demands, the natural update rule would be to increase the price of trades demanded by a buyer but not a seller, and vice versa, and leave the price of all trades that are (not) demanded by both buyer and seller untouched. As each agent may be involved in multiple trades with many different trading partners, the resulting price path remains complex, and convergence an open question.} The decentralized version would be a natural candidate to implement as an equilibrium-finding algorithm on the blockchain, automating the reporting of myopic demands.

\subsection{An Offer-Based Best Response Dynamic}
\label{sec:convergence-of-decentralized-trading}

Our proof of convergence for the offer-based best response dynamics requires significant new ideas compared to the proof for $2$-sparse markets by \citet{lock2024decentralized}. We believe our techniques might also lead to promising approaches for the case of $m$-sparse markets with $m \geq 4$. Moreover, we prove convergence to an $\varepsilon$-tight Nash equilibrium; thus, by our \cref{theorem:NE-to-CE} the outcome of this trading dynamics is a competitive equilibrium, too.

The dynamics provides a stylized depiction of how agents may interact with each other in real-world markets, while making some concessions to tractability. Agents make offers to their trading partners, as described in the trading network game. In a dynamic, multi-period setting, these offers may be adapted across periods. It is assumed that in each period, a single agent is \emph{active} and allowed to change their offers. The extent to which an agent changes their offer is reactive to their trading partners: it conveys the message ``I am willing/not willing to trade at your proposed price''. If for some trade, the active agent is not willing to trade, they set their counter offer for this trade to an $\varepsilon$-lower price, otherwise they match the offer they are facing on that trade. 

Agents are activated uniformly at random. However, an agent who was previously active, cannot be activated again until one of the offers they are facing from their own trading partners has changed. More formally, the best response dynamics maintains a set $U$ of unsatisfied agents, which initially contains all agents. In each round, one agent is selected uniformly at random from $U$. Once agent $i$ has best responded, she is removed from the set $U$ while any neighboring agents who have witnessed a modified offer from agent $i$ are added back to $U$. The process continues until all agents are satisfied. The full definition of the dynamics is given in \cref{alg:offer-based-dynamics}. Moreover, \Cref{ex:dynamics} illustrates an example of the evolution of the dynamics for a simple 3-agent market.

Note that we additionally assume that, for each agent, the vector of offers $\offers$ is bounded. Equivalently, this means that for every feasible bundle of trades $\Phi \in \Omega_i$, the valuation 
$v^i(\Phi)$ is finite. This assumption is not restrictive, since the bound can be chosen arbitrarily large. Moreover, Lemma 11 in \citet{lock2024decentralized} shows that, under such boundedness of offers, the resulting dynamics either converge or enter a cycle. Furthermore, we define an $m$-sparse market as follows.

\begin{algorithm}[htp]
\begin{algorithmic}[1]
\Require Market $\market$ with agents $I$, trades $\Omega$, and step size $\varepsilon > 0$.
\Ensure Nash Equilibrium $\offers$ of the Network Market Game.
\State Initialize offers $\offers$ for all trades and agents.
\State Let $U = I$ be the set of unsatisfied agents.
\While{$U \neq \emptyset$}
    \State Sample an agent $i \sim U$ uniformly at random.
    \State Let $\pb$ be a vector of prices such that $p_\omega = \sigma^{-i}_\omega$ for all $\omega \in \Omega_i$ be the offers facing agent $i$.
    \State Determine bundle $d^i(\pb)$ demanded by agent $i$ at prices $\pb$.
    \State Update agent $i$'s offer for each trade $\omega \in \Omega_i$ to $\sigma^i_\omega \leftarrow p_\omega$ if $\omega \in d^i(\pb)$, and $\sigma^i_\omega \leftarrow p_\omega - \varepsilon \chi^i_\omega$ otherwise.
    \State Remove agent $i$ from $U$.
    \State Add all agents $j \neq i$ facing modified offers to $U$.
\EndWhile
\State \Return final offers $\offers$.
\end{algorithmic}
\caption{The offer dynamics from \citet{lock2024decentralized}.}
\label{alg:offer-based-dynamics}
\end{algorithm}

\begin{definition}
A market is $m$-sparse for $m \in \mathbb{N}$ if every subgraph of its underlying network can be divided into two or more disjoint components by removing at most $m$ trades.
\end{definition}

Then, we recover the following convergence result.
\begin{proposition}
\label{prop:3-sparse-convergence}
For any $3$-sparse market with fully substitutable agents and bounded offers, the best response dynamics described in \cref{alg:offer-based-dynamics} converges to an $\varepsilon$-tight Nash equilibrium.
\end{proposition}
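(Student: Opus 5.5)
The proof has two parts. First, any terminal state of \cref{alg:offer-based-dynamics} is an $\varepsilon$-tight Nash equilibrium. Second, under $3$-sparsity and full substitutability the dynamics cannot cycle. Lemma 11 of \citet{lock2024decentralized} says that with bounded offers the dynamics either terminates or enters a cycle, so these two parts together give the proposition.

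\textbf{Terminal states.} When $U=\emptyset$, every agent's last update was made against the offers she currently faces. For each trade $\omega$, consider the two endpoint agents. Each of them, when last activated, either matched the counterpart's offer or moved $\varepsilon$ away from it in the direction unfavourable to trading. Hence $|\offer^{b(\omega)}_\omega-\offer^{s(\omega)}_\omega|\le\varepsilon$, and the buying offer lies weakly below the selling offer. For the Nash property, fix agent $i$.
\begin{itemize}
\item Any unilateral deviation by $i$ can only activate trades at the counterparts' offers $p_\omega=\offer^{-i}_\omega$.
\item By construction the set of $i$'s active trades is exactly $d^i(\pb)\in D^i(\pb)$ at these prices. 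For a trade in $d^i(\pb)$, the counterpart's offer equals $i$'s matching offer. For a trade outside it, $i$'s offer is $\varepsilon$ away, so the trade is inactive.
\item Therefore $i$ already obtains a utility-maximizing bundle among all bundles she could activate, and no deviation is profitable.
\end{itemize}

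\textbf{Ruling out cycles.} Suppose the dynamics enters a cycle. Let $\Omega'$ be the set of trades whose offers change during the cycle, and $I'$ the agents incident to them. Offers on trades outside $\Omega'$ are constant along the cycle. So, as in the proof sketch of \cref{theorem:NE-to-CE}, we can absorb them into modified valuations; full substitutability is preserved, and we may restrict attention to the subnetwork $(I',\Omega')$. By $3$-sparsity this subnetwork splits into components by removing at most three trades $\omega_1,\omega_2,\omega_3$.

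Along the cycle, each trade's offers take only finitely many values on the $\varepsilon$-lattice. Choose a component $K$ and record the sequence of price vectors it faces on the (at most three) cut trades. Within $K$, the offers then evolve as a dynamics driven by a periodic boundary signal in at most three coordinates. The plan is to show that a fully substitutable "block" driven on at most three boundary trades responds monotonically. By \cref{fact:substitutes-normals}, the aggregate demand of the agents in $K$, viewed as a function of the boundary prices, has a LIP whose facets are normal only to vectors of the form $\eb^\omega$ or $\eb^\omega-\eb^\chi$.

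The key claim I would aim for is the following. Take the first time in the cycle at which some offer on a cut trade moves in the direction unfavourable to trade, say a buying offer is lowered by $\varepsilon$. Tracing back through $K$, that move must be triggered by a rejection whose cause is a prior price change that moved in the same direction. Using the substitutes normals, one rejection can propagate to only one other trade at a time, since crossing a facet normal to $\eb^\omega-\eb^\chi$ swaps one trade for another. With at most three boundary trades there are few enough facet types that such propagation must eventually produce a net monotone drift: the sum of prices over some direction (a buyer's prices, say) strictly decreases over one cycle period. This contradicts periodicity, so no cycle exists.

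I would carry this out by a case analysis on the number of cut trades ($1$, $2$ or $3$). The cases with one or two cut trades essentially reproduce the $2$-sparse argument of \citet{lock2024decentralized}.

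\textbf{Main obstacle.} The hardest step is the three-cut-trade case. There, facets normal to $\eb^{\omega_1}-\eb^{\omega_2}$, $\eb^{\omega_2}-\eb^{\omega_3}$ and so on could in principle rotate a rejection around the three boundary trades and return it to its starting trade. I would first try a potential of the form $\sum_{\omega}\chi^i_\omega\offer^i_\omega$, taken over a suitable orientation of the cut trades. I would then use the two-dimensional slices of the LIP, where every facet is horizontal, vertical or of slope $1$, to show that such a rotation forces this potential to move strictly in one direction.
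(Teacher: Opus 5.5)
Your terminal-state argument is correct and fills in a step the paper leaves implicit. Like the paper, you use Lemma 11 of \citet{lock2024decentralized} to reduce the proposition to excluding cycles.

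\textbf{The gap: cycles are not excluded.} Excluding cycles is the whole content of the proposition beyond the $2$-sparse case, and you do not prove it. The monotone-drift claim and the potential $\sum_\omega \chi^i_\omega \offer^i_\omega$ are only announced. Nothing you state forces that potential to move in one direction around a cycle. It is not monotone along ordinary runs: in the two-seller, one-buyer example of the appendix, the buyer's bid on $\omega_2$ goes $2\to 2.5\to 2\to 1.5$. Also, a rejection on $\omega$ can be triggered by a trade-favourable move on a substitute trade through a swap facet, so the direction bookkeeping in your tracing-back argument does not close. Substitutes do make each best-response map order-preserving, but order-preserving maps can still cycle through mutually incomparable offer vectors. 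This is exactly why the general $m$-sparse case is open. You need a quantity that provably cannot return to its starting value, and it is unclear where the bound of three trades would enter your argument at all.

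\textbf{The block reduction is also unjustified.} Speaking of the aggregate demand of a component $K$ as a function of boundary prices presupposes that the offers inside $K$ have equilibrated between boundary changes. Interleaved random activation does not ensure this. The paper instead uses Proposition 9 of \citet{lock2024decentralized} to pass to a market with two agents and three trades.

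\textbf{The missing idea.} The paper combines a specific potential with a pigeonhole argument on the three trades.
\begin{enumerate}
\item It sets $\varphi(\offers)=\sum_i\|\offers^i-BR^i(\offers^{-i})\|_1$, which counts, over both agents, the trades on which the agent would change her offer if she best responded.
\item Thanks to the tie-breaking perturbation, the segment from $\offers$ to $\widehat{\offers}$ meets the passive agent's LIP only in relative interiors of facets.
\item The substitutes normals then leave only seven possible demand changes relative to the classes $\Omega^{aa},\Omega^{ar},\Omega^{ra},\Omega^{rr}$. This gives $\varphi(\widehat{\offers})-\varphi(\offers)\le -c_1-c_2-2(c_3+c_4+c_6)$.
\item So a stationary cycle can involve only the two swap types.
\item Telescoping a swapped trade's offer around the cycle shows it spends equally many, hence positively many, steps in $\Omega^{ar}$ and $\Omega^{ra}$. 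It therefore visits all four classes.
\item The partition changes only through pairwise swaps, so all four classes would have to be nonempty, which is impossible with three trades.
\end{enumerate}
Your proposal needs an argument of this kind, or some other quantity that strictly changes around every cycle, to close the gap.
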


\begin{proof}[Proof sketch]
(The complete proof can be found in \Cref{app:convergence-offers-dynamic}.)

Firstly, we leverage the equivalence between an $m$-sparse market and a market with two agents and $m$ trades, established by \citet{lock2024decentralized}. Thus, we restrict ourselves to such a market.

Our proof considers a potential function $\varphi$ that counts, for given offers $\offers$, how many trades the two agents disagree on. We prove that this potential function weakly decreases throughout the dynamics by making use of a geometric characterization of demand described in \cref{section:geometry-preferences}. This allows us to characterize the ways in which the demand of the non-best responding agent evolves during an update step of the opponent.

Furthermore, the same lemma allows us to fully identify the cases in which the dynamics is stationary, reducing them to three possible scenarios.
The first two, corresponding to when the sequence of reached offers does not loop, or loops in a unique demand area, are easily proved to be impossible as offers are bounded.

In the third case, the dynamics performs a stationary cycle, i.e.,~the demand bundle evolves while the potential function remains constant. We show this case is not possible in a market with only three trades either. This is done by defining a specific partition of $\Omega$ into four sets, and showing that during one of these loops, a trade has to eventually belong to each of these sets. 
Finally, proving that an update to the composition of sets in the partition is possible only through swaps of trades, we conclude that this cannot occur if all four sets do not count at least one element, implying that at least 4 trades are necessary for this case to occur.
\end{proof}

\subsection{The Stochastic Clock Market}\label{sec:clock-market}

In our stochastic clock market, agents interact not via offers, but their myopic interaction simply consists of reporting their demand at current prices. The market is still decentralized: agents are sampled uniformly at random and ``activated''. In each round, only the active agent reports her current demand. A central clock tracks the price of each trade. This price vector is modified in accordance with the demand reports. Formally the update of the prices is defined as follows.
\begin{definition}[Price updates]
    Let $i$ be the active agent at time $t+1$ and $\pb^t$ be the current vector of prices. Let $\Phi \in D^i(\pb^t)$ be the demanded bundle of agent $i$ at prices $\pb^t$. The new prices $\pb^{t+1}$ are
    \begin{align}\label{eq:price_update}
        p^{t+1}_\omega \coloneqq \begin{cases}
             p^t_\omega &\text{if $\omega \in \Phi$},\\
            p^t_\omega - \varepsilon \chi_{\omega}^i &\text{else}.
        \end{cases}
    \end{align}
\end{definition}
The prices of trades demanded by the active agents at current prices $\pb^t$ remain unchanged. The prices of buying trades not demanded at $\pb^t$ are adjusted downwards, while the prices of the selling trades not demanded at $\pb^t$ are adjusted upwards.

In contrast to the offer-based best response dynamics, agents are sampled uniformly at random from the entire population of agents (not merely among unsatisfied agents). As a consequence, an agent may be selected to update their demand repeatedly in consecutive rounds without the dynamics reaching a stationary state.
The complete dynamics is formally defined in \Cref{alg:price-based-dynamics}. \Cref{ex:dynamics} shows an example of the evolution of the dynamics over a simple market, and exemplifies the difference to the offer dynamics in the previous section.

\begin{algorithm}[tb!]
\begin{algorithmic}[1]
\Require Market $\market$ with agents $I$ and trades $\Omega$; number of time steps $T$, and bound on the valuation function $R$; step size $\varepsilon > 0$.
\State Initialize arbitrary market prices $\pb^0 \in [-R, R]^{\vert\Omega\vert}$.
\For{$t \in \{1, \ldots, T\}$}
    \State Pick an agent $i \in I$ uniformly at random.
    \State Compute bundle $d^i(\pb^t)$ demanded by the active agent $i$ at $\pb^t$.
    \State Update price $\pb^{t+1}$ by letting $p_{\omega}^{t+1} \leftarrow p_{\omega}^t - \varepsilon \chi^i_\omega$ if $\omega \in \Omega_i \setminus d^i(\pb^t)$.
\EndFor
\State \Return $\pb^T$.
\end{algorithmic}
\caption{Price dynamics}
\label{alg:price-based-dynamics}

\end{algorithm}
As before, the vector of prices $\pb$ is assumed to be bounded throughout the interaction. This is equivalent to requiring that the valuation function of the agents is bounded, which excludes the cases where an agent either demands or rejects a trade no matter the price. The following holds.

\begin{proposition}
\label{prop:price-dynamic-converges}
Let $\market$ be a market with maximum vertex degree $\Delta$, where the absolute value of the valuation function is bounded by $R$ for every agent. Let $P^*$ be the set of its competitive equilibrium prices, and $(\pb^t)_{t\leq T}$ the sequence of prices generated by \cref{alg:price-based-dynamics} with step size $\varepsilon = R\sqrt{\frac{2m}{T\Delta}}$. For any $T\geq 2n^2R^2m\Delta$,
\[
    \bbE[d(\bar{\pb}, P^*)] \leq \mathcal{O}\left(\frac{1}{\sqrt{T}}\right),
\]
where $\bar{\pb} = \frac{1}{T} \sum_{t=1}^T \pb^t$ and  $d(\pb, P) = \min_{\qb \in P} \| \pb - \qb \|_\infty$.
\end{proposition}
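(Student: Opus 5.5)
We view \cref{alg:price-based-dynamics} as stochastic subgradient descent on a convex potential whose minimizers are exactly the competitive equilibrium prices. Define the indirect utility $V^i(\pb) = \max_{\Psi \subseteq \Omega_i} u^i(\Psi,\pb)$ and the Lyapunov (dual) function
\[
f(\pb) \coloneqq \sum_{i \in I} V^i(\pb).
\]
Each $V^i$ is a maximum of affine functions of $\pb$, hence convex and piecewise linear. By Danskin's theorem, a subgradient of $V^i$ at $\pb$ is $-\chi^i \times \eb^{\Psi}$ for any $\Psi \in D^i(\pb)$; that is, $-1$ on demanded buying trades, $+1$ on demanded selling trades, and $0$ elsewhere.

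The first step is to identify the minimizers of $f$ with $P^*$. By LP duality, which is the standard argument behind the welfare theorems of \citet{hatfield2013stability} (or the unimodularity argument of \cref{app:markets-to-auctions}), $\min f$ equals the market value. Its argmin is exactly the set of prices supporting a competitive equilibrium, because $\sum_i V^i(\pb) \ge \sum_i u^i(\Phi_i,\pb) = \sum_i v^i(\Phi_i)$, with equality iff every $\Phi_i$ is demanded. We need $P^*\neq\emptyset$, so we assume full substitutability, or simply that a competitive equilibrium exists.

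The update is not literally $-\varepsilon\cdot(\text{subgradient of }V^i)$, since it moves undemanded trades rather than demanded ones. Writing $\eb^{\Omega_i}\times\chi^i$ for the indicator of all of $i$'s trades signed by role, the update is
\[
\pb^{t+1} = \pb^t - \varepsilon\,\chi^i \times \eb^{\Omega_i\setminus d^i} = \pb^t - \varepsilon\bigl(\chi^i\times\eb^{\Omega_i} + g^i\bigr), \qquad g^i \in \partial V^i(\pb^t).
\]
We correct for the extra term by summing over agents. Each trade has exactly one buyer and one seller, so $\sum_i \chi^i\times \eb^{\Omega_i} = 0$. Hence, with agent $i$ drawn uniformly, $\bbE[\pb^{t+1}-\pb^t \mid \pb^t] = -\frac{\varepsilon}{n}\, g$ with $g \in \partial f(\pb^t)$. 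The step is therefore an unbiased stochastic subgradient step of size $\varepsilon/n$. Its norm satisfies $\|\pb^{t+1}-\pb^t\|_2^2 \le \varepsilon^2 \Delta$, since at most $\Delta$ coordinates move.

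The standard SGD analysis (\citet{garrigos2023handbook,bubeck2015convexoptimizationalgorithmscomplexity}) expands $\|\pb^{t+1}-\pb^*\|^2$ for $\pb^*\in P^*$ nearest to $\pb^0$, takes conditional expectations, and telescopes. Combined with Jensen's inequality this gives
\[
\bbE[f(\bar\pb)] - f^* \le \frac{n\|\pb^0-\pb^*\|_2^2}{2\varepsilon T} + \frac{n\varepsilon\Delta}{2}.
\]
Competitive equilibrium prices can be chosen in $[-R,R]^m$, so $\|\pb^0-\pb^*\|_2^2 \le 4R^2 m$ up to constants. The choice $\varepsilon = R\sqrt{2m/(T\Delta)}$ balances the two terms, giving $\mathcal{O}(nR\sqrt{m\Delta/T})$. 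The condition $T\ge 2n^2R^2m\Delta$ only ensures the step is small (e.g.\ $\varepsilon \le 1/(nR\Delta)$-type requirements) so that the prices stay in the bounded region where the analysis applies.

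The main obstacle is converting the function-value gap into the distance $d(\bar\pb,P^*)$. The plan is to use a sharpness (error-bound) property: $f$ is polyhedral, so there is a constant $\mu>0$ with $f(\pb)-f^* \ge \mu\, d(\pb,P^*)$ on the bounded region. This is Hoffman's bound applied to the epigraph representation. Moreover, because all vertices and facet normals are integral (\cref{fact:substitutes-normals} and integrality of valuations), $\mu$ can be bounded below by a constant depending only on the market structure, not on $T$. Dividing the value bound by $\mu$ then gives $\bbE[d(\bar\pb,P^*)] = \mathcal{O}(1/\sqrt{T})$, with $\|\cdot\|_\infty$ dominated by the distance in which Hoffman's bound is stated. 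Checking that the sharpness constant is market-dependent only, rather than degenerating, is the step I expect to require the most care. A fallback is to prove it directly from the fact that the slopes of $f$ are integer vectors, so $f$ increases at rate at least $1$ along any direction leaving $P^*$ normally. Boundedness of iterates, needed to stay in the region where $\mu$ applies, also follows from the $R$-bound: outside $[-R-\varepsilon,R+\varepsilon]$ a price is always pushed back inward.
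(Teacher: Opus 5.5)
Your argument is correct and shares the paper's skeleton. You use the same Lyapunov function, the sum of indirect utilities. You use the same cancellation that makes the update an unbiased stochastic subgradient step: the paper folds $+\langle\boldsymbol{\chi}^i,\pb\rangle$ into each $L^i$, and you average it away using $\sum_i\boldsymbol{\chi}^i=0$, which is the same observation. You also reach the same SGD-plus-Jensen bound $\frac{n\|\pb^0-\pb^*\|^2}{2\varepsilon T}+\frac{n\varepsilon\Delta}{2}$.

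The genuine difference is the last step, converting the value gap into $d(\bar{\pb},P^*)$.

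\emph{The paper's route.} It uses the L$^\natural$-convexity of $L$ (via \citet{Shioura2017algorithms}) to restrict to steepest-descent directions $\pm\eta\eb^\Phi$. It then argues that $L$ has slope at least $1$ in $\|\cdot\|_\infty$ off $P^*$. This exploits substitutes and gives the explicit constant $nR\sqrt{2m\Delta/T}$.

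\emph{Your route.} You apply Hoffman's bound to the polyhedral representation $f=\max_k(\vec{a}_k\cdot\pb+b_k)$, with $P^*=\{\pb:\vec{a}_k\cdot\pb+b_k\le f^*\ \forall k\}$. This gives $d(\pb,P^*)\le\kappa\,(f(\pb)-f^*)$. It is rigorous and standard, and it needs only $P^*\neq\emptyset$. The price is an implicit constant, which is all the stated $\mathcal{O}(1/\sqrt{T})$ requires.

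Your worry that this constant might degenerate is unnecessary. Hoffman's constant depends only on the slope vectors $\vec{a}_k=-\sum_i\boldsymbol{\chi}^i\times\eb^{\Psi_i}$, and these are fixed by the network alone. The bound also holds on all of $\R^\Omega$. So, like the SGD recursion, it needs neither bounded iterates nor the hypothesis $T\ge 2n^2R^2m\Delta$.

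Two side remarks are off, though neither is load-bearing:
\begin{itemize}
\item Marginal values lie in $[-2R,2R]$, so the invariant price box is $[-2R-\varepsilon,2R+\varepsilon]$, not $[-R-\varepsilon,R+\varepsilon]$.
\item Your fallback, integer slopes alone, does not give rate $\ge 1$ in $\|\cdot\|_\infty$ along arbitrary normal directions. That is exactly why the paper needs L$^\natural$-convexity, to restrict to directions $\pm\eb^\Phi$.
\end{itemize}
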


\begin{proof}[Proof Sketch]
(For the complete proof, see \Cref{proof:prop:price-dynamic-converges}.) The proof proceeds by introducing a Lyapunov function $L$, which sums up the indirect utilities of all agents at prices $\pb$. $L$ can be composed as the sum of functions $L^i$, one per agent $i \in I$. We then show that the price update performed by agent $i$ in \cref{alg:price-based-dynamics} is equivalent to a stochastic subgradient descent step of $L$. 
In fact, unlike the offer-based dynamics, the active agent at each round is chosen in the complete set $I$, thus the resulting update is an unbiased estimate of a subgradient of the Lyapunov function. By convexity of $L$, this implies convergence of the function value to its minimum.

Finally, since the slope of $L$ outside the equilibrium price region is bounded from below by one, reflecting the fact that an agent must adjust at least one trade, we infer that small variations in the value of $L$ correspond to small deviations of the price vector from equilibrium. This establishes convergence and completes the proof.
\end{proof}

\section{Stability and Fairness}\label{sec:stability-fairness}

In the preceding sections, we analyzed decentralized trading and stability in the trading network game. $\varepsilon$-tight Nash equilibria are not only stable as mutual best response offers, but also as competitive equilibria. However, competitive equilibria, while socially efficient, need not always distribute the generated social welfare equitably between agents in the market. In this section, we address the question of fairness in the assignment of surplus between agents. 

Fairness cannot be evaluated as a single-dimensional criterion. For example, not executing any trades in the market is always a fair outcome yielding zero surplus to everyone; but this is clearly inefficient whenever the market value is positive. Moreover, a fair market outcome that is not stable is not desirable for market participants and a regulator or designer. Thus, we will be looking for fair, efficient and stable outcomes. The analysis turns out to be rather nuanced; we discuss different fairness notions that redistribute surplus within the core, and we establish how they relate. Our final, main result is a strong impossibility: in the core, inessential agents must always get zero utility. This is a substantial generalization of the lone wolf theorems for matching markets \citep{McVitieWilson1970} and competitive equilibria in exchange economies with transferable utility \citep{JagadeesanKominersRheingansYoo2020} to the core in the trading network market.

\subsection{The Cooperative Market Game and Implementation}

Competitive equilibrium as a stability notion can be restrictive in the context of redistribution. We illustrate this in \cref{example:leximin-cannot-be-CE} below. Thus, we consider fairness in the core and define the associated cooperative game on the trading network market.

A \textit{coalition} is a subset $C$ of agents $I$ of the market. We can restrict any market $\mathcal{M}$ to coalition $C$ by removing all other agents and all trades involving these agents. The \textit{characteristic function} $w : 2^I \to \R_+$ maps every coalition $C \subseteq I$ to the social welfare $w(C)$ generated by the coalition.
Letting $\Omega_C$ be all the trades with buyer and seller in $C$, we define

\[
    w(C) \coloneqq \max_{\Phi \subseteq \Omega_C} \sum_{i \in C} v^i(\Phi_i).
\]

\begin{definition}[Cooperative market game]
    The \textit{cooperative market game} $(I, w)$ of a market $\market$ consists of the agents $I$ and the characteristic function $w$.
\end{definition}

\begin{definition}[Core of the cooperative market game]\label{definition:core-imputation}
    The core of the cooperative market game is the set of utility imputations $\xb \in \R^I_+$ that satisfy $\sum_{i \in I} x_i = w(I)$ and $\sum_{i \in C} x_i \geq w(C)$ for all $C \subsetneq I$.
\end{definition}

From the definition, it is clear that the core is an $|I|$-dimensional convex polytope.

Since our original quest is to analyze fairness and stability in the trading market, the first question we answer is one of implementation. Suppose we found a fair distribution of utilities in the core, could we implement those as an outcome, that is, through allocations of sets of trades and corresponding prices? We answer this affirmatively in \cref{prop:core-imputation-implementation}. We show that there is a one-to-one correspondence between utility profiles that arise from core outcomes, and core imputations in the new game. Trivially, utility profiles that arise from core outcomes are core imputations in the new game. But, conversely, we also show that every core imputation can be realized by some prices and allocation. We show an even stronger result: every core imputation can be realized by some prices and \textit{any efficient} allocation. This equivalence between core outcomes and core utility imputations allows us to reason about fairness and stability using the cooperative market game.

\begin{proposition}
\label{prop:core-imputation-implementation}
    Fix some market with fully substitutable agents. For any efficient set of trades~$\Phi$ and core imputation $\ub \in \R^I$, we can find prices~$\pb$ that implement $(\Phi, \pb)$ as a core outcome with utilities $u^i(\Phi, \pb) = u_i$.
\end{proposition}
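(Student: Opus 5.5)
Given an efficient $\Phi$ and a core imputation $\ub$, I will construct prices $\pb \in \R^\Phi$ on the active trades so that each agent's utility is exactly $u_i$. Core stability of $(\pb,\Phi)$ then follows directly from $\ub$ being a core imputation. The argument has three steps.

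\textbf{Step 1: setting up the price system.} I need $\pb$ with
\[
v^i(\Phi_i) - \sum_{\omega\in\Phi_i}\chi^i_\omega p_\omega = u_i \quad \text{for all } i.
\]
This is a linear system $A\pb = \bb$, where $b_i = v^i(\Phi_i)-u_i$ and $A$ is (up to sign) the node--edge incidence matrix of the directed graph $(I,\Phi)$. Each row corresponds to an agent and each column to a trade, with $+1$ at the buyer and $-1$ at the seller.

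\textbf{Step 2: solvability via the core constraints.} The column space of an incidence matrix is the space of vectors that sum to zero on every weakly connected component. So $A\pb=\bb$ is solvable iff, for every connected component $K$ of the graph $(I,\Phi)$ (isolated agents count as components),
\[
\sum_{i\in K} u_i = \sum_{i\in K} v^i(\Phi_i).
\]
To verify this I would use efficiency and the core together.
\begin{itemize}
\item[(a)] Since $\Phi_K$ only involves agents in $K$, $w(K)\ge \sum_{i\in K}v^i(\Phi_i)$. Core stability gives $\sum_{i\in K}u_i\ge w(K)$, hence $\sum_{K}u \ge \sum_K v(\Phi)$ for every component $K$.
\item[(b)] Summing over components, $\sum_i u_i = w(I) = \sum_i v^i(\Phi_i)$, because $\Phi$ is efficient.
\end{itemize}
So the inequalities in (a) add up to an equality, and each one must therefore hold with equality. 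This gives solvability. Isolated agents are the case $\Phi_i=\emptyset$, which forces $u_i=0$; this is consistent with the argument above and is where the zero-utility phenomenon for inessential agents will later come from.

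\textbf{Step 3: core stability of the outcome.} Suppose $(\qb,\Psi)$ blocks with coalition $C=a(\Psi)$. Summing utilities over $C$, the prices cancel because $\Psi\subseteq\Omega_C$, giving
\[
\sum_{i\in C} v^i(\Psi_i) > \sum_{i\in C}u_i \ge w(C),
\]
where the strict inequality comes from the blocking conditions. This contradicts the definition of $w(C)$, so no blocking outcome exists. Full substitutability is not needed for this direction. It is presumably assumed in the statement only so that core imputations exist and are nonempty, via competitive equilibrium existence.

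The main obstacle I anticipate is Step 2: identifying the exact solvability condition of the incidence system and checking it component by component. The key observation is that efficiency combined with the core inequalities forces tightness on every component of $\Phi$. A secondary point is to make sure $\ub\ge 0$ is compatible with this, which the tightness argument already handles.
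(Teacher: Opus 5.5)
Your proof is correct, but it reaches the key component-wise identity by a different route from the paper's.

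\textbf{The paper's route.} It first fixes competitive-equilibrium prices $\qb$ supporting $\Phi$. This is the only place full substitutability is used, via \citet{hatfield2013stability}; it is not there, as you guess, to guarantee a non-empty core. The paper then uses that $(\qb,\Phi)$ is a core outcome to get $\sum_{i\in I_j}u^i(\Phi,\qb)=w(I_j)$ on each connected component $I_j$ of $(I,\Phi)$. From this it deduces that $\ub$ has the same component sums. Finally, it matches utilities to $\ub$ by repeatedly shifting prices by a constant along paths of $\Phi$-trades.

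\textbf{Your route.} You get $\sum_{i\in K}u_i=w(K)=\sum_{i\in K}v^i(\Phi_i)$ directly from the chain $\sum_{i\in K}u_i\ge w(K)\ge\sum_{i\in K}v^i(\Phi_i)$ together with efficiency. You then replace the iterative path-shifting by the column-space characterization of the incidence matrix.

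\textbf{What your route buys.} Prices cancel within a component, so any starting prices already have the right component sums. The equilibrium step in the paper is therefore superfluous, and your argument proves the proposition for arbitrary valuations, needing only that $\ub$ is a core imputation. This matters because the paper invokes the proposition to prove \cref{theorem:inessential-agents-zero-utility}, which is stated without any substitutability assumption; your version removes that dependence. You also give the price-cancellation argument showing $(\pb,\Phi)$ is unblocked, which the paper only asserts in its final sentence.

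\textbf{What the paper's route buys.} Its path-shifting is an explicit combinatorial procedure, which makes the algorithmic claim of \cref{corollary:imputation-to-outcome} transparent. Solving your linear system along a spanning forest achieves the same.
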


The proof of \cref{prop:core-imputation-implementation}, which can be found in \Cref{proof:prop:core-imputation-implementation}, also gives us an efficient way to compute a core outcome that agrees with a given core imputation.

\begin{corollary}\label{corollary:imputation-to-outcome}
    For any market and core imputation $\xb$, a core outcome with utilities $\xb$ can be computed in polynomial time.
\end{corollary}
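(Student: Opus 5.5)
The corollary should follow from making the proof of \cref{prop:core-imputation-implementation} algorithmic. The computation splits into two stages: find an efficient set of trades $\Phi$, then find prices $\pb$ on $\Phi$ that realize the given utilities $\xb$ and make $(\Phi,\pb)$ a core outcome. I would carry out both stages under the standing assumption of \cref{prop:core-imputation-implementation}, namely fully substitutable agents with integral valuations. Valuations are accessed through a demand oracle, or through explicitly listed bundle values over $\Omega_i$.

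\textbf{Stage 1: an efficient allocation.} I would compute competitive equilibrium prices and a supporting allocation in polynomial time. One route is the reduction of the trading network market to a unimodular goods market in \Cref{app:markets-to-auctions}. There, equilibrium prices minimize the convex Lyapunov function $L$ (the sum of indirect utilities from the proof of \cref{prop:price-dynamic-converges}). This is a polynomial-time minimization via the ellipsoid method with a demand oracle, or via steepest descent on the integer lattice, which terminates because $L$ is $L^\natural$-convex for substitutes. Given the prices, an allocation is obtained by a feasibility/matroid-intersection step, or by perturbing valuations as in \cref{lem:tie_breaking_rule} so that demand is unique. By the first welfare theorem of \citet{hatfield2013stability}, the resulting $\Phi$ is efficient.

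\textbf{Stage 2: prices for the active trades.} Given $\Phi$, I would solve a linear system, or LP, in the unknowns $p_\omega$ for $\omega\in\Phi$, requiring $v^i(\Phi_i)-\sum_{\omega\in\Phi_i}\chi^i_\omega p_\omega = x_i$ for every agent $i$. The construction in the proof of \cref{prop:core-imputation-implementation} should guarantee that this system is feasible. The natural argument is flow decomposition: on each connected component of the graph $(I,\Phi)$, the total required transfers sum to zero because $\sum_i x_i = w(I)$ and the core inequalities hold on components. Flow conservation on a graph with prescribed net supplies is then solvable, for instance by fixing prices along a spanning tree and setting the remaining prices arbitrarily. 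This is linear algebra, hence polynomial time. By \cref{prop:core-imputation-implementation}, any such $(\Phi,\pb)$ with utilities $\xb$ is a core outcome, so no further core check is needed.

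\textbf{Main obstacle.} The delicate point is Stage 1: computing an efficient allocation in polynomial time for general fully substitutable valuations, and being precise about the input model. With explicit valuations over $2^{\Omega_i}$, the input size already absorbs the exponential, so demand queries are polynomial in the input. With oracles, I would rely on polynomial-time equilibrium computation for substitutes, for example \citet{baldwin2023solving}. I would also check that Stage 2 does not implicitly need the full core polytope. It should not, since $\xb$ is given and only the equality constraints on $\Phi$ matter.
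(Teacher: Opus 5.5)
Your proposal is correct, but it takes a different route from the paper in two places.

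\textbf{What the paper does.} The paper's justification is only the remark that the proof of \cref{prop:core-imputation-implementation} is constructive. That proof takes as given an efficient $\Phi$ and supporting competitive equilibrium prices $\qb$, and does not say how either is computed. It uses the core property of $(\Phi,\qb)$ to show that each connected component $I_j$ of $(I,\Phi)$ receives exactly $w(I_j)$. It then repeatedly shifts prices along a $\Phi$-path between an over-paid and an under-paid agent. Each round fixes at least one agent, so there are at most $|I|$ rounds.

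\textbf{Where you differ.} First, your Stage~1 makes explicit the computation of $\Phi$ that the paper leaves implicit. Second, your Stage~2 replaces the path-shifting with one incidence system per component, and it needs no equilibrium anchor. The chain $w(I)=\sum_j\sum_{i\in I_j}x_i\ge\sum_j w(I_j)\ge\sum_j\sum_{i\in I_j}v^i(\Phi_i)=w(I)$ forces $\sum_{i\in I_j}x_i=\sum_{i\in I_j}v^i(\Phi_i)$ on every component. That is exactly the zero-net-supply condition that makes the system solvable. Spell this chain out, since it uses efficiency of $\Phi$ and not only the core inequalities. It also shows that in your argument substitutability enters only through finding $\Phi$. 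The pricing step works for any market once an efficient allocation is known, whereas the paper also uses substitutability there, through the existence of $\qb$.

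\textbf{Two small corrections.}
\begin{enumerate}
\item The claim that every $(\Phi,\pb)$ with utilities $\xb$ is a core outcome is not what \cref{prop:core-imputation-implementation} states. It follows directly from the definitions. Suppose $(\qb,\Psi)$ blocks with $C=a(\Psi)$. Prices cancel within $C$, so we would need $w(C)\ge\sum_{i\in C}v^i(\Psi_i)>\sum_{i\in C}x_i\ge w(C)$, which is impossible.
\item In Stage~1, keep the matroid-intersection step (or $\mathrm{M}^\natural$-concave welfare maximization) and drop the perturbation alternative. \cref{lem:tie_breaking_rule} makes each agent's demand unique at lattice prices, but it does not make the buyers' and sellers' selections agree. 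In \cref{example:leximin-cannot-be-CE}, at the unique equilibrium prices $(0,0)$ the buyer's tie-broken bundle is $\{\omega\}$ while the seller's is $\{\omega,\chi\}$.
\end{enumerate}
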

\subsection{Fairness Concepts}

We now define several notions of fairness for market outcomes. Common to all our fairness notions is that our redistribution concerns the utility of agents. This is a purely normative notion the regulator or market designer may care about. Fairness concerns regarding the allocation of trades or prices paid are not in the scope of our analysis.

A well-established notion for a fair payoff distribution is the Shapley value \citep{Shapley1953}. This notion is particularly natural in convex games, as the utility imputation as prescribed by the Shapley value always lies in the core of such games. The cooperative game defined on the trading network market, however, is not convex, as the associated welfare function need not be supermodular. We establish this in \cref{example:non-convex-game} below.%
\footnote{On the other hand, it is straightforward to show that the welfare function is superadditive, as the market value achievable by any two disjoint coalitions cannot exceed their union's value.}
Thus, we have to rely on different fairness notions.

\begin{example}
\label{example:non-convex-game}
A cooperative game $(I,w)$ with players $I$ and characteristic function $w$ is convex if $w$ is supermodular: $w(S \cup T) + w(S \cap T) \geq w(S) + w(T), \forall S, T \subseteq I$. Consider the market shown in \cref{fig:non-convex-game}. Social welfare is maximized if exactly one seller makes a sale to the buyer. Hence, the welfare function $w$ of this market assigns both coalitions $C = \{1,3\}$ and $D = \{2,3\}$ the same market value of $1$. Moreover, the grand coalition $\{1, 2, 3\}$ also gets value $1$, so
$w(C) + w(D) = 2 > 1 = w(C \cup D) + w(C \cap D)$
shows that $w$ is not supermodular.
\end{example}

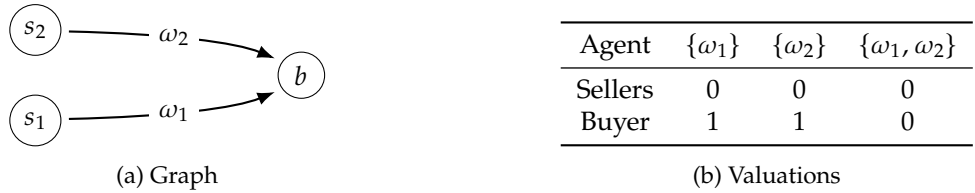
\begin{figure}[htp]
    \centering
    \begin{subfigure}[b]{0.49\textwidth}
    \centering
    \begin{tikzpicture}[xscale=3.5, yscale=0.6]
    \node[agent] (s1) at (0,0) {$s_1$};
    \node[agent] (s2) at (0,2) {$s_2$};
    \node[agent] (j) at (1,1) {$b$};
    \draw[trade] (s1) to[bend right] node[pos=0.4,fill=white] {$\omega_1$} (j) ;
    \draw[trade] (s2) to[bend left] node[pos=0.4,fill=white] {$\omega_2$} (j);
    \end{tikzpicture}
    \caption{Graph}
    \end{subfigure}
    \begin{subfigure}[b]{0.49\textwidth}
    \centering
    \begin{tabular}{cccc}
        \toprule
         Agent & $\{ \omega_1 \}$ & $\{ \omega_2 \}$ & $\{ \omega_1, \omega_2 \}$ \\
         \midrule
         Sellers    & $0$ & $0$ & $0$ \\
         Buyer      & $1$ & $1$ & $0$ \\
         \bottomrule
    \end{tabular}
    \caption{Valuations}
    \end{subfigure}
    \caption{An example of a market whose cooperative market game is not convex.}
    \label{fig:non-convex-game}
\end{figure}

As described above, while fairness considerations in isolation are trivial, they become meaningful in conjunction with efficiency and stability requirements. All core imputations are stable and efficient; however, by analogy with Shapley's distributive notion, not all agents contribute equally to the market value. We distinguish between agents who are indispensable for generating surplus and those who are not, leading to the definition of essential and inessential agents \citep{Vazirani-2022-Core-imputations}.

\begin{definition}[Essential agents]\label{def:essential-agents}
    An agent $i \in I$ is \emph{essential} if they are involved in every efficient set of trades, and \emph{inessential} otherwise.
\end{definition}

This distinction captures whether an agent, viewed alone, is structurally necessary for generating the maximal social welfare in the market. Any inessential agent, taken alone, could be removed from the market without affecting the market value. Naturally, this does not hold for a set of at least two inessential agents. 

\begin{observation}
\label{observation:essential-agents}
    The following statements are equivalent for any market $\market$:
    \begin{enumerate}
        \item Agent $i$ is essential.
        \item Removing agent $i$ from the market reduces the market value.
        \item Agent $i$ is allocated at least one trade in every competitive equilibrium.
    \end{enumerate}
\end{observation}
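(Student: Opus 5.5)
I will prove the equivalences by a cycle (1)$\Leftrightarrow$(2) and (1)$\Leftrightarrow$(3), using only the definitions of efficiency and market value together with the first and second welfare theorems of \citet{hatfield2013stability}. Throughout, write $w(I)$ for the market value and $w(I\setminus\{i\})$ for the market value of the market with agent $i$ and all her trades removed. I will interpret statement (3) under the standing existence assumption, e.g.\ full substitutability, so that the market admits a competitive equilibrium. Without existence, (3) holds vacuously and cannot be equivalent to the other two statements.

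For (1)$\Leftrightarrow$(2), I first note that $w(I\setminus\{i\}) \le w(I)$. Indeed, any feasible allocation $\Phi \subseteq \Omega_{I\setminus\{i\}}$ is feasible in the full market, and it gives agent $i$ the empty bundle, which has value $v^i(\emptyset)=0$. Hence removal reduces the market value iff $w(I\setminus\{i\})<w(I)$. Suppose $i$ is inessential. Then there is an efficient $\Phi$ with $\Phi_i=\emptyset$. This $\Phi$ lies in $\Omega_{I\setminus\{i\}}$ and attains $w(I)$ there, so $w(I\setminus\{i\})=w(I)$. Conversely, suppose $w(I\setminus\{i\})=w(I)$. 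An optimal allocation of the reduced market is feasible in $\market$, attains $w(I)$ (since $v^i(\emptyset)=0$), and does not involve $i$. So $i$ is inessential. Taking contrapositives gives (1)$\Leftrightarrow$(2).

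For (1)$\Rightarrow$(3), let $(\pb,\Phi)$ be any competitive equilibrium. By the first welfare theorem, $\Phi$ is efficient. Since $i$ is essential, $\Phi_i\neq\emptyset$.

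For (3)$\Rightarrow$(1), I argue by contrapositive. If $i$ is inessential, pick an efficient $\Phi$ with $\Phi_i=\emptyset$. By the second welfare theorem, there exist prices $\pb$ such that $(\pb,\Phi)$ is a competitive equilibrium. In this equilibrium $i$ is allocated no trade, so (3) fails.

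The only delicate point is the one already flagged: the equivalence with (3) needs existence of competitive equilibrium, which I will state explicitly. It is also worth noting that the second welfare theorem in the form used (any efficient allocation extends to a CE whenever one exists) holds in general transferable-utility settings by LP duality. So beyond existence, no further substitutability assumption is needed.
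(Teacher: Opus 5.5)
Your argument is correct, and since the paper states this observation without proof, it is exactly the reasoning left implicit there: (1)$\Leftrightarrow$(2) follows directly from the definitions of efficiency and market value, and (1)$\Leftrightarrow$(3) follows from the first and second welfare theorems. Your existence caveat is needed, not pedantic. In the paper's own \cref{example:two-agents-two-opposite-trades} no competitive equilibrium exists and the empty allocation is the unique efficient one, so (3) holds vacuously for both agents while (1) and (2) fail; the observation should therefore be read under a hypothesis such as full substitutability.
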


Within the set of core imputations, we explore three social welfare orderings and fairness metrics: leximin, leximax, and minimizing the variance of utilities. All three orderings are defined in terms of the utility profile of a core outcome, so it suffices by \cref{corollary:imputation-to-outcome} to find an optimal imputation.

We define the lexicographic order $\preceq_L$ on vectors as $\xb \preceq_L \yb$ if, at the smallest index $k$ for which $x_k \neq y_k$, we have $x_k < y_k$.

\begin{definition}
    Let $X$ be a set of $n$-dimensional vectors, let $\asc(\xb)$ and $\desc(\xb)$ denote the vector $\xb$ sorted in ascending and descending order, and let $\overline{x}$ be the mean of the entries of $\xb$.
    \begin{enumerate}
        \item A vector $\xb$ is a \textit{leximin} vector in $X$ if $\asc(\xb) \succeq_L \asc(\yb)$ for all $\yb \in X$.
        \item A vector $\xb$ is a \textit{leximax} vector in $X$ if $\desc(\xb) \preceq_L \desc(\yb)$ for all $\yb \in X$.
        \item A vector $\xb$ is a \textit{minvar} vector in $X$ if it minimizes the variance$ \frac{1}{n} \sum_{i \in [n]} (x^i - \overline{x})^2$.
    \end{enumerate}
\end{definition}

Leximin is an established social welfare ordering, also called the egalitarian social welfare ordering, that is independent of unconcerned agents, invariant to transforming utilities with an increasing bijection, and complies with the Pigou-Dalton transfer \citep{Moulin-2003}. Leximax is the symmetric sibling of leximin, which however only satisfies independence of unconcerned agents and invariance to increasing transformations. Minimizing the variance is also a central focus in the auction literature, e.g.~in \citet{Finster-2025-equitable-auctions} and \citet{mcafee2025winnerpaysbidauctionsminimizevariance}. As the minvar objective is strictly convex, it is immediate that the minvar core imputation is unique. \citet{vazirani2025fair} shows that the leximin and leximax core imputations of any cooperative game are also unique.%
\footnote{We note that multiple core outcomes, consisting of prices and allocations, can result in the same utility profile, so leximin, leximax, and minvar core outcomes may not be unique. Our focus on utilities abstracts away from this.}

It is easy to see that minimizing the variance of utilities among all core outcomes is equivalent to minimizing the sum of squared utilities, $\sum_{i \in I} x_i^2$, as all core outcomes are efficient.\footnote{The utilities of all agents sum to $w(I)$ and their mean is a constant $\overline{u} \coloneqq w(I) / |I|$ for any core outcome. Multiplying the variance of utilities of core outcome $(\xb, \pb)$ by the constant $|I|$, we see that
$\sum_{i \in I} (u^i(\xb, \pb) -  \overline{u})^2 = \sum_{i \in I} u^i(\xb, \pb)^2 - \sum_{i \in I} 2 u^i(\xb, \pb) \overline{u} + \sum_{i \in I} \overline{u}^2 = \sum_{i \in I} u^i(\xb, \pb)^2 - 2 \overline{u}^2 + |I| \overline{u}^2$, where the last two terms are constants.}
Moreover, in \cref{app:practical-optimisation}, we describe how to compute the leximin, leximax, and minvar core imputations by formulating optimization problems. These methods are not polynomial time, as they construct problems with exponentially many core imputation constraints. Some of the examples below were found using an implementation of these using Julia JuMP and Gurobi.

Finally, we note that, while the leximin core imputation is unique \citep{vazirani2025fair}, the associated core outcomes need not be unique. An example of a market with two leximin core outcomes is given in \cref{fig:leximin-not-unique} below.
\begin{observation}
    The leximin core outcome is not unique.
\end{observation}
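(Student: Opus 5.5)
The observation is an existence claim, so I will prove it with a single explicit market that has two distinct core outcomes (prices plus allocation) sharing the unique leximin core imputation. The cleanest source of non-uniqueness is \cref{prop:core-imputation-implementation}: the leximin core imputation $\ub$ can be implemented from \emph{any} efficient set of trades. So it suffices to build a fully substitutable market with two different efficient sets of trades $\Phi\neq\Phi'$. Each then yields a core outcome $(\Phi,\pb)$ and $(\Phi',\pb')$ with the same utility vector $\ub$. Since the allocations differ, the outcomes differ, and both are leximin because leximin is defined through utilities.

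For the construction I will reuse the market of \cref{fig:non-convex-game}: two sellers $s_1,s_2$, one buyer $b$ with unit demand of value $1$ for either trade, and sellers with zero costs. All agents are fully substitutable, since the buyer is unit-demand and the sellers are additive. The efficient allocations are exactly $\{\omega_1\}$ and $\{\omega_2\}$, each giving welfare $1$.

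Next I compute the core of the cooperative market game. The values are $w(\{s_1,b\})=w(\{s_2,b\})=w(I)=1$ and $w=0$ for all other coalitions. The core constraints are $x_{s_1}+x_b\ge 1$, $x_{s_2}+x_b\ge 1$, $x_{s_1}+x_{s_2}+x_b=1$ and $\xb\ge 0$. Adding the two pair constraints and using the total gives $x_b\ge 1$, so $x_b=1$ and the core is the single point $(0,0,1)$. The leximin core imputation is therefore this point, consistent with both sellers being inessential. The implementing outcomes are $(\{\omega_1\},p_{\omega_1}=0)$ and $(\{\omega_2\},p_{\omega_2}=0)$. Both are core outcomes, directly or by \cref{prop:core-imputation-implementation}, and they are distinct.

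No step is a serious obstacle. The only care needed is to check that a market outcome records its allocation, so that two outcomes with the same utilities but different trades really count as distinct. This holds by the definition of a market outcome as a pair $(\pb,\Phi)$. If a less degenerate example is preferred, one can instead note that even for a fixed $\Phi$ the prices are typically non-unique: in a cycle of trades one can shift prices around the cycle without changing any agent's utility. I would mention this as a second mechanism but present the two-seller example as the proof.
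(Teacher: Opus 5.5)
Your proposal is correct and is essentially the paper's argument: the paper's \cref{example:leximin-not-unique} likewise exhibits a market (a seller, two pass-through traders, a buyer) with two distinct efficient sets of trades, $\{\omega_1,\chi_1\}$ and $\{\omega_2,\chi_2\}$, each implementing the leximin core imputation $(\tfrac32,\tfrac32,0,0)$, so non-uniqueness comes from the allocation exactly as in your construction. Your reuse of \cref{fig:non-convex-game} gives a singleton core, which makes the leximin selection vacuous but is still a valid witness; note only that the buyer there values $\{\omega_1,\omega_2\}$ at $0$, so she is not literally unit-demand, although her valuation is still substitutes (it is submodular on two goods) and your direct core check does not rely on this.
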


\begin{example}\label{example:leximin-not-unique}
It is not hard to verify that all core outcomes are given by $\{(u^s, u^b, u^{t_1}, u^{t_2})| u^{t_1} = u^{t_2} = 0, u^s + u^b = 3, u^s \ge 0, u^b \ge 0\}$. Note that neither of the traders can have a positive utility in the core, as the seller and buyer would otherwise immediately form a coalition with the respective other trader. The leximin core imputation is given by $(\frac{3}{2}, \frac{3}{2}, 0, 0)$ and can be achieved by active trade set $\{\omega_1,\chi_1\}$ or $\{\omega_2,\chi_2\}$ at prices $p_{\omega_i} = p_{\chi_i} = \tfrac{1}{2}$.
\begin{figure}[htp]
    \centering
    \begin{subfigure}[b]{0.49\textwidth}
        \centering
        \begin{tikzpicture}[xscale=3, yscale=1.5]
            \node[agent] (s)  at (0,0)  {$s$};
            \node[agent] (b)  at (2,0)  {$b$};
            \node[agent] (t1) at (1,0.5)  {$t_1$};
            \node[agent] (t2) at (1,-0.5) {$t_2$};

            \draw[trade] (s)  -- (t1) node[midway,fill=white] {$\omega_1$};
            \draw[trade] (s)  -- (t2) node[midway,fill=white] {$\omega_2$};
            \draw[trade] (t1) -- (b)  node[midway,fill=white] {$\chi_1$};
            \draw[trade] (t2) -- (b)  node[midway,fill=white] {$\chi_2$};
        \end{tikzpicture}
        \caption{Trading network}
    \end{subfigure}
    \hfill
    \begin{subfigure}[b]{0.49\textwidth}
        \centering
        \begin{tabular}{ccccc}
            \toprule
            Agent & $\{\omega_i\}$ & $\{\chi_i\}$ & $\{\omega_i,\chi_i\}$ \\
            \midrule
            $t_1, t_2$ & $-\infty$ & $-\infty$ & $0$  \\
            $s$ & $1$ & -- & $-\infty$  \\
            $b$ & -- & $2$ & $-\infty$ \\
            \bottomrule
        \end{tabular}
        \caption{Valuations}
    \end{subfigure}
    \caption{Example with two pure traders, a seller $s$ and a buyer $b$. The valuations for the empty bundle are zero for each agent, valuations for bundles not listed in the table are $-\infty$.}
    \label{fig:leximin-not-unique}
\end{figure}
\end{example}

\subsection{The Impossibility of Fairness and Stability}\label{sec:impossibility-of-fairness-in-the-core}
In this section, we show several impossibilities in demanding our fairness notion in conjunction with stability. Our main result is to demonstrate that the stability notion of the core is, for non-convex games, too demanding to create fair outcomes: inessential agents get zero surplus in every core outcome. Moreover, even essential agents need not be guaranteed a positive surplus in markets with four or more essential agents.

We also discuss several examples and special cases. \Cref{example:leximin-cannot-be-CE} shows that the leximin core imputation need not be implementable as a competitive equilibrium. Thus, we only consider core stability (and the associated core outcome, by \cref{prop:core-imputation-implementation}). In the special case where we have no more than three essential agents (and any number of inessential agents), the leximin, leximax, and minvar core imputations coincide. Moreover, this core imputation gives all essential agents strictly positive utilities. This could be thought of as a minimum requirement on any ``fair'' social welfare order among efficient outcomes: if an essential agent has no incentive to stay in the market because they get zero surplus, the market value will decrease. \cref{example:zero-leximin}, demonstrates that in markets with four essential agents it is possible for an essential agent to get zero utility in every core outcome. Consequently, any fair welfare ordering, including leximin, leximax, and minvar, is rendered ineffective. In an additional example, \cref{example:leximin-not-equal-to-leximax}, we show that, with at least four essential agents, the leximin core imputation need not coincide with the leximax imputation. 

The following example shows that leximin core outcomes cannot be implemented as competitive equilibria. Thus, we consider exclusively the core as a stability criterion.
\begin{example}\label{example:leximin-cannot-be-CE}
In the trading network given in \cref{fig:leximin-cannot-be-CE} below, the welfare function is given by $w(\{1,2\}) = 1$ and $w(\Phi) = 0$ for all $\Phi \neq \{1,2\}$. It is immediate that the leximin core imputation is $(0.5, 0.5)$. Moreover, the unique competitive equilibrium prices are given by $p_\omega = 0$ and $p_\chi = 0$, and the efficient sets of trades are $\{\omega\}$ and $\{\chi\}$. It follows that the unique utility profile of either competitive equilibrium is $\ub = (0,1)$, i.e., the buyer receives the entire market value.
\begin{figure}[htp]
    \centering
    \begin{subfigure}[b]{0.49\textwidth}
        \centering
        \begin{tikzpicture}[xscale=3.5, yscale=1.8]
            \node[agent] (s) at (0,0) {$s$};
            \node[agent] (b) at (1,0) {$b$};
            \draw[trade] (s) to[bend left]  node[midway,fill=white] {$\omega$} (b);
            \draw[trade] (s) to[bend right] node[midway,fill=white] {$\chi$}   (b);
        \end{tikzpicture}
        \caption{Trading network}
    \end{subfigure}
    \hfill
    \begin{subfigure}[b]{0.49\textwidth}
        \centering
        \begin{tabular}{cccc}
            \toprule
            Agent & $\{\omega\}$ & $\{\chi\}$ & $\{\omega,\chi\}$ \\
            \midrule
            $s$ & $0$ & $0$ & $0$ \\
            $b$ & $1$ & $1$ & $0$ \\
            \bottomrule
        \end{tabular}
        \caption{Valuations}
    \end{subfigure}
    \caption{The leximin core imputation cannot be implemented as a competitive equilibrium.}
    \label{fig:leximin-cannot-be-CE}
\end{figure}
\end{example}

First, consider essential agents. The following two propositions and subsequent examples show a clear separation between markets with at most three, and at least four, agents.

\begin{proposition}\label{lem:pos-utility-3-agents}\label{prop:pos-utility-3-agents}
    In any market with at most three essential agents and a non-empty core, the leximin core imputation assigns strictly positive utility to all three essential agents.
\end{proposition}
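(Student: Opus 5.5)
We plan to argue by contradiction, using the structure of the core of the cooperative market game $(I,w)$; by \cref{prop:core-imputation-implementation} it suffices to work with imputations. Let $E$ be the set of essential agents, $|E|\le 3$, and let $N = I\setminus E$ be the inessential agents.

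The first step is to show that every core imputation gives zero utility to every inessential agent. This is the content of the later \cref{theorem:inessential-agents-zero-utility}; if we do not wish to rely on it, we prove it directly. If $i$ is inessential, some efficient $\Phi$ avoids $i$. Then $w(I\setminus\{i\}) = w(I)$, so the core constraint for $I\setminus\{i\}$ gives $x_i = w(I) - \sum_{j\ne i} x_j \le 0$, and nonnegativity forces $x_i = 0$. Hence core imputations are supported on $E$. Moreover, $w(C)\le w(C\cap E)$ for every coalition $C$: by superadditivity, and since the inessential agents receive zero in the core, only the constraints of coalitions containing essential agents bind. Setting $W := w(I)$, we may assume $W>0$, since otherwise the claim is vacuous or degenerate.

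The second step characterises the core as a polytope over $E$. By \cref{observation:essential-agents}, $w(I\setminus\{e\}) < W$ for every $e\in E$. Since core imputations vanish on $N$, the binding constraints are $\sum_{j\in S} x_j \ge w_S$ for $S\subseteq E$, where $w_S := \max\{w(C) : C\cap E = S\}$ and $\sum_{j\in E}x_j = W$. The key facts are:
\begin{itemize}
\item the complement of each singleton satisfies $w_{E\setminus\{e\}} \le w(I\setminus\{e\}) < W$;
\item when $|E|=3$, each singleton $S=\{e\}$ has complement $E\setminus\{e\}$, a pair;
\item when $|E|=2$, each singleton's complement is again a singleton.
\end{itemize}

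The third step constructs a core point with all essential utilities positive, using non-emptiness of the core. Take any core point $\xb$. For each $e\in E$, the constraint for $E\setminus\{e\}$ leaves slack of the form $x_e \ge W - (\text{sum over the others})$. Thus every core point has $x_e \ge 0$, but $x_e$ could be $0$, and we need a better point.

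The main obstacle is this step: showing that some core point has $x_e>0$ for every $e$ simultaneously. With at most three essential agents, the constraints on proper subsets $S\subsetneq E$ are singletons, and, when $|E|=3$, pairs. A pair constraint for $E\setminus\{e\}$ is equivalent to $x_e \le W - w_{E\setminus\{e\}}$, an upper bound on $x_e$ that is strictly positive. A singleton constraint $x_e \ge w_{\{e\}}$ is a lower bound. The plan is to argue that if some core point has $x_e=0$, we can shift a small amount of utility to $e$ from an agent $f$ with $x_f>0$ without violating any constraint. The only constraints that could become tight are:
\begin{itemize}
\item the lower bound $x_f \ge w_{\{f\}}$;
\item the upper bound for $e$, which has positive slack because $W - w_{E\setminus\{e\}}>0$;
\item for $|E|=3$, the pair constraints containing $f$ but not $e$, namely $E\setminus\{e\}$, which is the same as the upper bound for $e$.
\end{itemize}
So the only danger is that every $f$ with $x_f>0$ has $x_f = w_{\{f\}}$ tight. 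Summing these tight lower bounds over the support (at most two agents) and comparing with $W$ should contradict $w_{E\setminus\{e\}}<W$ via superadditivity. This uses the fact that with three agents every subset not containing $e$ is inside $E\setminus\{e\}$, which is exactly what fails for four agents.

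Finally, by convexity, averaging the core points that are positive in each coordinate gives a core point that is positive in all coordinates of $E$. The leximin imputation maximises the minimum entry over $E$ after the zeros on $N$. Because the $|N|$ zeros are forced in every core point, the leximin comparison then reduces to the essential coordinates, so the leximin imputation's smallest essential utility is at least that of this point, and hence positive.
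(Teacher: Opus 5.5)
Your first two steps are sound. Your direct derivation that $x_i=0$ for inessential $i$, via the constraint for $I\setminus\{i\}$, is correct and does not need substitutes. The argument breaks in the third step, at the ``danger'' case. With $E=\{e,f,g\}$ and $x_e=0$, the danger case gives $W=x_f+x_g=w(N\cup\{f\})+w(N\cup\{g\})$. You assert that superadditivity makes this contradict $w(N\cup\{f,g\})=w_{E\setminus\{e\}}<W$. It does not. Superadditivity applies only to disjoint coalitions, and $N\cup\{f\}$ and $N\cup\{g\}$ share the inessential agents $N$. So $w(N\cup\{f\})+w(N\cup\{g\})$ can exceed $w(N\cup\{f,g\})$, for instance when $f$ and $g$ compete for the same inessential partner. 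Your auxiliary claim $w(C)\le w(C\cap E)$, which would force $w_{\{f\}}=0$ and empty the danger case, is also false. Superadditivity gives the reverse, $w(C)\ge w(C\cap E)+w(C\setminus E)$, and an essential agent can create surplus with inessential agents alone. For $|E|\le 2$ your argument does work, since there $w_{\{f\}}=w(I\setminus\{e\})<W$.

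This gap cannot be closed, because the statement fails once inessential agents are present. Take the following market, in which all valuations are fully substitutable.
\begin{itemize}
\item Seller $3$ has zero-cost trades $\omega_{31},\omega_{32}$ to buyers $1$ and $2$.
\item Seller $k$ has zero-cost trades $\kappa_1,\kappa_2$ to the same buyers but can sell at most one.
\item Each buyer has unit demand with value $1$.
\end{itemize}
Then $W=2$ is attained by $\{\omega_{31},\omega_{32}\}$, so $k$ is inessential. Removing $1$, $2$ or $3$ lowers the value to $1$, so $E=\{1,2,3\}$. Every core imputation has $x_k=0$, hence $x_1\ge w(\{1,k\})=1$ and $x_2\ge w(\{2,k\})=1$. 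The core is therefore the single point $(x_1,x_2,x_3,x_k)=(1,1,0,0)$: it is non-empty, yet essential agent $3$ gets zero.

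The paper's proof misses this for the same reason. After setting inessential utilities to zero, it checks the shifted point $\yb$ only against constraints of subsets of $\{1,2,3\}$ with $w(\{i\})=0$, never against constraints such as $y_1\ge w(\{1,k\})$. It is therefore valid only when there are no inessential agents. In that case ($N=\emptyset$) your argument is correct, because $w_{\{f\}}=0<x_f$ rules out the danger case. It is then essentially the paper's argument, except that you move utility to $e$ from one positive agent, bounded by the slack $W-w(E\setminus\{e\})$, whereas the paper takes it from both other agents at once.
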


The proof of the proposition is deferred to \cref{proof:lem:pos-utility-3-agents}. Markets with at most three essential agents also satisfy the following property.

\begin{proposition}\label{prop:leximin-leximax-minvar-coincide-3-agents}
    In any market with at most three essential agents and a non-empty core, the leximin, leximax, and minimum variance core imputations coincide.
\end{proposition}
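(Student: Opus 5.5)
Before treating the essential agents, I would remove the inessential ones. Take an inessential agent $i$ and an efficient set of trades $\Phi$ that does not involve $i$. For any core imputation $\xb$, the coalition $I\setminus\{i\}$ can still realize $\Phi$, so $w(I\setminus\{i\}) = w(I)$. The core constraints then give $\sum_{j\neq i} x_j \ge w(I) = \sum_j x_j$, and hence $x_i = 0$ (this is the content of \cref{theorem:inessential-agents-zero-utility}, which I derive directly here). So every core imputation is zero on inessential agents. It is therefore determined by its restriction $\yb$ to the set $E$ of essential agents, with $|E|\le 3$.

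With inessential coordinates fixed at $0$, the three orderings agree on the whole vector exactly when they agree on $\yb$. For leximin the zeros are common to all imputations, so they do not affect the comparison once one checks the essential entries are positive; \cref{prop:pos-utility-3-agents} gives this positivity. For leximax and minvar the common zeros are irrelevant. The restricted core is
\[
K=\Bigl\{\yb\in\R^E_+ : \textstyle\sum_{i\in E} y_i = w(I),\ \sum_{i\in S} y_i \ge \alpha_S \ \text{for } S\subsetneq E\Bigr\},
\]
where $\alpha_S$ is the largest value $w(C)$ over coalitions $C$ with $C\cap E = S$. This is a polytope in a simplex of dimension $|E|-1\le 2$.

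The cases $|E|\le 2$ are easy. If $|E|=1$, $K$ is a single point. If $|E|=2$, $K$ is a segment on the line $y_1+y_2=w(I)$. On that segment leximin, leximax and minvar all select the point closest to $(w(I)/2, w(I)/2)$.

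For $|E|=3$, I would rewrite each constraint $y_a+y_b\ge \alpha_{ab}$ as $y_c\le w(I)-\alpha_{ab}=:\beta_c$. Together with $y_c\ge \alpha_c$, this makes $K$ a box intersected with the plane $\sum y_i = w(I)$:
\[
K=\Bigl\{\yb : \alpha_c\le y_c\le \beta_c \ \text{for all } c\in E,\ \textstyle\sum_c y_c = w(I)\Bigr\}.
\]
For a box intersected with a hyperplane, all three optima are given by a water-filling formula $y_c = \min\{\max\{\lambda,\alpha_c\},\beta_c\}$, with $\lambda$ chosen so the coordinates sum to $w(I)$. For minvar this follows from the KKT conditions of minimizing $\sum y_c^2$ over a box with one linear constraint. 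The water-filling point is also leximin optimal: raising the smallest coordinate forces lowering some coordinate that is already at its lower bound or at level $\lambda$. A symmetric argument shows it is leximax optimal.

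The main obstacle is this reduction of $K$ to a box, which only works because $|E|=3$. With four essential agents, pair constraints $y_a+y_b\ge \alpha_{ab}$ no longer become single-coordinate bounds, and \cref{example:leximin-not-equal-to-leximax} shows that the coincidence can then fail.
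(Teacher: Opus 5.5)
Your proof is correct, and its skeleton matches the paper's. Inessential agents are forced to zero. With three essential agents, each pair constraint becomes an upper bound on the remaining coordinate, so the core is a box intersected with the plane $\sum_c y_c = w(I)$. One point is then shown to be optimal for all three criteria.

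The paper normalizes $w(I)=1$, takes the box to be $0\le x_i\le a_i$, and works through three explicit cases. Its candidate points are $(\tfrac13,\tfrac13,\tfrac13)$, $(a_1,\tfrac{1-a_1}{2},\tfrac{1-a_1}{2})$ and $(a_1,a_2,1-a_1-a_2)$. These are exactly your water-filling point $\min\{\lambda,a_c\}$ with zero, one or two caps binding. The paper checks leximin and leximax by hand, and minvar by minimizing $\sum_i (y_i-c)^2$ for a suitable constant $c$.

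Your single clipping formula, with KKT for minvar and an exchange argument for the lexicographic orders, is shorter. More importantly, it covers a case the paper's reduction to a three-agent market with floor $0$ does not account for. A coalition made of one essential agent $c$ and some inessential agents can have positive value. For example, an essential seller may be able to sell alternatively to an inessential buyer $j$, which forces $y_c\ge w(\{c,j\})>0$. Removing inessential agents changes $w$, so such lower bounds disappear in the paper's three-agent reduction. Your $\alpha_S$ records them, and two-sided clipping is exactly what is needed in that case.

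Two small remarks. First, the appeal to \cref{prop:pos-utility-3-agents} is unnecessary. All core imputations are nonnegative, so the common zeros always occupy the first positions of $\asc(\cdot)$, and the leximin comparison reduces to the essential coordinates without any positivity.

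Second, your one-line leximin justification should be made precise, for instance as follows.
\begin{itemize}
\item For $\zb\in K$ with $\zb\neq\yb^*$, choose $c$ with $z_c<y^*_c$ and $y^*_c$ minimal, and let $S=\{d : y^*_d<y^*_c\}$.
\item Since $z_c\ge\alpha_c$, you get $y^*_c\le\lambda$. Hence every $d\in S$ sits at its cap $\beta_d$, so $z_d\le y^*_d$, and minimality forces $z_d=y^*_d$.
\item Consequently the $|S|+1$ smallest entries of $\zb$ are, in sorted order, at most the corresponding entries of $\yb^*$, with strictly smaller total. This gives $\asc(\zb)\prec_L\asc(\yb^*)$.
\end{itemize}
The leximax case is symmetric.
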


The proof is deferred to \cref{proof:prop:leximin-leximax-minvar-coincide-3-agents}. Note that these results do not require substitutes agents. However, they break down when we have four or more essential agents in the market. We show this in the example below for \cref{prop:pos-utility-3-agents}. In \cref{example:leximin-not-equal-to-leximax} in \Cref{app:sec:additional-examples}, we demonstrate that the leximin and leximax core imputations need not coincide in markets with four or more agents.

\begin{example}\label{example:zero-leximin}
In the market given in this example all four agents are essential. We show that buyer $2$ gets zero utility in all core imputations.

The characteristic function is given in \cref{tab:characteristic-func-essential-gets-zero}. All agents are essential, as removing any agent strictly reduces the market value. The leximin core imputation for this market is $(3, 7, 5, 0)$, which implies that buyer 2 receives utility $0$ in all core imputations. We can also prove this directly. Let $u$ be an arbitrary core imputation. From the coalition constraints of sets of three agents together with the grand coalition constraint $u^{s_1} + u^{s_2} + u^{b_1} + u^{b_2} = 15$, it follows that
$u^{s_1} \le 3$, $u^{s_2} \le 7$, $u^{b_1} \le 5$, and $u^{b_2} \le 1$.
Together with the coalition constraints of sets of two agents, we obtain from these inequalities
$u^{s_1} \ge 3$, $u^{s_2} \ge 7$, $u^{b_1} \ge 5$.
So with the grand coalition constraint it follows that
$u^{b_2} = 0$.

\end{example}

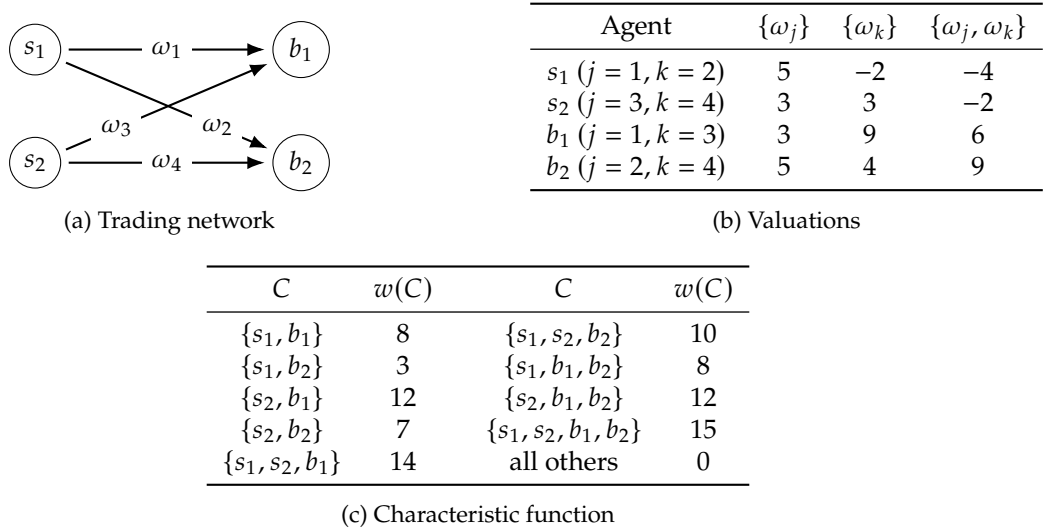
\begin{figure}[htp]
    \centering
    \begin{subfigure}[b]{0.49\textwidth}
        \centering
        \begin{tikzpicture}[xscale=3.5, yscale=1.5]
            \node[agent] (s1) at (0,1) {$s_1$};
            \node[agent] (s2) at (0,0) {$s_2$};
            \node[agent] (b1) at (1,1) {$b_1$};
            \node[agent] (b2) at (1,0) {$b_2$};

            \draw[trade] (s1) -- (b1) node[midway,fill=white] {$\omega_1$};
            \draw[trade] (s1) -- (b2) node[pos=0.75,fill=white] {$\omega_2$};
            \draw[trade] (s2) -- (b1) node[pos=0.25,fill=white] {$\omega_3$};
            \draw[trade] (s2) -- (b2) node[pos=0.5,fill=white]  {$\omega_4$};
        \end{tikzpicture}
        \caption{Trading network}
    \end{subfigure}
    \hfill
    \begin{subfigure}[b]{0.49\textwidth}
        \centering
        \begin{tabular}{cccc}
            \toprule
            Agent & $\{\omega_j\}$ & $\{\omega_k\}$ & $\{\omega_j,\omega_k\}$ \\
            \midrule
            $s_{1}$ $(j=1, k=2)$ & $5$ & $-2$ & $-4$ \\
            $s_{2}$ $(j=3, k=4)$ & $3$ & $3$  & $-2$ \\
            $b_{1}$ $(j=1, k=3)$& $3$ & $9$  & $6$  \\
            $b_{2}$ $(j=2, k=4)$ & $5$ & $4$  & $9$  \\
            \bottomrule
        \end{tabular}
        \caption{Valuations}
    \end{subfigure}

    \vspace{1em}

    \begin{subfigure}[b]{0.9\textwidth}
        \centering
        \begin{tabular}{c c@{\qquad}c c}
            \toprule
            $C$ & $w(C)$ & $C$ & $w(C)$ \\
            \midrule
            $\{s_1,b_1\}$ & $8$  & $\{s_1,s_2,b_2\}$ & $10$ \\
            $\{s_1,b_2\}$ & $3$  & $\{s_1,b_1,b_2\}$ & $8$  \\
            $\{s_2,b_1\}$ & $12$ & $\{s_2,b_1,b_2\}$ & $12$ \\
            $\{s_2,b_2\}$ & $7$  & $\{s_1,s_2,b_1,b_2\}$ & $15$ \\
            $\{s_1,s_2,b_1\}$ & $14$ & all others & $0$ \\
            \bottomrule
        \end{tabular}
        \caption{Characteristic function}
        \label{tab:characteristic-func-essential-gets-zero}
    \end{subfigure}

    \caption{Market with four essential agents. Each agent $i$'s valuation is given for $\Omega_i = \{ \omega_j, \omega_k \}$ with $j \leq k$.}
    \label{fig:one-essential-gets-zero-in-all-core}
\end{figure}

\Cref{example:zero-leximin} implies the following corollary.
\begin{corollary}
    There exist markets with four or more essential agents in which at least one essential agent gets zero utility from all core imputations.
\end{corollary}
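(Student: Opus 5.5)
The corollary follows from \cref{example:zero-leximin}, so the plan is to check that this example really has the two properties claimed: all four agents are essential, and every core imputation gives $b_2$ zero utility. Since the statement is existential, one verified market suffices. To cover ``four or more'' essential agents, I would add a short padding argument.

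\textbf{Step 1: the characteristic function.} I would recompute $w$ from the valuation table by enumerating feasible trade sets.
\begin{itemize}
\item For the grand coalition, the candidate $\{\omega_1,\omega_4\}$ gives $5+3+3+4=15$. The alternative $\{\omega_2,\omega_3\}$ gives $-2+3+9+5=15$, so several sets are efficient.
\item For $\{s_2,b_1\}$, the trade $\omega_3$ gives $3+9=12$.
\item For $\{s_1,s_2,b_1\}$, the best choice is $\omega_1$ and $\omega_3$ together, giving $5+3+6=14$.
\item The remaining entries of \cref{tab:characteristic-func-essential-gets-zero} are checked similarly, as finite maximizations over at most four trades.
\end{itemize}

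\textbf{Step 2: essentiality.} By \cref{observation:essential-agents}, an agent is essential exactly when removing it lowers the market value. The three-agent values are $14$, $10$, $12$ and $8$, all strictly below $15$, so all four agents are essential.

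\textbf{Step 3: the core forces $u^{b_2}=0$.} Take any core imputation $u$.
\begin{itemize}
\item Subtracting each three-agent constraint from the grand-coalition equality gives upper bounds:
\begin{itemize}
\item $u^{b_2}\le 15-14=1$,
\item $u^{s_1}\le 15-12=3$,
\item $u^{s_2}\le 15-8=7$,
\item $u^{b_1}\le 15-10=5$.
\end{itemize}
\item The pair constraints then give lower bounds:
\begin{itemize}
\item $u^{s_2}+u^{b_1}\ge 12$ with $u^{b_1}\le5$ forces $u^{s_2}\ge 7$; with $u^{s_2}\le 7$ this also forces $u^{b_1}\ge5$;
\item $u^{s_1}+u^{b_1}\ge 8$ with $u^{b_1}\le 5$ forces $u^{s_1}\ge 3$.
\end{itemize}
\item Hence $u^{s_1}=3$, $u^{s_2}=7$, $u^{b_1}=5$, and the grand-coalition equality gives $u^{b_2}=15-15=0$.
\end{itemize}

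The core is non-empty: the point $(3,7,5,0)$ satisfies every listed constraint. By \cref{prop:core-imputation-implementation} it is implementable as a core outcome. Note that \cref{prop:core-imputation-implementation} requires fully substitutable agents; for bilateral two-trade valuations like these, this is checkable, though it is not strictly needed for a statement about core imputations.

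\textbf{Step 4: more than four essential agents.} Append disjoint copies of a simple seller--buyer pair with a positive-surplus trade between them. Each added agent is essential in its own component. The core of a disjoint union is the product of the component cores, because $w$ is additive across components and the coalition constraints decompose. So $b_2$ still receives zero in every core imputation.

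\textbf{Main obstacle.} The only step where an error could hide is the exhaustive computation of $w$ in Step 1, especially the three-agent values, since the whole argument rests on them. I would do this computation carefully, or by a quick mechanical enumeration.
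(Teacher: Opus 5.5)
Your proposal is correct and follows the same argument the paper gives in \cref{example:zero-leximin}: essentiality because each three-agent value lies below $w(I)=15$, upper bounds on utilities from the three-agent constraints, matching lower bounds from the pair constraints, and then $u^{b_2}=0$ from the grand-coalition equality. Your extra steps are also correct but go beyond what the corollary needs: recomputing $w$, checking that $(3,7,5,0)$ lies in the core so the claim is not vacuous, and the padding in Step 4, which is unnecessary because four essential agents already count as ``four or more''.
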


In summary, leximin within the core does not guarantee that essential agents get positive utility. Thus, it may provide too weak incentives for essential agents to participate in the market, highlighting the tension between efficiency, stability, and fairness.

Finally, we present our main result in this section, demonstrating that even simple core stability, a much weaker notion than competitive equilibrium, may be too demanding in order to incentivize inessential agents to stay in the market. While any one inessential agent may leave the market without affecting the market value, several inessential agents leaving the market may negatively affect efficiency. \cref{theorem:inessential-agents-zero-utility} is a generalization of the lone wolf theorems from matching markets and transferable utility exchange economies, concerning the maximum weight matching and competitive equilibria. To our knowledge, it was previously unknown that the analogy holds for core outcomes also. We prove this result for the trading network market, which also nests very general exchange economies such as in \citet{Sun-Yang-2006} (see also \citet{hatfield2013stability}).

\begin{theorem}\label{theorem:inessential-agents-zero-utility}
    In every core outcome, inessential agents get utility zero.
\end{theorem}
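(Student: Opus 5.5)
The plan is to reduce the statement to two coalition inequalities. First, the utility profile of any core outcome satisfies the coalitional constraints of the cooperative market game. Second, an inessential agent has $w(I\setminus\{i\}) = w(I)$. Full substitutability and \cref{prop:core-imputation-implementation} should not be needed, since only the easy direction (core outcome $\Rightarrow$ coalition inequalities) is used.

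\emph{Step 1 (individual rationality).} Let $(\pb,\Phi)$ be a core outcome and write $u_j = u^j(\Phi_j,\pb)$. Suppose $u_i<0$ for some agent $i$. If $i$ is involved in no trade of $\Phi$, then $\Phi_i=\emptyset$ and $u_i=0$, so $i$ must have some trade. Restrict $(\pb,\Phi)$ to the trades of $\Phi$ not involving $i$. Every agent $j\neq i$ keeps only trades whose price and counterparty are unchanged, but $v^j$ need not be additive, so $j$'s utility could change. The simpler route is to block with $i$ alone through the empty outcome, i.e.\ the singleton coalition with $w(\{i\})=0$. I would therefore check whether the paper's definition of blocking permits $\Psi=\emptyset$, read as $i$ walking away; this is the standard reading of the core. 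Granting it, we get $u_i\ge 0$ for all $i$.

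\emph{Step 2 (coalition inequality).} For any coalition $C$, I will show $\sum_{j\in C}u_j\ge w(C)$.
\begin{itemize}
\item Suppose not. Take an optimal $\Psi\subseteq\Omega_C$ attaining $w(C)$, and let $C'=a(\Psi)$.
\item Agents in $C\setminus C'$ get welfare $0$ under $\Psi$ and have $u_j\ge 0$ by Step 1. Hence $\sum_{j\in C'}u_j<\sum_{j\in C'}v^j(\Psi_j)$.
\item Decompose $\Psi$ into connected components of its trade graph. The welfare is additive over components, so some component $K$ has a strict deficit $\delta>0$.
\item On a spanning tree of $K$, assign prices to the tree trades so that each agent $j\in K$ receives utility exactly $u_j+\delta/|K|$. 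Prices of the non-tree trades in $K$ are fixed arbitrarily. Solving leaf by leaf works because each tree edge transfers money one-for-one between its two endpoints, and the total of the targets equals the component's welfare.
\item This gives a blocking outcome $(\qb,\Psi_K)$ with coalition $a(\Psi_K)=K$ in which everyone strictly improves, contradicting that $(\pb,\Phi)$ is in the core.
\end{itemize}

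\emph{Step 3 (conclusion).} Let $i$ be inessential. By \cref{def:essential-agents}, some efficient $\Phi^*$ involves no trade of $i$. Then $\Phi^*\subseteq\Omega_{I\setminus\{i\}}$, and since $v^i(\emptyset)=0$ we get $w(I\setminus\{i\})=w(I)$. Feasibility gives $\sum_{j\in I}u_j=\sum_j v^j(\Phi_j)\le w(I)$, and prices cancel. By Step 2,
\[
\sum_{j\neq i}u_j\ \ge\ w(I\setminus\{i\})\ =\ w(I)\ \ge\ \sum_{j\in I}u_j,
\]
so $u_i\le 0$. Combined with Step 1, $u_i=0$.

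The main obstacle is Step 2: making the price-transfer construction rigorous within the definition of blocking outcomes, where the coalition is exactly $a(\Psi)$. This is handled by restricting to a single deficit component and solving for tree prices; the degenerate case of an empty $\Psi$ with $w(C)=0$ is covered by Step 1.
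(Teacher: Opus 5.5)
Your proof is correct, and it takes a genuinely different route from the paper.

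\textbf{The paper's route.} The paper's proof is two lines. It treats the utility profile $\ub$ of the core outcome as a core imputation, which is the direction it calls ``trivial''. It then invokes \cref{prop:core-imputation-implementation} to re-implement $\ub$ with new prices on an efficient allocation that avoids $i$, and reads off $u_i = 0$.

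\textbf{Your route.} You prove the ``trivial'' direction explicitly: individual rationality, plus the coalition inequalities. For the latter, your spanning-tree transfer on a component with strict deficit correctly respects the requirement that the blocking coalition be exactly $a(\Psi)$. You then finish with the standard cooperative-game squeeze $\sum_{j \neq i} u_j \geq w(I \setminus \{i\}) = w(I) \geq \sum_{j} u_j$.

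\textbf{What each buys.} Your argument is self-contained and needs no assumption on valuations. The paper's proof does not have this generality: \cref{prop:core-imputation-implementation} is stated for fully substitutable agents, since its proof starts from competitive equilibrium prices, yet the theorem carries no such hypothesis. The paper's route instead gets the theorem as a corollary of a stronger structural fact, namely that every core imputation is implementable on every efficient allocation.

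\textbf{On individual rationality.} Your caution in Step 1 is justified, and the convention is not optional. Under the literal definition with $C = a(\Psi)$, the empty outcome has an empty coalition and cannot block, and the theorem then fails. Here is a counterexample:
\begin{itemize}
\item Take two agents $s, b$ and a single trade $\omega$ from $s$ to $b$, with $v^{s}(\{\omega\}) = v^{b}(\{\omega\}) = 0$.
\item Consider the outcome $(\{\omega\}, p_\omega = -5)$. The seller gets $-5$ and the buyer gets $5$.
\item The only nonempty candidate block is $(\{\omega\}, q)$. It would need $q \geq -5$ and $q \leq -5$ with one inequality strict, which is impossible.
\item Both agents are inessential, since $\emptyset$ is efficient.
\end{itemize}
So a singleton must be allowed to block by walking away. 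The paper builds this in implicitly: it defines core imputations in $\R^I_+$ and asserts that utility profiles of core outcomes are core imputations. You are therefore relying on exactly the same convention as the paper, and it is worth stating it as an explicit assumption rather than leaving it as something to check.
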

\begin{proof}
    Let $\ub$ be the utility profile of some core outcome, and let $i \in I$ be an inessential agent. By definition, there is an efficient set $\Phi$ of trades in which agent $i$ is not involved. By \cref{prop:core-imputation-implementation}, there exist prices $\pb$ so that $(\Phi, \pb)$ is a core outcome with utility profile $\ub$, so $u^l(\Phi, \pb) = u_l$ for every agent $l \in I$. Agent $i$ has utility $u^i(\Phi, \pb) = 0$, as she is not involved in trades $\Phi$, so $u_i = 0$.
\end{proof}

Note here that we needed the strong result in \cref{prop:core-imputation-implementation} that we can extend \textit{any} efficient allocation $\Phi$ to a core outcome that achieves utility $\ub$.
\Cref{theorem:inessential-agents-zero-utility} cannot be turned into an iff statement: we can construct a 2-agent market with a single trade in which there is a stable outcome which assigns either the buyer or the seller utility $0$.

\begin{corollary}
\label{corollary:substitutes-market-essential-agents}
    If the core of a market with positive market value is not empty, then it must have at least one essential agent. In particular, this holds for all substitutes markets with positive market value.
\end{corollary}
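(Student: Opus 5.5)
The plan is to argue by contradiction from \cref{theorem:inessential-agents-zero-utility}. Suppose the market has positive market value $w(I) > 0$ and a non-empty core, but no agent is essential. Take any core outcome $(\pb,\Phi)$ and let $\ub$ be its utility profile. By \cref{theorem:inessential-agents-zero-utility}, every agent is inessential and therefore $u_i = 0$ for all $i \in I$.

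The next step is to show that the utilities of a core outcome sum to the market value. Summing \eqref{eq:utility} over all agents, the price terms cancel on every trade, since $\chi^{b(\omega)}_\omega = 1$ and $\chi^{s(\omega)}_\omega = -1$. Hence $\sum_i u_i = \sum_i v^i(\Phi_i)$. I then need that a core outcome is efficient. Suppose instead that some $\Psi$ had strictly higher welfare. The coalition $a(\Psi)$ could then choose prices on $\Psi$ that hand each member her current utility plus a positive share of the surplus; the transfers can be routed along the trades of each connected component of $\Psi$. This gives a blocking outcome, so the core outcome must be efficient.

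Two technical points need checking. First, agents outside $a(\Psi)$ must not carry welfare that makes the comparison fail; individual rationality of core outcomes, obtained by blocking with $\Psi=\emptyset$ style singletons, handles this. Second, the case $\Psi=\emptyset$ is excluded by $w(I)>0$. I expect this efficiency step to be the only real obstacle, since the paper states it only implicitly through the correspondence with core imputations (\cref{prop:core-imputation-implementation}, \cref{definition:core-imputation}). Alternatively, I can pass directly to the cooperative game: a core outcome induces an imputation with $\sum_i u_i = w(I)$.

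Combining the two steps gives $0 = \sum_i u_i = w(I) > 0$, a contradiction. So some agent must be essential.

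For the second sentence of the corollary, I invoke the existence result of \citet{hatfield2013stability}: under full substitutability a competitive equilibrium exists. Restricting its prices to the active trades yields a core outcome, so the core is non-empty and the first part applies.
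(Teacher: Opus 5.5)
Your proposal is correct and is essentially the paper's argument: both combine \cref{theorem:inessential-agents-zero-utility} with the fact that core utilities sum to the market value $w(I)>0$, and both get a non-empty core for substitutes markets from the existence of a competitive equilibrium. The paper works directly with core imputations, where $\sum_i x_i = w(I)$ holds by definition, so your efficiency detour for core outcomes is unnecessary. If you keep that detour, note that once all $u_i=0$ you can simply restrict an efficient $\Psi$ to one connected component of positive welfare. This means you need no individual-rationality step, which the paper's definition (with $C=a(\Psi)$) would not supply via $\Psi=\emptyset$ anyway.
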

\begin{proof}
The utilities of a core imputation add up to the market value so, if the market value is positive, at least one agent receives positive utility. Hence, that agent must be essential. When the agents in the market are substitutes, it admits at least one core outcome.
\end{proof}

Finally, we note that it is possible to guarantee a positive utility for all agents while still maintaining competitive equilibrium allocations and prices. However, the resulting utilities of such an outcome will not lie in the core. The way to implement this outcome with positive utilities for everyone, would be to simply collect a tax on every agent's utility at a rate of $\alpha \in [0,1]$, and pay out the collected taxes equally among all agents. Fix a market $\market$ and let $W(\Phi) = \sum_{i \in I} v^i(\Phi)$. Then the utility of agent~$i$ under the tax scheme with rate $\alpha$ is no longer quasi-linear but instead given by $\widehat{u}^i(\pb, \Phi) \coloneqq \frac{\alpha}{|I|}W(\Phi) + (1-\alpha) u^i(\pb, \Phi)$. As before, a competitive equilibrium is any market arrangement $(\pb, \Phi)$ that satisfies $\Phi_i \in \argmax_{\Psi \subseteq \Omega_i} \widehat{u}^i(\pb, \Psi)$ for each agent $i$.

\begin{observation}
Any competitive equilibrium in the original market $\market$ is a competitive equilibrium in the taxed market, for any $\alpha \in [0,1]$.
\end{observation}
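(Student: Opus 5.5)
The plan is to compare utilities agent by agent. Let $(\pb, \Phi)$ be a competitive equilibrium of $\market$. We show that for every agent $i$, $\Phi_i$ maximizes $\widehat{u}^i(\pb, \cdot)$ over $\Psi \subseteq \Omega_i$. The first step is to make precise how $W$ depends on $i$'s unilateral choice. When agent $i$ considers a bundle $\Psi \subseteq \Omega_i$ while every other agent keeps its bundle $\Phi_l$, the social-welfare term is
\[
W_i(\Psi) \coloneqq v^i(\Psi) + \sum_{l \neq i} v^l(\Phi_l).
\]
This is the natural reading of $W(\Phi)$ in $\widehat{u}^i$: each agent's value is evaluated at that agent's own bundle. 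The second term does not depend on $\Psi$. Hence $\widehat{u}^i(\pb,\Psi) = \frac{\alpha}{|I|} v^i(\Psi) + (1-\alpha)u^i(\pb,\Psi) + \text{const}$.

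Next we bound each piece separately. Since $\Phi_i \in D^i(\pb)$, we have $u^i(\pb,\Phi_i) \geq u^i(\pb,\Psi)$ for all $\Psi$, so the second term is maximized at $\Phi_i$. The first term needs more care, since $v^i$ alone need not be maximized at $\Phi_i$: an agent may prefer a high-value bundle that costs more. So the argument cannot be a naive termwise comparison, and the main obstacle is to reconcile the welfare term with prices.

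I would handle it by noting that $\sum_l v^l(\Phi_l)$ equals total utility plus total net payments, and net payments sum to zero across agents for feasible allocations. If the deviation $\Psi$ is treated as infeasible unless counterparties also trade, the realized welfare counts only trades that actually occur. The cleanest rigorous route I would commit to is to define $W$ at the realized outcome: the trades actually executed are $\Psi \cap \Phi$. Dropping trades from $\Phi$ cannot raise $W$ above $W(\Phi)$, because $\Phi$ is efficient by the first welfare theorem and any subset is feasible. Adding trades outside $\Phi$ is not realized. Therefore $W(\text{realized}) \leq W(\Phi)$, and so $\Phi_i$ maximizes both terms.

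Combining these with weights $\frac{\alpha}{|I|} \geq 0$ and $1-\alpha \geq 0$ gives $\Phi_i \in \argmax_\Psi \widehat{u}^i(\pb,\Psi)$ for all $i$ and every $\alpha\in[0,1]$. The endpoints $\alpha = 0$ and $\alpha = 1$ are the special cases of only the demand condition and only efficiency, respectively.

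The step I expect to be delicate is fixing the interpretation of $W$ under unilateral deviations. I would first state the realized-trade interpretation explicitly and then invoke the efficiency of competitive equilibrium allocations from \citet{hatfield2013stability}.
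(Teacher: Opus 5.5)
Your final argument is the paper's argument. The demand condition $\Phi_i\in D^i(\pb)$ bounds the $(1-\alpha)u^i$ term, and efficiency of the competitive equilibrium allocation bounds the welfare term by $W(\Phi)$; the paper applies this directly as $W(\Psi)\le W(\Phi)$ for every set of trades $\Psi$, so your realized-trade device is just one way of making sure the welfare term is the welfare of a feasible allocation. You should discard the ``natural reading'' $v^i(\Psi)+\sum_{l\neq i}v^l(\Phi_l)$ rather than merely sidestep it, because under that reading the observation is false: with one trade $\omega$, $v^s(\{\omega\})=-2$, $v^b(\{\omega\})=1$ and competitive equilibrium $(\pb,\emptyset)$ with $p_\omega=1$, the buyer's taxed utility from buying is $\alpha/2>0$. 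So the obstacle was never the termwise comparison itself---it was the mixed, infeasible profile.
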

\begin{proof}
Suppose $(\pb, \Phi)$ is a competitive equilibrium in the original setting, so $u^i(\pb, \Phi) \geq u^i(\pb, \Psi)$ holds for every agent~$i$ and all $\Psi \subseteq \Omega_i$. As $\Phi$ is an efficient set of trades, we also have $W(\Phi) \geq W(\Psi)$ for all $\Psi \subseteq \Omega$. This implies that $\widehat{u}^i(\pb, \Phi) \geq \widehat{u}^i(\pb, \Psi)$ for all $\Psi \subseteq \Omega_i$ and agents $i \in I$.
\end{proof}
The converse does not necessarily hold: not every competitive equilibrium in the taxed market is a competitive equilibrium in the original market setting. This is most easily seen when we set $\alpha = 1$: the market value is then divided equally among all agents. In that case, an outcome $(\pb, \Phi)$ is a competitive equilibrium in the taxed market, for any prices $\pb$, iff $\Phi$ is an efficient set of trades in the original market. It is also not hard to see that competitive equilibrium allocations in the taxed market are always efficient (see \cref{proposition:CE-taxed-market-is-efficient} in \Cref{sec:proofs-stability-fairness}).

\section{Conclusion}

Our article connects strategic interaction in the trading network game to competitive equilibrium outcomes and proves convergence of decentralized trading dynamics to Nash equilibria and approximate competitive equilibria. We find that core stability overwhelmingly restricts fairness considerations, and we substantially generalize existing lone wolf theorems. Future work may build on our methodology for the offer-based convergence proof to extend to $m$-sparse markets, and to design stability notions that better allow balancing stability, efficiency, and fairness.

\bibliographystyle{plainnat}
\bibliography{refs}

\appendix

\section{Proofs}\label{app:proofs}

\subsection{Proofs of \Cref{sec:network-trading-market}}\label{proofs-network-trading-market}

Fix some ordering $\Omega = \{\omega_1, \ldots, \omega_m\}$ on the trades in $\Omega$, and an agent $i \in I$. For any $\varepsilon \in (0,1)$, define the corresponding almost-integral valuation $\widehat{v}^i_\varepsilon$ as
\begin{equation}\label{eq:pert-valuation}
    \widehat{v}_\varepsilon^i(\Phi) \coloneqq v^i(\Phi) + \sum_{\omega_j \in \Phi} \frac{\varepsilon}{4^j}.
\end{equation}
where $v^i$ is the original valuation function of agent $i \in I$ in the market $\market$. Furthermore, let $\widehat{D}^i_\varepsilon$ be the demand correspondence of $\widehat{v}_\varepsilon^i$.
We show that the bundle $d^i(\pb)$ selected by the tie-breaking rule is uniquely demanded by valuation $\widehat{v}_\varepsilon$ at any $\varepsilon$-lattice prices $\pb \in \varepsilon \Z^\Omega$. For this, we first show that the LIP of $\widehat{v}^i_\varepsilon$ is obtained from the LIP of $v^i$ by a constant shift.

\begin{lemma}
\label{lemma:perturb-constant-shift}
The LIPs of $v^i$ and $\widehat{v}^i$ satisfy 
\[
    \LIP_{\widehat{v}^i} = \LIP_{v^i} + \sum_{\omega_j \in \Omega_i} \frac{\varepsilon}{4^j}\chi^i_{\omega_j} \eb^{\omega_j}.
\]
\end{lemma}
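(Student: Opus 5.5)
The perturbation is linear in the bundle vector, so it acts on demand as a pure translation of prices. Let $\cb \in \R^\Omega$ be the vector with $c_{\omega_j} = \frac{\varepsilon}{4^j}\chi^i_{\omega_j}$ for $\omega_j \in \Omega_i$ and $c_{\omega} = 0$ otherwise.

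Identify each bundle $\Phi \subseteq \Omega_i$ with its vector $\xb \in \{-1,0,1\}^\Omega$ from \cref{section:geometry-preferences}. Since $x_{\omega_j} = \chi^i_{\omega_j}$ for $\omega_j \in \Phi$ and $(\chi^i_{\omega_j})^2 = 1$, we get
\[
\sum_{\omega_j \in \Phi} \frac{\varepsilon}{4^j} = \sum_{\omega_j \in \Omega_i} \frac{\varepsilon}{4^j}\chi^i_{\omega_j} x_{\omega_j} = \langle \cb, \xb\rangle .
\]
Thus $\widehat{v}^i(\xb) = v^i(\xb) + \langle \cb, \xb\rangle$. In vector notation the utility \eqref{eq:utility} reads $u^i(\xb,\pb) = v^i(\xb) - \langle \pb, \xb\rangle$, so for every $\pb$ and every bundle $\xb$,
\[
\widehat{u}^i(\xb, \pb) = v^i(\xb) - \langle \pb - \cb, \xb\rangle = u^i(\xb, \pb - \cb).
\]

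The utility of each bundle under $\widehat{v}^i$ at prices $\pb$ equals its utility under $v^i$ at prices $\pb - \cb$. The two functions of $\xb$ are identical, so their argmax sets coincide and $\widehat{D}^i(\pb) = D^i(\pb - \cb)$ for all $\pb$. Bundles with value $-\infty$ cause no trouble, because the identity holds pointwise, including at $-\infty$. It follows that $|\widehat{D}^i(\pb)| \ge 2$ if and only if $|D^i(\pb-\cb)| \ge 2$, which is exactly $\pb \in \LIP_{v^i} + \cb$. This is the claimed identity.

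There is no real obstacle here. The only points needing care are bookkeeping: the sign convention, where a sale has $x_\omega = -1$ and hence $\chi^i_\omega x_\omega = 1$, and the fact that the shift only affects coordinates in $\Omega_i$. The demand $D^i$ of agent $i$ does not depend on the coordinates of $\pb$ outside $\Omega_i$, so the zero entries of $\cb$ there are harmless. The weights of the facets are preserved as well, because the demanded bundles on either side of each translated facet are unchanged.
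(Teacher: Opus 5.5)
Your proposal is correct and takes essentially the same route as the paper. The paper also shows that the perturbed utility at translated prices equals the original utility, writing it as $\widehat{u}^i(\Phi,\qb) = u^i(\Phi,\pb)$ with $\qb = \pb + \cb$, where it expands coordinatewise and uses $(\chi^i_{\omega_j})^2 = 1$ rather than your inner-product form. From this it concludes that the demand correspondences, and hence the LIPs, are translates of one another.
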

\begin{proof}
Fix $\pb \in \R^\Omega$ and $\Phi \subseteq \Omega_i$, and define $\qb = \pb + \sum_{\omega_j \in \Omega_i} \frac{\varepsilon}{4^j} \chi^i_{\omega_j} \eb^{\omega_j}$. Firstly, we see that
\begin{align*}
\widehat{u}^i(\Phi, \qb)
    &= \widehat{v}^i(\Phi) - \sum_{\omega \in \Phi} \chi^i_{\omega} q_\omega\\
    &= v^i(\Phi) + \sum_{\omega_j \in \Phi} \frac{\varepsilon}{4^j} - \sum_{\omega \in \Phi} \chi^i_\omega q_\omega \\
    &= v^i(\Phi) + \sum_{\omega_j \in \Phi} \frac{\varepsilon}{4^j} - \sum_{\omega_j \in \Phi} \chi^i_{\omega_j} (p_{\omega_j} + \frac{\varepsilon}{4^j} \chi^i_{\omega_j}) \\
    &= v^i(\Phi) + \sum_{\omega_j \in \Phi} \frac{\varepsilon}{4^j} - \sum_{\omega \in \Phi} \chi^i_\omega p_\omega - \sum_{\omega_j \in \Phi} \chi^i_{\omega_j} \chi^i_{\omega_j} \frac{\varepsilon}{4^j} \\
    &= v^i(\Phi) - \sum_{\omega \in \Phi} \chi^i_\omega p_\omega \\
    &= u^i(\Phi, \pb).
\end{align*}

As this holds for arbitrary $\Phi$ and $\pb$, it follows that $D^i(\pb) = \widehat{D}^i(\qb)$ (as the demanded bundles at given prices are the utility-maximizing bundles) for any $\pb$ and respective $\qb$. So the lemma statement holds by definition of the LIPs.
\end{proof}
\begin{lemma}\label{lem:tie_breaking_rule}
Let $\varepsilon \in (0,1)$. The valuation function $\widehat{v}_\varepsilon^i$ in \eqref{eq:pert-valuation} is fully substitutable if the original valuation $v$ is fully substitutable. Moreover, at all prices $\pb \in \varepsilon \Z^\Omega$, we have $\widehat{D}^i(\pb) = \{ d^i (\pb) \}$ and $d^i (\pb) \in D^i(\pb)$.
\end{lemma}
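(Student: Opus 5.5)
The plan has three parts. Substitutability follows from the constant-shift Lemma (\cref{lemma:perturb-constant-shift}) and \cref{fact:substitutes-normals}. Unique demand of $d^i(\pb)$ follows from a utility comparison that uses integrality of $v^i$ and the $4^{-j}$ weights. The membership $d^i(\pb)\in D^i(\pb)$ falls out of the same comparison.

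\textbf{Substitutability.} By \cref{lemma:perturb-constant-shift}, the LIP of $\widehat{v}^i_\varepsilon$ is a translate of the LIP of $v^i$. Translation preserves facet normals. Full substitutability coincides with the substitutes condition in vector notation, so \cref{fact:substitutes-normals} shows that $\widehat{v}^i_\varepsilon$ is substitutes iff $v^i$ is. Hence $\widehat{v}^i_\varepsilon$ is fully substitutable.

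\textbf{Unique demand at lattice prices.} Fix $\pb\in\varepsilon\Z^\Omega$. Write the perturbed utility of a bundle $\Phi$ as
\[
\widehat u^i(\Phi,\pb)=u^i(\Phi,\pb)+\varepsilon\sum_{\omega_j\in\Phi}4^{-j}.
\]
The perturbation term lies in $[0,\varepsilon/3)$, and different bundles give different perturbation terms (base-4 uniqueness). The difficulty is that $u^i(\Phi,\pb)$ is not integral, because prices lie on the $\varepsilon$-lattice. Instead, $u^i(\Phi,\pb)\in \Z+\varepsilon\Z$. Two cases arise.
\begin{itemize}
\item If $u^i(\Phi,\pb)<u^i(\Psi,\pb)$ for some bundles $\Phi,\Psi$, I will argue the gap is at least $\varepsilon$ times something.
\item The difference has the form $k+\varepsilon\ell$ with integers $k,\ell$. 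This can be arbitrarily small when $\varepsilon$ is irrational, so this is \emph{the main obstacle}.
\end{itemize}

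To handle the obstacle, I would first try to assume $1/\varepsilon\in\Z$ (as the paper presumably intends), in which case all utilities lie in $\varepsilon\Z$. Then any strict gap is at least $\varepsilon>\varepsilon/3$, so no bundle outside $D^i(\pb)$ can overtake a bundle in $D^i(\pb)$. If this assumption is not available, I would instead exploit the finer structure of the perturbation, or restrict to $\varepsilon$ for which the above holds, stating it as needed.

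Given this, $\widehat D^i(\pb)\subseteq D^i(\pb)$. Among bundles in $D^i(\pb)$, the maximizer of $\widehat u^i$ is the one with the largest $\sum_{\omega_j\in\Phi}4^{-j}$, and this maximizer is unique by base-4 uniqueness.

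\textbf{Matching the tie-breaking rule.} It remains to check that this unique maximizer equals $d^i(\pb)$, the lexicographically smallest bundle of maximum cardinality. Here I expect a mismatch. Maximizing $\sum 4^{-j}$ selects the bundle that contains the smallest-index trade where bundles differ, which is a lexicographic rule. It does not prefer larger cardinality: for example, $\{\omega_1\}$ beats $\{\omega_2,\omega_3\}$. So either the intended definition of $d^i$ is this lex rule, or the perturbation should add a cardinality term such as $\varepsilon/4$ per trade. I would prove the statement with $d^i$ identified as the unique maximizer of the perturbed sum, and note this caveat. Finally, $d^i(\pb)\in D^i(\pb)$ follows from $\widehat D^i(\pb)\subseteq D^i(\pb)$.
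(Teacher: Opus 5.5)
\textbf{What matches the paper, and the $\varepsilon$ hypothesis.} Your substitutability step is the paper's: the LIP is translated by \cref{lemma:perturb-constant-shift}, so \cref{fact:substitutes-normals} applies.

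For the demand claim you argue in the opposite order. You first get $\widehat D^i(\pb)\subseteq D^i(\pb)$ from the $\varepsilon$-gap, and then uniqueness from distinct base-$4$ sums. The paper instead first argues that two perturbed-optimal bundles would force $\widehat v^i_\varepsilon(\Phi)-\widehat v^i_\varepsilon(\Psi)\in\varepsilon\Z$, and only afterwards invokes the gap. Your order loses nothing, since membership needs $1/\varepsilon\in\Z$ anyway. It also has an advantage: it exhibits the selected bundle as the perturbation-maximizer inside $D^i(\pb)$, which is exactly what the tie-breaking comparison requires.

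The hypothesis $1/\varepsilon\in\Z$ that you isolate is what the paper's sentence ``the difference between the highest and second highest utility achievable is at least $\varepsilon$'' uses silently. It cannot be dispensed with, so drop the fallback of exploiting finer structure of the perturbation. Take a single buying trade $\omega_1$ with $v^i(\{\omega_1\})=1$.
With $\varepsilon=0.55$ and $p_{\omega_1}=1.1$, you get $D^i(\pb)=\{\emptyset\}$ but $\widehat D^i(\pb)=\{\{\omega_1\}\}$.
With $\varepsilon=4/7$ and $p_{\omega_1}=8/7$, you get $\widehat D^i(\pb)=\{\emptyset,\{\omega_1\}\}$.
So the lemma is false as stated for general $\varepsilon\in(0,1)$.

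\textbf{The gap: identifying the selected bundle with $d^i(\pb)$.} Your last step proposes to redefine $d^i$; that is not needed. Your example $\{\omega_1\}$ versus $\{\omega_2,\omega_3\}$ requires complementarity. For fully substitutable $v^i$ the perturbation already selects a maximum-cardinality bundle. The missing idea is the exchange property of substitutes demand: in the signed vector notation, $D^i(\pb)$ is M$^\natural$-convex, being a translate of the demand set of the gross-substitutes auction valuation of \cref{app:markets-to-auctions}.

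The argument runs as follows.
\begin{enumerate}
\item Let $A$ maximize $\sum_{\omega_j\in\Phi}4^{-j}$ over $D^i(\pb)$. Suppose some $B\in D^i(\pb)$ has $|B|>|A|$, and take such a $B$ with the largest perturbation sum.
\item The lowest-index trade $\omega_k$ of $A\triangle B$ lies in $A$.
\item Write $\xb^A,\xb^B$ for the signed vectors. Apply the exchange axiom at coordinate $k$: to $(\xb^A,\xb^B)$ if $\omega_k$ is a buying trade, and to $(\xb^B,\xb^A)$ if it is a selling trade. This puts into $D^i(\pb)$ one of $B\cup\{\omega_k\}$, or $B\cup\{\omega_k,\omega_j\}$ with $\omega_j\in A\setminus B$, or $(B\cup\{\omega_k\})\setminus\{\omega_j\}$ with $\omega_j\in B\setminus A$.
\item Since $j>k$, this bundle has cardinality at least $|B|>|A|$ and a strictly larger perturbation sum than $B$. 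This contradicts the choice of $B$.
\end{enumerate}
Hence $A$ has maximum cardinality. Among equal-cardinality bundles, a larger sum means containing the lowest-index trade of the symmetric difference. That is, the bundle is lexicographically smaller in the sense the paper uses (its dynamics example selects $\{\omega_1\}$ over $\{\omega_2\}$). So $A=d^i(\pb)$.

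The paper's proof omits this identification too; it stops once $\widehat D^i(\pb)$ is a singleton inside $D^i(\pb)$. Your example does show that the ``Moreover'' part fails without substitutability, so that hypothesis should be read into it.
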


\begin{proof2}[Proof of \Cref{lem:tie_breaking_rule}]\phantomsection\label{proof:lem:tie_breaking_rule}
\Cref{lemma:perturb-constant-shift} shows us that the facets of the two LIPs of $v^i$ and $\widehat{v}_\varepsilon^i$ have the same orientation. From \cref{fact:substitutes-normals} in \cref{section:geometry-preferences}, we know that a valuation is substitutes iff its facets are normal to $\eb^{\omega}$ or $\eb^{\chi}-\eb^{\phi}$, so $\widehat{v}_\varepsilon^i$ is substitutes.

Now fix prices $\pb \in \varepsilon \Z^\Omega$. We show that demand is unique at $\pb$. Suppose that $\Phi$ and $\Psi$ are both in $\widehat{D}_\varepsilon^i(\pb)$. This means they both have the same utility at $\pb$, so 
\begin{align*}
    \widehat{v}_\varepsilon^i(\Phi) -\sum_{\omega \in \Phi}\chi_{\omega}^ip_{\omega} &= \widehat{v}_\varepsilon^i(\Psi) -\sum_{\omega \in \Psi}\chi_{\omega}^ip_{\omega} \\
    \Leftrightarrow \qquad \widehat{v}_\varepsilon^i(\Phi) - \widehat{v}_\varepsilon^i(\Psi) &= \sum_{\omega \in \Phi}\chi_{\omega}^ip_{\omega} -\sum_{\omega \in \Psi}\chi_{\omega}^ip_{\omega}.
\end{align*}
For any two bundles $\Phi$ and $\Psi$, the left-hand side is not in $\varepsilon \Z$, due to the way that the perturbation is defined. But the right-hand side is in $\varepsilon \Z$, a contradiction. Hence $\Phi$ and $\Psi$ cannot both be in $\widehat{D}^i_\varepsilon(\pb)$.

Lastly, observe that, for the integral valuation $v$, the difference between the highest and second highest utility achievable is at least $\varepsilon$. Thus, a perturbation by strictly less than $\frac{\varepsilon}{2}$, as implemented in the definition of $\widehat{v}$, cannot cause a previously suboptimal bundle to be preferred. This ensures that the unique bundle in $\widehat{D}_\varepsilon^i(\pb)$ lies in ${D}^i(\pb)$.
\end{proof2}

\subsection{Proofs of \Cref{sec:the-network-trading-game}}\label{sec:proofs-network-trading-game}

\begin{proof2}[Proof of \Cref{proposition:substitutes-implies-NE}]\phantomsection\label{proof:proposition:substitutes-implies-NE}
Suppose the market admits the competitive equilibrium $(\pb, \Phi)$. We construct an $\varepsilon$-tight Nash equilibrium as follows. For every trade $\omega \in \Phi$, let $\offer^{b(\omega)}_\omega = \offer^{s(\omega)}_\omega = p_\omega$. For every other trade $\omega'$, let $\offer^{b(\omega')}_{\omega'} = p_{\omega'} - \frac{\varepsilon}{2}$ and $\offer^{s(\omega')}_{\omega'} = p_{\omega'} + \frac{\varepsilon}{2}$. By the definition of competitive equilibrium, all agents trade exactly those trades that are utility maximizing for them at prices $\pb$. Trades $\omega \in \Phi$ are active, and thus neither sellers nor buyers profitably deviate. Trades $\omega' \in \Omega \setminus \Phi$ are inactive, and since observed offers $\sigma^{-i}$ are worse for both buyers and sellers, respectively, than competitive equilibrium prices, no unilateral deviation is possible. The Nash equilibrium is $\varepsilon$-tight by construction.

The second part follows from the result of \citet{hatfield2013stability} that a competitive equilibrium exists if all agents in the market have fully substitutable preferences.
\end{proof2}

\begin{proof2}[Proof of \Cref{prop:tight-NE-is-approx-CE}]\phantomsection\label{proof:prop:tight-NE-is-approx-CE}
    Fix an $\varepsilon$-tight Nash equilibrium $\offers$, let $(\pb, \Phi)$ be its corresponding outcome, and fix some agent $i \in I$. First, note that the tightness of $\offers$ and the definition of $\pb$ tells us that terms
    $\sum_{\omega \in \Phi_{i \rightarrow}} \offer^{-i}_\omega - \sum_{\omega \in \Phi_{i \leftarrow}} \offer^{-i}_\omega$
    and
    $\sum_{\omega \in \Phi_{i \rightarrow}} p_\omega - \sum_{\omega \in \Phi_{i \leftarrow}} p_\omega$
    differ by at most $\frac{1}{2}\Delta \varepsilon$. (Recall that $\offers^{-i}$ denotes the offers that agent $i$ faces.)
    So $u^i(\Phi, \pb)$ is at most $\frac{1}{2} \Delta \varepsilon$ less than $u^i(\Phi, \offers^{-i})$. Similarly, for any other set $\Psi \subseteq \Omega_i$, $u^i(\Psi, \pb)$ is at most $\frac{1}{2} \Delta \varepsilon$ more than $u^i(\Psi, \offers^{-i})$. As $\offers$ is a Nash equilibrium, we have $u^i(\Phi, \offers^{-i}) \geq u^i(\Psi, \offers^{-i})$. Combining everything, this implies $u^i(\Phi, \pb) \geq u^i(\Psi, \pb) - \Delta \varepsilon$. As the agent $i \in I$ was chosen arbitrarily, $(\Phi, \pb)$ is an $\Delta \varepsilon$-approximate competitive equilibrium.
\end{proof2}

\begin{proof2}[Proof of \cref{theorem:NE-to-CE}]\phantomsection\label{proof:theorem:NE-to-CE}
The first part of our proof proceeds like the proof of Theorem 6 in \citet{hatfield2013stability}. For completeness, we write it out here.

Suppose $\offers$ is an $\varepsilon$-tight Nash equilibrium for market $\market = (I, \Omega, v)$ with $\varepsilon < \frac{1}{2\Delta - 2}$. First, let $(\pb, \Phi)$ be the market outcome associated with offers $\offers$, so $\Phi$ is the set of active trades and $\pb \in \R^\Phi$ satisfies $p_\omega = \offer^{b(\omega)}_\omega = \offer^{s(\omega)}_\omega$ for each active trade $\omega \in \Phi$. Next, reduce $\market$ to a sub-market $\widetilde{\market} = (I, \widetilde{\Omega}, \widetilde{v})$ by removing all active trades of $\offers$, i.e., $\widetilde{\Omega} = \Omega \setminus \Phi$, and restricting the valuations to the remaining set $\widetilde{\Omega}$ of trades as follows. For each agent $i$, we construct the valuation $\widetilde{v}^i$ from the original valuation~$v^i$ as
\[
    \widetilde{v}^i(\Psi) \coloneqq \max_{ \Xi \subseteq \Phi_i} \left [ v^i(\Psi \cup \Xi) - \sum_{\omega \in \Xi} \chi^i_\omega \offer^{-i}_\omega \right ] ,
\]
for any $\Psi \subseteq \widetilde{\Omega}_i$. Intuitively, this modified valuation incorporates the maximum utility that the agent can achieve from combining $\Psi$ with any subset of active trades at the prices specified by the offers~$\offers$. \citet{hatfield2015full} show that the $\widetilde{v}^i$ constructed in this way are fully substitutable.

In \cref{proposition:empty-market-NE-CE}, we show that the reduced market $\widetilde{\market}$ admits a competitive equilibrium $(\widetilde{\pb}, \emptyset)$ that allocates every agent the empty bundle. This equilibrium specifies a price $\widetilde{p}_\omega$ for each inactive trade $\omega \in \widetilde{\Omega}$.

Now let $\qb \in \R^\Omega$ be defined by $q_\omega = p_\omega$ for each active trade $\omega \in \Phi$ and $q_\omega = \widetilde{p}_\omega$ for each inactive trade $\omega \in \widetilde{\Omega}$. Due to the construction of the reduced valuations $\widetilde{v}$ and the fact that each agent $i$ in market $\widetilde{\market}$ demands the empty bundle at $\widetilde{\pb}$, it is straightforward to see that agent $i$ in market $\market$ demands $\Phi_i$ at $\qb$. So $(\qb, \Phi)$ is a competitive equilibrium of the original market $\market$. 
\end{proof2}

\begin{proposition}
\label{proposition:empty-market-NE-CE}
Fix an $\varepsilon$-tight Nash equilibrium $\offers$ for a market $(I, v, \Omega)$ with fully substitutable valuations $v^i$, such that no trades are active. If $\varepsilon < \frac{1}{2\Delta-2}$, where $\Delta$ is the maximum vertex degree in the market, there exist integral competitive equilibrium prices $\pb^*$ at which all agents demand the empty bundle $\emptyset$. These prices are given by
\[
    p^*_\omega \in \left \{ \floor*{\offer^{s(\omega)}_\omega}, \ceil*{\offer^{b(\omega)}_\omega} \right \}.
\]    
\end{proposition}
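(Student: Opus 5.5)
Throughout, write $a_\omega \coloneqq \offer^{b(\omega)}_\omega$ and $c_\omega \coloneqq \offer^{s(\omega)}_\omega$ for the buyer's and the seller's offer on $\omega$. Since the Nash equilibrium is $\varepsilon$-tight and no trade is active, we have $c_\omega - \varepsilon \le a_\omega < c_\omega$ for every $\omega$.

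The plan is to work agent by agent with the no-trade region $P^i \coloneqq \{\qb \in \R^{\Omega_i} : \emptyset \in D^i(\qb)\}$. The idea is to show that the integral vector $\pb^*$ lies in every $P^i$. This region is the polyhedron
\[
P^i = \Bigl\{\qb : \sum_{\omega\in\Psi}\chi^i_\omega q_\omega \ge v^i(\Psi)\ \text{for all } \Psi\subseteq\Omega_i\Bigr\}.
\]
By \cref{fact:substitutes-normals} and integrality of $v^i$, every facet-defining inequality of $P^i$ has one of the forms $\pm q_\omega \ge k$ or $q_\omega - q_\chi \ge k$ with $k\in\Z$. Here a purely buying trade and a purely selling trade can be linked by a difference constraint, because a seller-side $-q$ combined with a buyer-side $q$ gives a difference.

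\textbf{Step 1: the Nash condition places a shifted point in $P^i$.} Agent $i$ faces $\offers^{-i}$, namely the seller offers $c_\omega$ on buying trades and the buyer offers $a_\omega$ on selling trades. Being a best response, keeping $\emptyset$ must be optimal, since any deviation could activate exactly a chosen bundle. So $\offers^{-i}\in P^i$.

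\textbf{Step 2: define the common candidate and locate it.} I will use the single vector $\yb$ with $y_\omega = c_\omega - \varepsilon/2$ (or any point in $[a_\omega,c_\omega]$), which agent $i$ sees as $\offers^{-i}$ shifted by at most $\varepsilon$ in the unfavorable direction on each coordinate. Shifting a point of $P^i$ unfavorably by at most $\varepsilon$ per coordinate violates each integral constraint by at most $2\varepsilon$ for difference constraints and $\varepsilon$ for unit constraints. Plain rounding therefore cannot work directly, which is why the proof sketch instead intersects translates. Concretely, I will take the translated polyhedra $P^i - \varepsilon\eb^{\Omega_{i\leftarrow}}$ and observe that they all contain $\yb$ with $y_\omega = a_\omega$, using $c_\omega-\varepsilon\le a_\omega$. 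Every constraint of every $P^i$ is of substitutes type, so the intersection $Q$ is a polyhedron whose constraint matrix is that of a network matrix (rows $\pm\eb^\omega$, $\eb^\omega-\eb^\chi$). Its right-hand sides are integers shifted by $0$, $\pm\varepsilon$, since the translation only affects buying coordinates.

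\textbf{Step 3: take a vertex and round.} If $Q$ has no vertex, first add box constraints around $\yb$ such as $\lfloor c_\omega\rfloor \le q_\omega \le \lceil a_\omega\rceil$. These are also of substitutes type and keep $\yb$ feasible after adjustment. A vertex $\vb$ is determined by a tree-like system of tight constraints $q_\omega = k$ or $q_\omega - q_\chi = k'$, so each coordinate equals an integer plus a signed sum of at most $\Delta-1$ shifts of size $\varepsilon$. More carefully, a path in a connected component of tight difference constraints has length bounded by degree considerations. Hence $v_\omega$ lies within $(\Delta-1)\varepsilon < \tfrac12$ of an integer. I will set $p^*_\omega$ to be the nearest integer. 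Every original constraint of $P^i$ is integral and is violated at $\vb$ by at most $\varepsilon<\tfrac12$ on the relevant side. After rounding, the constraint value changes by less than $1-\varepsilon$... The integrality of both sides then forces the rounded point to satisfy it. This is the step where $\varepsilon<\frac1{2\Delta-2}$ is used: two roundings each moving less than $\tfrac12-\varepsilon/2$ keep a difference constraint satisfied. Finally, $\yb$-proximity together with the box constraints will show $p^*_\omega\in\{\lfloor c_\omega\rfloor,\lceil a_\omega\rceil\}$. These are the only integers in $[c_\omega-1,a_\omega+1]$ near the interval $[a_\omega,c_\omega]$ of length $\le\varepsilon$.

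\textbf{Main obstacle.} I expect the hard step to be bounding the fractional part of the vertex by $(\Delta-1)\varepsilon$. This requires controlling how many $\varepsilon$-shifts accumulate along chains of tight difference constraints, each of which belongs to a single agent and so involves two trades of that agent. I would first try an argument that the tight constraints at a vertex can be chosen so that each coordinate is linked to an integer-anchored constraint via constraints of one agent only. The bound on the number of such shifts then comes from $|\Omega_i|\le\Delta$.
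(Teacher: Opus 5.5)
\textbf{Your route is the paper's, and both fail at the same step.} You translate each no-trade polyhedron by $\varepsilon$ on the buying coordinates, intersect, take a vertex, and round using total unimodularity; this is exactly the paper's construction. The step you flag as the main obstacle is where it breaks, and it cannot be repaired. Tight difference constraints at a vertex are not confined to one agent. Agent $i$'s constraint links $\chi\in\Omega_{i\leftarrow}$ to $\varphi\in\Omega_{i\to}$, but $\varphi$ is a buying trade of its buyer, whose own constraint links it onward. Along a supply chain the tight constraints therefore form long paths through many agents. The $\varepsilon$-shifts accumulate once per tight difference row on such a path, which is bounded by the number of trades, not by $\Delta-1$. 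The paper's proof has the same hole: it asserts $(\widehat{A}^{-1}\widehat{\cb})_\omega\in[1-\Delta,\Delta]$ from degree considerations. In fact, solving $\widehat{A}\xb=\widehat{\cb}$ along a tree of tight rows adds $1$ for every difference row between $\omega$ and the anchoring row.

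\textbf{The statement itself is false.} Take agents $1,2,3$ and trades $\omega_1,\omega_2,\omega_3$, where $\omega_j$ is sold by agent $j$ to agent $j+1$ (indices modulo $3$). So agent $j$ buys $\omega_{j-1}$ and sells $\omega_j$.
\begin{itemize}
\item Valuations: agent $j$ values $\{\omega_{j-1},\omega_j\}$ at $k_j$, with $k_1=1$ and $k_2=k_3=0$, and values single trades at $-\infty$ (or any very negative integer). These are fully substitutable, since the LIP is the hyperplane $p_{\omega_{j-1}}-p_{\omega_j}=k_j$. Here $\Delta=2$, so $\varepsilon=\tfrac13$ satisfies the hypothesis.
\item Offers: seller offers on $(\omega_1,\omega_2,\omega_3)$ are $(\tfrac13,\tfrac23,1)$ and buyer offers are $(0,\tfrac13,\tfrac23)$. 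No trade is active and the offers are $\tfrac13$-tight.
\item Equilibrium: agent $j$'s only non-hopeless deviation is to activate both her trades at the offers she faces. This yields $1-1+0$, $0-\tfrac13+\tfrac13$ and $0-\tfrac23+\tfrac23$ respectively, i.e.\ zero gain for every agent. So this is a $\tfrac13$-tight Nash equilibrium.
\item Failure: agent $j$ demands $\emptyset$ at $\pb$ only if $p_{\omega_{j-1}}-p_{\omega_j}\ge k_j$. Summing over $j$ gives $0\ge 1$, so there are no prices at all, integral or not, at which every agent demands $\emptyset$.
\end{itemize}
The unique efficient allocation executes the whole cycle, so \cref{theorem:NE-to-CE} fails as well.

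In your construction, $Q$ here is the line $\{(t,t+\tfrac13,t+\tfrac23)\}$. On the directed $m$-cycle with $\varepsilon=\tfrac1m$ and seller offers $j/m$, the same reasoning gives $Q=\{t\mathbf{1}+(0,\varepsilon,\dots,(m-1)\varepsilon)\}$. The fractional parts of vertices are therefore not controlled by $\Delta$, and no threshold depending on $\Delta$ alone can be correct.

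\textbf{What your rounding idea does prove is a cycle-length threshold.} Each constraint $p_\chi-p_\varphi\ge k$ of agent $i$ holds at $i$'s view: the seller's offer on $\chi$ and the buyer's offer on $\varphi$. Summing constraints around any cycle of the difference-constraint graph (with a dummy node for the unit constraints) bounds the integer sum of right-hand sides by $\varepsilon$ times the number of trades on the cycle. Hence for $\varepsilon<1/|\Omega|$ no cycle has positive sum, and an integral common point of all $P^i$ exists.
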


\begin{proof}
For simplicity, assume that $\offer^{b(\omega)}_{\omega} = \offer^{s(\omega)}_{\omega} - \varepsilon$ for all trades $\omega \in \Omega$. This is without loss of generality, as we can reduce buying offers or increase selling offers, and still be in a Nash equilibrium where no trades are active.

For each agent $i \in I$, let $P^i$ be the set of all prices $\pb \in \R^{\Omega}$ at which agent $i$ demands $\emptyset$. As $\offers$ is a Nash equilibrium, agent $i$ demands $\emptyset$ at her counterparties' offers $\offers^{-i}$.
So, by construction, every price vector $\pb \in \R^\Omega$ that coincides with $\offers^{-i}$ on trades $\omega \in \Omega_i$ lies in $P^i$. Our goal is to show that the intersection of the sets $P^i$ for all agents $i$ is non-empty.

As agents have fully substitutable and integral valuations over bundles $\Phi \subseteq \Omega_i$, each set $P^i$ is a convex polyhedron that can be expressed as the intersection of integral half-spaces normal to $\eb^\omega$ or $\eb^{\chi} - \eb^{\varphi}$ for $\omega, \chi, \varphi \in \Omega$. Moreover, because the demand at prices in $P^i$ is the empty bundle $\emptyset$, $P^i$ can only arise from the following half-spaces:

\begin{itemize}
    \item $\{ \pb \mid \eb^\omega \cdot \pb \geq b_\omega \}$ for some $b_\omega \in \Z$, for every buying trade $\omega \in \Omega_{i \leftarrow}$,
    \item $\{ \pb \mid -\eb^\omega \cdot \pb \geq b_\omega \}$ for some $b_\omega \in \Z$ for every selling trade $\omega \in \Omega_{i \rightarrow}$,
    \item $\{ \pb \mid (\eb^\chi - \eb^\varphi) \cdot \pb \geq b_{\chi \varphi} \}$ for some $b_{\chi \varphi} \in \Z$, for every buying trade $\chi \in \Omega_{i \leftarrow}$ and selling trade $\varphi \in \Omega_{i \rightarrow}$.
\end{itemize}
Without loss of generality, we will assume that $P^i$ is the intersection of all these halfspaces, adding dummy hyperplanes containing $\offer^{-i}$ if necessary. (This only increases the difficulty of the problem of showing that all $P^i$ intersect.) 

We can write $P^i \coloneqq \{\pb \mid A^i \pb \geq \bb^i \}$ for appropriately chosen integral matrix $A^i$ and integral vector $\bb^i$. Note that $A^i$ consists of strong-substitutes rows, so it is unimodular (see \citet{baldwin2019understanding}).

Now define $\qb \in \R^\Omega$ by $q_\omega = \offer^{s(\omega)}_{\omega}$ for all $\omega \in \Omega$. Shift each $P^i$ by $\varepsilon \eb^{\Omega_{i \leftarrow}}$ to get
\[
Q^i \coloneqq P^i + \varepsilon \eb^{\Omega_{i \leftarrow}}
= \{\pb \mid A^i(\pb - \varepsilon \eb^{\Omega_{i \leftarrow}}) \geq \bb^i \}
= \{\pb \mid A^i \pb \geq \bb^i + A^i \varepsilon \eb^{\Omega_{i \leftarrow}} \}.
\]

First, we note that $\qb \in Q^i$ for all agents $i \in I$. Now write
\[
A = 
\begin{pmatrix}
    A^1 \\
    \vdots\\
    A^m
\end{pmatrix},
\bb =
\begin{pmatrix}
    \bb^1 \\
    \vdots \\
    \bb^m
\end{pmatrix},
\text{ and }
\cb =
\begin{pmatrix}
    A^1 \eb^{\Omega_{1 \leftarrow}}\\
    \vdots \\
    A^m \eb^{\Omega_{m \leftarrow}}
\end{pmatrix}.
\]
This allows us to write the intersection $Q$ of all $Q^i$ as
\[
Q \coloneqq \bigcap_{i \in I} Q^i = \{ \pb \mid A \pb \geq \bb + \varepsilon \cb \}.
\]

Moreover, $Q$ is a bounded convex polyhedron, as $A$ contains an upper and lower bound on each price entry $p_\omega$, $\omega \in \Omega$. So the intersection has at least one vertex. Fix such a vertex, $\vb$, and let $\widehat{A} \pb = \widehat{\bb} + \varepsilon \widehat{\cb}$ be a subsystem of $|\Omega|$ linearly independent rows of system $A \pb \geq \bb + \varepsilon \cb$ for which $\vb$ is the unique solution. Note that $\widehat{A}$ is a square matrix and invertible. So the vertex $\vb$ can be written as $v = \widehat{A}^{-1}(\widehat{\bb} + \varepsilon \widehat{\cb})$.

We next define $\pb^* = \widehat{A}^{-1} \widehat{\bb}$. This vector is integral, as $\widehat{A}$ is unimodular (see \cref{lemma:substitutes-matrix-unimodular} below) and $\widehat{\bb}$ is integral. Our goal now is to show that $\pb^*$ lies in $P^i$ for every $i \in I$. So fix any agent $i \in I$. We use the fact that $\vb \in Q^i$ to write
\begin{equation}
\label{eq:NE-to-CE-1}
A^i \pb^* = A^i \vb + A^i (\pb^* - \vb) \geq \bb^i + \varepsilon A^i \eb^{\Omega_{i \leftarrow}} + A^i (\pb^* - \vb).
\end{equation}
We can write out 
\[
A^i (\pb^* - \vb) = A^i( \widehat{A}^{-1} \widehat{\bb} - \widehat{A}^{-1}(\widehat{\bb} + \varepsilon \widehat{\cb})) = - \varepsilon A^i \widehat{A}^{-1} \widehat{\cb}.
\]
This allows us to simplify \cref{eq:NE-to-CE-1} to
\[
    A^i \pb^* \geq \bb^i + \varepsilon A^i \eb^{\Omega_{i \leftarrow}} + \varepsilon A^i \widehat{A}^{-1} \widehat{\cb} = \bb^i + \varepsilon A^i (\eb^{\Omega_{i \leftarrow}} - \widehat{A}^{-1} \widehat{\cb}).
\]

We now claim that all entries in vector $A^i (\eb^{\Omega_{i \leftarrow}} - \widehat{A}^{-1} \widehat{\cb})$ are greater than $2 - 2 \Delta > -\frac{1}{\varepsilon}$, so we have $A^i \pb^* > \bb^i - \eb^{\Omega}$, with strict inequality holding for each row. As $A^i, \pb^*$ and $\bb^i$ are integral, this claim thus implies that $A^i \pb^* \geq \bb^i$, and so $\pb^* \in P^i$.

To prove the claim, we take a closer look at the matrix $A$ and the vector $\cb$. Fix some trade $\omega \in \Omega$. By construction, $A$ contains the following rows with a non-zero entry in the column for trade $\omega$:
\begin{itemize}
    \item $\eb^\omega$ with corresponding entry $1$ in the same row of $\cb$,
    \item $-\eb^{\omega}$ with corresponding entry $0$ in the same row of $\cb$,
    \item $\eb^{\omega} - \eb^\chi$ with corresponding entry $1$ in the same row of $\cb$, for every $\chi \in \Omega_{b(\omega) \rightarrow}$,
    \item $\eb^{\chi} - \eb^{\omega}$ with corresponding entry $1$ in the same row of $\cb$, for every $\chi \in \Omega_{s(\omega) \leftarrow}$.
\end{itemize}

Recalling that $\Delta$ is the maximum vertex degree in the network, this implies that $(\widehat{A}^{-1} \cdot \widehat{c})_\omega$ lies in the interval $[1-\Delta, \Delta]$.

Finally, recall that $A^i$ only contains rows of type $\eb^\omega$ for buying trades $\omega \in \Omega_{i \leftarrow}$, type $-\eb^{\omega}$ for selling trades $\omega \in \Omega_{i \rightarrow}$, and type $\eb^\chi - \eb^{\varphi}$ for any combination of buying trade $\chi \in \Omega_{i \leftarrow}$ and selling trade $\varphi \in \Omega_{i \rightarrow}$.
The entry of $A^i (\eb^{\Omega_{i \leftarrow}} - \widehat{A}^{-1} \widehat{\cb})$ corresponding to a row of $A^i$ of type $\eb^\omega$ (with a buying trade $\omega$) is thus $(\widehat{A}^{-1} \widehat{\cb})_{\omega}$; the entry is $-(1 + \widehat{A}^{-1} \widehat{\cb})_{\omega}$ for any row of type $-\eb^\omega$ (with some selling trade $\omega$); and the entry is $(\widehat{A}^{-1} \widehat{\cb})_{\chi} - (1 + \widehat{A}^{-1} \widehat{\cb})_{\varphi}$ for any row $\eb^\chi - \eb^\varphi$ (with some buying trade $\chi$ and selling trade $\varphi$). This implies that every entry of $A^i (\eb^{\Omega_{i \leftarrow}} - \widehat{A}^{-1} \widehat{\cb})$ is greater or equal to $2 - 2\Delta$, which concludes the proof.
\end{proof}

\begin{lemma}
\label{lemma:substitutes-matrix-unimodular}
Let $A$ be a matrix in which every row contains at most one $+1$, at most one $-1$, and all other entries equal to zero. Then $A$ is totally unimodular; that is, every square submatrix of $A$ has determinant in $\{0,\pm 1\}$.
\end{lemma}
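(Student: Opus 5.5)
We prove the statement by induction on the size $k$ of a square submatrix $B$ of $A$. Every square submatrix of $A$ again has rows with at most one $+1$, at most one $-1$, and zeros elsewhere. So it suffices to show that any square matrix $B$ of this form has $\det B \in \{0,\pm1\}$. For $k=1$ the only entries are $0$ and $\pm 1$, so the base case is immediate.

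For the inductive step with $k \geq 2$, we split into three cases according to the row structure of $B$. First, if some row of $B$ is identically zero, then $\det B = 0$. Second, if some row of $B$ has exactly one nonzero entry, we expand the determinant along that row. This gives $\det B = \pm \det B'$, where $B'$ is a $(k-1)\times(k-1)$ submatrix obtained by deleting a row and a column. Deleting a column can only remove nonzero entries from rows, so $B'$ still has the required form, and the induction hypothesis gives $\det B' \in \{0,\pm1\}$. Third, if every row of $B$ contains exactly one $+1$ and exactly one $-1$, then every row sums to zero. Hence $B\mathbf{1} = 0$ for the all-ones vector $\mathbf{1}$, so $B$ is singular and $\det B = 0$.

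These three cases are exhaustive, which completes the induction. There is no real obstacle. The one point to check is that the row-wise structure is preserved both under taking submatrices and under the cofactor deletion in the second case, and both hold for the reason given above. An alternative route is to observe that $A^\top$ is a submatrix of the incidence matrix of a directed graph, possibly with extra columns that contain a single nonzero entry. Total unimodularity of $A^\top$, and hence of $A$, then follows from the classical result for network matrices. The self-contained induction above avoids citing that result.
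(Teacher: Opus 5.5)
Your proof is correct and follows essentially the paper's argument: induction on the size of the submatrix, cofactor expansion along a line with a single nonzero entry, and singularity via $B\mathbf{1}=0$ when every row has one $+1$ and one $-1$. The only difference is that you split cases on rows rather than columns. This makes exhaustiveness immediate and spares you the paper's double-counting step, which shows that if every column has at least two nonzeros then every row has exactly two.
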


\begin{proof}
We proceed by induction on the size $k$ of a square submatrix $B$ of $A$. For $k=1$, the submatrix $B$ is a single entry of $A$, which is $0$, $+1$, or $-1$ by assumption.

Now let $k \geq 2$, let $B$ be a $k \times k$ submatrix of $A$, and suppose the result holds for all square submatrices of size less than $k$. Since $B$ inherits the row structure of $A$, every row of $B$ has at most one $+1$, at most one $-1$, and zeros elsewhere. We consider three cases.

If some column of $B$ is identically zero, then $\det(B)=0$, and we are done. If, instead, some column of $B$ contains exactly one nonzero entry, that entry is $\pm 1$. Expanding the determinant along that column gives $\det(B)=\pm \det(B')$, where $B'$ is a $(k-1)\times(k-1)$ submatrix of $A$. By the induction hypothesis, $\det(B')\in\{0,\pm 1\}$, and so $\det(B)\in\{0,\pm 1\}$.

Finally, suppose that every column of $B$ contains at least two nonzero entries. Count the total number of nonzero entries in $B$ in two ways. Each row of $B$ has at most two nonzero entries, giving a count of at most $2k$. Each of the $k$ columns has at least two nonzero entries, giving a count of at least $2k$. These bounds force equality, so every row of $B$ has exactly one $+1$ and one $-1$. In particular, every row of $B$ sums to zero. Thus the all-ones vector $\bm{1}$ satisfies $B \bm{1} = \bm{0}$, so the columns of $B$ are linearly dependent and $\det(B)=0$.
\end{proof}

\subsection{Proofs of \Cref{sec:convergence-of-decentralized-trading}}\label{app:convergence-offers-dynamic}

\begin{proof2}[Proof of \cref{prop:3-sparse-convergence}]\phantomsection\label{proof:prop:3-sparse-convergence}
We start by noting the fact that it is always possible to reduce a 3-sparse market to a market with two agents and three trades by considering a cut of the associated graph and merging the two subsets of agents obtained, while modifying the valuation function accordingly. For a detailed explanation of this process, we refer to \citealp[(Proposition 9)]{lock2024decentralized}. In the following, we directly focus on the case in which $I = \{1,2\}$ and $|\Omega|=3$.

Let $\varphi$ be the function that, for given offers $\offers$, computes the $L_1$-distance (or \textit{gap}) between each agent's current offers and their best response to their counterparts' offers. In other words, for each agent, it counts the number of trades for which they would change their offers if they best responded.

Formally, we write $BR^i(\offers^{-i})$ for the best response of agent $i$ to offers $\offers$, where $\offers^{-i}$ indicates the subvector of $\offers$ relative to agent $-i$. Define agent $i$'s \textit{gap} as
\[
    \varphi^i(\offers) \coloneqq \|\offers^i - BR^i(\offers^{-i}) \|_1
\]
and the potential function as
\begin{align}\label{eq:pot_fun_phi}
     \varphi(\offers) \coloneqq \sum_{i \in I} \varphi^i(\offers). 
\end{align}

It is clear that $0 \leq \varphi(\offers) \leq |I| \times |\Omega|$ for any $1$-tight offers $\offers$, and offers are $1$-tight as soon as each trade has seen at least one of its agents best responses. Note that when an agent best responds in the dynamics, they reduce their gap to $0$, but the other agent's gap can increase. \Cref{prop:phi_weakly_decresing} will bound this potential increase and characterize the cases in which the function remains instead constant.

Consider now two vectors $\offers$ and $\widehat{\offers}$ representing the offers respectively before and after a best responding step of agent $i$ as described in \Cref{alg:offer-based-dynamics}. Note that we are considering the tie breaking rule introduced in \Cref{sec:network-trading-market} to ensure that the demanded bundle is always well-defined and unique. Therefore, at every point it is possible to define a partition $\mathcal{P}$ over the set of trades $\Omega$, composed of the following sets 
\begin{align*}
    \Omega^{aa} = \{\omega \in \Omega \mid \omega \in d^i(\offers^{-i}) \text{ and } \omega \in d^{-i}(\offers^i)\},\\
    \Omega^{ar} = \{\omega \in \Omega \mid \omega \in d^i(\offers^{-i}) \text{ and } \omega \notin  d^{-i}(\offers^i)\},\\
    \Omega^{ra} = \{\omega \in \Omega \mid \omega \notin d^i(\offers^{-i}) \text{ and } \omega \in  d^{-i}(\offers^i)\},\\
    \Omega^{rr} = \{\omega \in \Omega \mid \omega \notin d^i(\offers^{-i}) \text{ and } \omega \notin  d^{-i}(\offers^i)\}\,,
\end{align*}
where, with some abuse of notation, we use $d^i(\offers^{-i})$ to indicate the demanded bundle computed at a vector of prices $\pb$ such that $p_{\omega}=\sigma^{-i}_{\omega}$ for all $\omega\in \Omega_i$.

We now consider the geometrical interpretation of the steps of the dynamic as movements in the space of the unique demand regions. Consider the line segment $L$ connecting $\offers$ and $\widehat{\offers}$. It might be that $L$ intersects some facet $F$ denoting the boundary between two areas in which agent $-i$ demands different bundles. The following lemma ensures that, if this happens, the two demanded bundles differ by at most two trades.
\begin{lemma}
\label{lem:well-behaved-segments}
The line segment $L$ only intersects the facets of agent $-i$'s LIP in their relative interior. Because of this, $L$ is divided into a finite number of sub-segments where the demanded bundle is unique and differs from the previous or following one, in the sense of the movement, by either having an additional trade, missing a trade, or both.
\end{lemma}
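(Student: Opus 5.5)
The plan is to work in the price space of agent $-i$, whose demand at any point of $L$ is determined only by the offers of agent $i$ on the three trades. First, I would describe the segment $L$ concretely. In a best-response step of agent $i$, each of her offers either stays the same (if the trade was already matched) or moves to $p_\omega$ or to $p_\omega - \varepsilon\chi^i_\omega$. So $\widehat{\offers}^i - \offers^i$ is a vector whose nonzero coordinates are multiples of $\varepsilon$ with a definite sign pattern. Both endpoints lie on the $\varepsilon$-lattice $\varepsilon\Z^\Omega$, because all offers remain on the lattice once they are initialized there.

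Next I would invoke \cref{lem:tie_breaking_rule} and replace $v^{-i}$ by the perturbed valuation $\widehat{v}^{-i}_\varepsilon$. This valuation is still substitutes, it has unique demand at every lattice point, and its demand agrees with $d^{-i}$ there. By \cref{lemma:perturb-constant-shift}, its LIP is the LIP of $v^{-i}$ translated by the generic vector $\sum_j \frac{\varepsilon}{4^j}\chi^{-i}_{\omega_j}\eb^{\omega_j}$. By \cref{fact:substitutes-normals}, every facet of the original LIP lies in an integral hyperplane of the form $p_\omega = c$ or $p_\chi - p_\varphi = c$. The lower-dimensional faces (ridges) lie in intersections of two such hyperplanes with independent normals.

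The key step is to show that $L$ cannot pass through a ridge of the shifted LIP, and cannot lie inside a facet hyperplane. I would parametrize $L$ as $\offers^i + t(\widehat{\offers}^i - \offers^i)$ with $t \in [0,1]$. A point of $L$ lies on a shifted hyperplane exactly when some linear form in $t$ with lattice-valued coefficients equals an integer plus a perturbation term, namely $\pm\varepsilon/4^j$ or $\varepsilon(4^{-j} \pm 4^{-k})$. Lying on two independent shifted hyperplanes at the same $t$ forces a relation among these 4-adic perturbation terms with coefficients in $\varepsilon\Z$. I would argue that no such relation can hold, because the perturbations are distinct, nonzero, and strictly smaller than $\varepsilon/2$ in the relevant combinations; the computation is the same base-4 uniqueness argument used in \cref{lem:tie_breaking_rule}. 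The same argument shows that $L$ is never contained in a facet hyperplane, since its endpoints are lattice points and lattice points avoid the shifted LIP. Hence $L$ meets the LIP in finitely many points, each lying in the relative interior of a facet. I expect this genericity argument to be the main obstacle. The difficulty is that the direction of $L$ can have several nonzero coordinates of different sizes, so the ridge-avoidance needs the case analysis over the normal types $\eb^\omega$ and $\eb^\chi - \eb^\varphi$ to be carried out carefully, including when $t$ is not a dyadic value.

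Finally, I would read off the change in demand. Crossing a facet in its relative interior moves between the two adjacent unique demand regions, and the demand changes by $w(F)\nb$. Here $w(F) \in \{0,1\}$, and by \cref{fact:substitutes-normals} $\nb$ is either $\pm\eb^\omega$ or $\pm(\eb^\chi - \eb^\varphi)$. In bundle terms, the new bundle either adds a trade, removes a trade, or swaps one trade for another. The crossings occur at finitely many values of $t$, so they split $L$ into finitely many sub-segments with constant, unique demand, which is the claimed structure.
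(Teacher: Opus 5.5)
Your overall strategy is the same as the paper's. You work with the perturbed valuation $\widehat v^{-i}_\varepsilon$, whose LIP is a generic $4$-adic shift of an integral LIP, and run a case analysis over the normal types $\eb^{\omega}$ and $\eb^{\chi}-\eb^{\varphi}$ to show that $L$ never meets two facets at once.

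\textbf{The gap.} The key step is not established. You only record that the nonzero coordinates of $\widehat{\offers}^i-\offers^i$ are multiples of $\varepsilon$, and you explicitly allow coordinates of different sizes. In that generality the claim that ``no such relation can hold'' is false. Take the following example.
\begin{itemize}
\item Let $\offers^i$ have integer coordinates $c_1,c_2$ on $\omega_1,\omega_2$.
\item Let agent $-i$'s original LIP contain the hyperplanes $p_{\omega_1}=c_1$ and $p_{\omega_2}=c_2$ (for example, an additive valuation).
\item Take the direction $\varepsilon(4\chi^{-i}_{\omega_1},\chi^{-i}_{\omega_2},0)$.
\end{itemize}
At $t=1/16$ the point lies on both shifted hyperplanes $p_{\omega_1}=c_1+\chi^{-i}_{\omega_1}\varepsilon/4$ and $p_{\omega_2}=c_2+\chi^{-i}_{\omega_2}\varepsilon/16$, so $L$ passes through a ridge. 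The base-$4$ uniqueness argument of \cref{lem:tie_breaking_rule} only excludes relations with small integer coefficients. Knowing that the perturbations are distinct, nonzero and small is therefore not enough.

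\textbf{The missing idea.} What makes the argument work, and is the first observation in the paper's proof, is that every nonzero coordinate of the step is exactly $\pm\varepsilon$. Agent $-i$ has just best responded, so $\offer^{-i}_\omega\in\{\offer^i_\omega,\ \offer^i_\omega+\varepsilon\chi^i_\omega\}$. Agent $i$'s new offer then lies in $\{\offer^{-i}_\omega,\ \offer^{-i}_\omega-\varepsilon\chi^i_\omega\}$, hence $\widehat{\offer}^i_\omega-\offer^i_\omega\in\{0,\pm\varepsilon\}$.

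Consequently every point of $L$ is determined modulo $\varepsilon$ by a single scalar $\delta$:
\begin{itemize}
\item $p_\omega\in\varepsilon\Z+\{0,\pm\delta\}$;
\item $p_\chi-p_\varphi\in\varepsilon\Z+\{0,\pm\delta,\pm2\delta\}$.
\end{itemize}
A facet normal to $\eb^{\omega_j}$ forces $\delta\equiv\pm\varepsilon 4^{-j}$ modulo $\varepsilon$. A facet normal to $\eb^{\omega_j}-\eb^{\omega_k}$ forces $\pm\varepsilon(4^{-j}\pm 4^{-k})\in\varepsilon\Z+\{\pm\delta,\pm2\delta\}$. With only the coefficients $1$ and $2$ in play, base-$4$ uniqueness shows that two distinct facets impose incompatible conditions on $\delta$; this is the paper's five-case analysis. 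Your worry about non-dyadic $t$ disappears, because only $\delta$ modulo $\varepsilon$ matters.

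\textbf{A smaller point.} Take a normal $\pm(\eb^\chi-\eb^\varphi)$ where agent $-i$ buys $\chi$ and sells $\varphi$. Crossing that facet adds or drops both trades at once (Cases 3 and 4 of \cref{lem:crossing_facets}), so ``swaps one trade for another'' does not exhaust the possibilities.

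The rest of your plan is sound. The endpoints are lattice points and so avoid the shifted LIP, which means $L$ cannot lie inside a facet hyperplane. It therefore meets the LIP in finitely many points.
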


Furthermore, we are able to fully characterize the different ways in which the demanded bundle of $-i$ can change as a consequence of the crossing of a facet. 
\begin{lemma}
\label{lem:crossing_facets}
Suppose agent $i$ best responds and changes her offers from $\offers^i$ to $\hat{\offers}^i$. If a facet $F$ of agent $-i$'s LIP is crossed, then the bundle demanded by this agent evolves in one of the following ways.
\begin{itemize}
    \item[] Case 1: it can gain a trade initially in  $\Omega^{ar}$;
    \item[] Case 2: it can lose a trade initially in  $\Omega^{ra}$;
    \item[] Case 3: it can simultaneously gain two trades initially belonging to $\Omega^{ar}$;
    \item[]  Case 4: it can simultaneously drop two trades initially belonging to  $\Omega^{ra}$;
    \item[] Case 5: it can swap two trades from $\Omega^{aa}$ and $\Omega^{ar}$, losing the first one and gaining the second;
    \item[] Case 6: it can swap two trades belonging to $\Omega^{ra}$ and $\Omega^{ar}$;
    \item[] Case 7: it can swap two trades belonging to $\Omega^{ra}$ and $\Omega^{rr}$.
\end{itemize}
\end{lemma}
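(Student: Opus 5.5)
The plan is a sign analysis of the direction in which agent $i$'s update moves the prices facing $-i$, combined with \cref{fact:substitutes-normals} and \cref{lem:well-behaved-segments}. By \cref{lem:well-behaved-segments}, each facet crossing changes $-i$'s demand by a primitive vector $\pm \nb$, where $\nb$ is the facet normal. In the signed bundle notation of \cref{section:geometry-preferences}, \cref{fact:substitutes-normals} restricts $\nb$ to $\eb^\omega$ or $\eb^\omega - \eb^\chi$. So there are only two kinds of crossing. A single-trade crossing gains or loses one trade. A two-trade crossing either swaps two trades of the same type for $-i$ (both buying or both selling), or gains or drops two trades of opposite type simultaneously. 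The task is to decide which trades can play which role, using the partition $\mathcal{P}$ defined from the offers $\offers$ before the step.

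The first step is to record the direction $\delta = \widehat{\offers}^i - \offers^i$ of the segment $L$, as seen by $-i$. By \cref{alg:offer-based-dynamics}, on trades in $d^i(\offers^{-i})$, i.e.\ in $\Omega^{aa}\cup\Omega^{ar}$, agent $i$ matches $-i$'s offer. Because offers are $1$-tight and $\varepsilon$-step, this move weakly \emph{improves} the price for $-i$. On trades in $\Omega^{ra}\cup\Omega^{rr}$, agent $i$ sets the offer $\varepsilon$ away from $-i$'s offer in the direction unfavourable to $-i$, so the price weakly \emph{worsens} for $-i$. Since $L$ is a straight segment, each coordinate moves monotonically along $L$, and all improving coordinates move by comparable amounts (at most $\varepsilon$), as do all worsening ones.

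Next I use the law of demand for crossing a facet, namely $(\xb'-\xb)\cdot(\pb'-\pb) \le 0$ with the change parallel to $\nb$. This determines which change occurs from the sign of $\delta\cdot\nb$.
\begin{itemize}
\item For $\nb=\eb^\omega$, a trade can be gained only if its price improved, so $\omega\in\Omega^{aa}\cup\Omega^{ar}$. It cannot already be demanded by $-i$, which forces $\omega\in\Omega^{ar}$ (Case 1). Symmetrically, a trade can be lost only if its price worsened and it was demanded, so $\omega\in\Omega^{ra}$ (Case 2).
\item For $\nb=\eb^\omega-\eb^\chi$ with opposite trade types, both trades are gained or both dropped. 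Both gained requires that neither price worsens against the move, which gives Case 3. Both dropped gives Case 4 by the mirror argument.
\item For a same-type swap, the gained trade must have become relatively better than the lost one. So either the gained trade improved and the lost trade is demanded but not improved, or both moved in the same direction by different amounts. Enumerating the classes yields the swaps $\Omega^{aa}\to\Omega^{ar}$, $\Omega^{ra}\to\Omega^{ar}$ and $\Omega^{ra}\to\Omega^{rr}$ (Cases 5--7). The impossible combinations are ruled out because the relevant inner product has the wrong sign, or is zero, in which case $L$ is parallel to the facet and does not cross it.
\end{itemize}

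The main obstacle is that the classes in the lemma refer to the \emph{initial} partition, while several facets may be crossed successively along $L$. I must therefore show that a trade's role at a later crossing is still dictated by its initial class. For example, a trade of $\Omega^{ar}$ gained at one crossing should not later be dropped, and the "demanded by $-i$" status used in the casework must be consistent along the segment. I would first try to prove this from coordinatewise monotonicity of $L$ together with substitutes monotonicity (\cref{definition:full-substitutability}). The idea is that along a segment on which a fixed set of prices improves and a fixed set worsens, demand for each trade can only change in one direction, except through swaps. Those swaps are then shown to be exactly the cross-class exchanges listed above. If this monotonicity argument fails in general, the fallback is to use the restriction to two agents and three trades, where the number of possible crossings is small enough to check directly.
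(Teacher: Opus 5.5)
Your route is the paper's: restrict facet normals via \cref{fact:substitutes-normals}, read off how $i$'s update moves the prices $-i$ faces, and enumerate pairs of partition classes. Your single-trade cases and same-type swaps (Cases 1, 2, 5--7) are right at a first crossing.

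The step that fails is the opposite-type case: ``both gained requires that neither price worsens, which gives Case~3'', and its mirror for Case~4. Your direction bookkeeping is too coarse. Because $-i$ has just best responded, \cref{alg:offer-based-dynamics} gives exact movements:
\begin{itemize}
\item $i$'s offers on $\Omega^{aa}\cup\Omega^{rr}$ do not change at all;
\item offers on $\Omega^{ar}$ move by exactly $\varepsilon$ in $-i$'s favour;
\item offers on $\Omega^{ra}$ move by exactly $\varepsilon$ against $-i$.
\end{itemize}
A crossing with demand change $\xb'-\xb$ along $\delta=\widehat{\offers}^i-\offers^i$ needs $(\xb'-\xb)\cdot\delta<0$. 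Each trade involved contributes $-\varepsilon$ (gaining an $\Omega^{ar}$ trade or dropping an $\Omega^{ra}$ trade), $+\varepsilon$ (the reverse), or $0$ (any trade of $\Omega^{aa}\cup\Omega^{rr}$). So simultaneously gaining an $\Omega^{ar}$ trade and an $\Omega^{rr}$ trade of opposite type passes the test. You excluded it only by filing $\Omega^{rr}$ under ``weakly worsens''.

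This transition actually occurs. Take agents $i$ and $j=-i$, a trade $\alpha$ sold by $i$ to $j$, a trade $\beta$ sold by $j$ to $i$, and $\varepsilon=\tfrac12$. Let $v^i$ be additive with $v^i(\{\alpha\})=0$ and $v^i(\{\beta\})=1$. Let $j$ be the intermediary $v^j(\{\alpha,\beta\})=0$, $v^j(\{\alpha\})=v^j(\{\beta\})=-100$. Both valuations are fully substitutable. Write offers as $(\offer_\alpha,\offer_\beta)$.
\begin{itemize}
\item At $\offers^i=(1.5,1)$, $j$ demands $\emptyset$, so its best response is $\offers^j=(1,1.5)$.
\item At $\offers^j$, $i$ demands $\{\alpha\}$, so $\alpha\in\Omega^{ar}$ and $\beta\in\Omega^{rr}$.
\item $i$'s update gives $\widehat{\offers}^i=(1,1)$, where $j$ demands $\{\alpha,\beta\}$. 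With the perturbation of \cref{lem:tie_breaking_rule}, the facet $p_\alpha-p_\beta=\tfrac{5}{32}$ is crossed inside $L$.
\end{itemize}
Symmetrically, with $v^i(\{\beta\})=0$ and $\offers^i=\offers^j=(1,1)$, $j$ drops an $\Omega^{aa}$ trade and an $\Omega^{ra}$ trade at once. Neither transition is among Cases 1--7, so this step cannot be repaired: the statement itself is incomplete.

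The paper's proof has the same hole. It dismisses the pairs $\{\Omega^{rr},\Omega^{ar}\}$ and $\{\Omega^{aa},\Omega^{ra}\}$ with an argument that only holds when both trades have the same type for $-i$, i.e.\ for swaps. Both omitted transitions leave $\varphi$ unchanged, so they would have to enter the stationary-loop analysis alongside Cases~5 and~7.

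The ``main obstacle'' you flag (initial partition versus later crossings along $L$) is also real; the paper's proof passes over it. However, your proposed monotonicity fix does not work. The same sign count does show that an $\Omega^{ar}$ trade gained along $L$ is never dropped later in the step: dropping it contributes $+\varepsilon$, which the single other trade in a crossing cannot offset. Dually, a dropped $\Omega^{ra}$ trade is never regained. But trades whose price is fixed can change twice.

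Take three trades $\alpha,\beta,\gamma$, all sold by $i$ to $-i$. Let $-i$ value a bundle by the sum of its two largest item values; this is a weighted rank-2 matroid valuation, hence substitutes. Let $\gamma\in\Omega^{aa}$, $\alpha\in\Omega^{ar}$, $\beta\in\Omega^{ra}$. Assume $-i$'s net utilities $u_\omega$ at $\offers^i$ are positive and satisfy $0<u_\gamma-u_\alpha<u_\beta-u_\gamma<\varepsilon$. As $\alpha$ gets cheaper and $\beta$ dearer, $-i$'s demand goes $\{\beta,\gamma\}\to\{\alpha,\beta\}\to\{\alpha,\gamma\}$ within one step. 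The second crossing drops an $\Omega^{ra}$ trade and regains the $\Omega^{aa}$ trade $\gamma$, which is not a listed exchange relative to the initial partition.

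This example uses two agents and three trades. It can be placed on the $\varepsilon$-lattice using the tie-breaking perturbation with trade order $\gamma,\alpha,\beta$. So your fallback case check would find a counterexample rather than confirm the list. What can be proved is the sign classification above, applied to each crossing relative to $-i$'s current bundle. The counting in \cref{prop:phi_weakly_decresing} then has to be redone on that basis.
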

This precise description of the evolution of the demand allows us to quantify the number of trades for which agent $-i$ will want to modify her offers, thus computing the value of the potential function in $\widehat{\offers}$. Specifically, denoting by $c_j$ the number of crossed facets that cause a change of demand corresponding to Case $j$ in the lemma above, then the following holds.
\begin{lemma}\label{prop:phi_weakly_decresing}
    Let $\varphi$ be the potential function defined in \Cref{eq:pot_fun_phi}. Let $\offers$ and $\widehat{\offers}$ be the vectors of offers before and after a best responding step of the dynamic, as defined in \Cref{alg:offer-based-dynamics}, by agent $i$, then 
    \[
        \varphi(\widehat{\offers})-\varphi(\offers) \leq -c_1 -c_2 -2(c_3 + c_4+ c_6)
    \]
\end{lemma}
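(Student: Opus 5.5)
We bound the change of the potential by treating the two agents separately. Write $\varphi(\widehat{\offers})-\varphi(\offers) = [\varphi^i(\widehat{\offers})-\varphi^i(\offers)] + [\varphi^{-i}(\widehat{\offers})-\varphi^{-i}(\offers)]$. Since agent $i$ best responds, $\varphi^i(\widehat{\offers})=0$. Before the step, agent $i$'s gap counts exactly the trades on which her current offer differs from her best response. In the relevant regime, where offers are $1$-tight after each trade has been touched, this happens precisely on the trades where her offer currently disagrees with that of $-i$ in the sense encoded by the partition $\mathcal{P}$. So $\varphi^i(\offers)$ is at least the number of trades her update actually modifies. The whole argument is bookkeeping of the gap of $-i$ along the segment $L$.

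\textbf{Step 1 (initial gap of $-i$).} After $i$'s update, $i$'s offers equal $-i$'s offers on $d^i(\offers^{-i})$, i.e.\ on $\Omega^{aa}\cup\Omega^{ar}$, and are shifted by $\varepsilon$ against $-i$ on $\Omega^{ra}\cup\Omega^{rr}$. Suppose the demand of $-i$ were unchanged, equal to $d^{-i}(\offers^i)$. Then $-i$ would need to change her offer exactly on $\Omega^{ar}$ (she rejects a trade $i$ now matches) and on $\Omega^{ra}$ (she accepts a trade that $i$ has moved away from). Compare this baseline with $\varphi(\offers)$, which counts both agents' disagreements on $\Omega^{ar}\cup\Omega^{ra}$, plus possibly stale offers. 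We expect the baseline to satisfy $\varphi(\widehat{\offers})\le \varphi(\offers)$ when no facet is crossed: $i$'s contribution drops to zero, and $-i$'s contribution is at most the common disagreement set.

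\textbf{Step 2 (facet crossings).} Next, follow $L$ using \cref{lem:well-behaved-segments}. Each crossing changes $-i$'s demand by one of the seven cases of \cref{lem:crossing_facets}, and we account for the effect of each on $|\{\text{trades where } -i \text{ must change}\}|$. Gaining a trade in $\Omega^{ar}$ (Case 1) removes one disagreement, giving $-1$. Losing a trade in $\Omega^{ra}$ (Case 2) also gives $-1$. Cases 3 and 4 give $-2$. Case 6, a swap $\Omega^{ra}\leftrightarrow\Omega^{ar}$, resolves two disagreements, giving $-2$. Case 5 swaps an agreed trade in $\Omega^{aa}$ for one in $\Omega^{ar}$, so one disagreement is created and one resolved, giving $0$. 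Case 7 swaps $\Omega^{ra}$ for $\Omega^{rr}$, again giving $0$. Summing over the crossings yields the claimed bound $-c_1-c_2-2(c_3+c_4+c_6)$.

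\textbf{Main obstacle.} The subtle point is that each crossing must be evaluated relative to the \emph{current} classification of trades, not the initial one. A trade moved by an earlier crossing, say from $\Omega^{ar}$ into agreement, changes the effect of later crossings. I would handle this by defining the partition with respect to $-i$'s running demand along $L$ and observing that \cref{lem:crossing_facets} applies at every crossing with the updated sets, since its proof only uses the direction of $L$ (prices on $\Omega^{ra}\cup\Omega^{rr}$ move against $-i$, the others move favourably). Each case then changes the count by the stated amount independently of history, and the bound telescopes. A secondary check is that $\varepsilon$-steps never cause $-i$'s best response to differ from her current offer on a trade whose demand status is unchanged. This follows from the update rule, which depends only on the demand and on the opponent's offer.
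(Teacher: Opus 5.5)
Your accounting is essentially the paper's. You have $\varphi^i(\widehat{\offers})=0$, the no-crossing value of $\varphi^{-i}(\widehat{\offers})$ is $|\Omega^{ar}|+|\Omega^{ra}|$, and the seven cases of \cref{lem:crossing_facets} shift it by $-1,-1,-2,-2,0,-2,0$. Telescoping over crossings rather than counting final statuses class by class is only a reorganisation.

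The gap is the starting value. You need $\varphi(\offers)\ge|\Omega^{ar}|+|\Omega^{ra}|$, and you justify it by saying that in the $1$-tight regime agent $i$'s update modifies precisely the trades in $\Omega^{ar}\cup\Omega^{ra}$. Tightness does not give this. A trade $\omega\in\Omega^{ar}$ with $\offer^i_\omega=\offer^{-i}_\omega$ is tight, yet $i$'s update leaves it alone. Also, ``$\varphi^i(\offers)$ is at least the number of trades she modifies'' holds by definition, so it bounds nothing.

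The fact you are missing, and which the paper uses, is that in the two-agent reduction the dynamics alternates, so $-i$ has just best responded to $\offers^i$. Hence $\varphi^{-i}(\offers)=0$. Moreover, $\offer^{-i}_\omega=\offer^i_\omega$ for $\omega\in d^{-i}(\offers^i)$ and $\offer^{-i}_\omega=\offer^i_\omega-\varepsilon\chi^{-i}_\omega$ otherwise. Substituting this into $i$'s update rule shows that $i$ changes her offer on every trade of $\Omega^{ar}\cup\Omega^{ra}$ and on no trade of $\Omega^{aa}\cup\Omega^{rr}$. This gives $\varphi(\offers)=|\Omega^{ar}|+|\Omega^{ra}|$, and with your final count the stated bound follows (in fact with equality).

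The same computation corrects your description of $L$ in the last paragraph. The prices on $\Omega^{aa}\cup\Omega^{rr}$ do not move at all. Only those on $\Omega^{ar}$ (favourably for $-i$) and on $\Omega^{ra}$ (against $-i$) change. This stationarity is exactly what the proof of \cref{lem:crossing_facets} relies on. So your claim that the lemma can be reapplied verbatim to a running partition, because it ``only uses the direction of $L$'', does not go through. An $\Omega^{ar}$ trade that has already been gained carries the label $aa$ in your running partition but is still moving.

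Your worry about trades changing status more than once is legitimate. To handle it, you would have to classify trades by their fixed direction of movement together with their current demand status, and recheck which crossings can occur. The paper instead counts final demand statuses against the initial partition $\mathcal{P}$, tacitly assuming each trade changes status at most once along $L$.
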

This lemma proves that the function $\varphi$ is non-increasing during the evolution of the market dynamic. To conclude the proof, we shall now prove that the function can not remain constant indefinitely. Reasoning by contradiction, let $(\offers_t)_{t\geq t_0}$ be a sequence of offers generated by \Cref{alg:offer-based-dynamics} for which $\varphi$ is stationary, meaning $\varphi(\offers^{t+1})-\varphi(\offers^t)=0, \ \forall t\geq t_0$. By \Cref{prop:phi_weakly_decresing}, we know that during none of the steps in this sequence it is possible to cross a facet that induces a change in demand corresponding to cases 1, 2, 3, 4, or 6. Therefore, the sequence $(\offers^t)_{t\geq t_0}$ must necessarily exhibit one among the following three behaviors.
\begin{itemize}
    \item[1] It performs a loop in an area where the demand bundles of both agents remain constant (no facet of any type is ever crossed).
    \item[2] It performs a loop during which some facets corresponding to cases 5 and 7 are crossed.
    \item[3] The sequence does not form any loop, and instead continues to evolve, reaching new states, while $\varphi$ remains constant.
\end{itemize}
However, the following lemma shows that none of these cases can occur.

\begin{lemma}\label{lem:no_constant_phi}
    None of the behaviors 1, 2, 3, detailed above, which would cause the function $\varphi$ to be stationary indefinitely, can happen in a market with three trades.
\end{lemma}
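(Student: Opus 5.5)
We rule out the three behaviours one at a time, for the reduced market with $I=\{1,2\}$ and $|\Omega|=3$.

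\textbf{Behaviour 1 (loop with no facet crossed).} Both demanded bundles stay fixed, so the partition $\mathcal{P}=\{\Omega^{aa},\Omega^{ar},\Omega^{ra},\Omega^{rr}\}$ stays fixed too. Under \cref{alg:offer-based-dynamics}, a trade in $\Omega^{aa}$ is matched and never moves again. Every trade in the other three sets is rejected by at least one agent. Each time that agent best responds, the offer she faces on that trade moves monotonically: buying offers go down by $\varepsilon$ and selling offers go up by $\varepsilon$. The agent who accepts only copies the offer. So on every rejected trade the price drifts in one direction forever. This contradicts a loop, and it also contradicts boundedness of offers. The only escape is that no agent ever changes an offer, but then $U$ empties and the dynamics has already converged.

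\textbf{Behaviour 3 (no loop).} Once all trades have been touched, offers stay $1$-tight. They also live on a lattice $\sigma_0+\varepsilon\Z^\Omega$ inside the bounded box. Hence only finitely many states exist, and Lemma 11 of \citet{lock2024decentralized} says the dynamics either stops or eventually cycles. An infinite sequence of distinct states is therefore impossible.

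\textbf{Behaviour 2 (loop crossing only Case 5 and Case 7 facets).} This is the main step. \cref{lem:crossing_facets} says the only demand changes allowed are two kinds of swap. A Case 5 swap exchanges a trade of $\Omega^{aa}$ with one of $\Omega^{ar}$. A Case 7 swap exchanges a trade of $\Omega^{ra}$ with one of $\Omega^{rr}$. Both are seen from the perspective of the non-responding agent. After a best response by $i$, the labels are recomputed with the roles of $i$ and $-i$ exchanged, which transposes $\Omega^{ar}$ and $\Omega^{ra}$.

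The plan is to follow a single trade around the loop. First, some trade must change its set; otherwise we are back in Behaviour 1, and the same monotone-drift argument applies. Second, along a closed loop every such trade must return to its starting set. We will show that the only available moves are the swaps together with the $ar\leftrightarrow ra$ relabelling, and that these force the trade to pass through each of the four sets: a trade that leaves a set can only come back through a chain that goes $aa\to ar\to ra\to rr$ or the reverse.

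Because all the moves are \emph{swaps}, a set the trade moves into must already contain a trade that is simultaneously moving out. Hence whenever a trade enters a set, that set is nonempty at that moment. The key step is to use this to show that all four classes are nonempty at some point of the loop, which forces $|\Omega|\ge 4$ and contradicts $|\Omega|=3$. To get there, we will use the geometry of the facets (substitutes normals $\eb^\omega-\eb^\chi$, crossed in their relative interior by \cref{lem:well-behaved-segments}) to pin down the direction of each swap relative to the direction of the offer change.

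The obstacle we expect is exactly this bookkeeping. We must prove that the swaps and relabellings cannot cycle with one of the four classes empty throughout. To handle it, we would enumerate the possible partition size profiles for three trades, such as $(1,1,1,0)$ and its permutations, and check directly that neither Case 5 nor Case 7 can both leave and restore a profile under the orientation constraints.
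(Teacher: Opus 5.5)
Behaviours 1 and 3 are handled essentially as in the paper. There is one slip in Behaviour 1: trades in $\Omega^{rr}$ do not drift, because both agents keep reproducing the same $\varepsilon$-gap. Only trades in $\Omega^{ar}\cup\Omega^{ra}$ drift, and that is all your argument needs.

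The genuine gap is Behaviour 2, which is the substance of the lemma and which you only outline. Both of your key claims fail as stated.

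\emph{The chain claim.} It does not follow from the moves you list. Read labels with the current responder first. The one-step transitions allowed by \cref{lem:crossing_facets} plus the relabelling are $aa\to\{aa,ra\}$, $ar\to\{ra,aa\}$, $ra\to\{ar,rr\}$ and $rr\to\{rr,ar\}$. So the path $aa\to ra\to ar\to aa$ (a Case 5 drop, a relabelling, then a Case 5 regain) returns without ever visiting $\Omega^{rr}$.

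\emph{The nonemptiness claim.} ``Each set is nonempty when the trade enters it'' gives nonemptiness at different times, not simultaneously.

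\emph{The fallback enumeration.} Enumerating size profiles cannot work on labels alone. Case 5 and Case 7 swaps preserve $|\Omega^{aa}|$ and $|\Omega^{rr}|$. After relabelling, the profile evolves as $(|\Omega^{aa}|,|\Omega^{ar}|,|\Omega^{ra}|,|\Omega^{rr}|)\mapsto(|\Omega^{aa}|,|\Omega^{ra}|,|\Omega^{ar}|,|\Omega^{rr}|)$ whether or not a swap occurs, so there is nothing to ``leave and restore''. A loop using both swap types forces only $|\Omega^{aa}|,|\Omega^{rr}|,|\Omega^{ar}|+|\Omega^{ra}|\ge 1$, which three trades satisfy. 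Label sequences consistent with \cref{lem:crossing_facets} with profile $(1,1,0,1)$ do exist: rotate the three trades using Case 5 crossings during agent $1$'s moves and Case 7 crossings during agent $2$'s moves.

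What excludes such patterns is that a loop must return the \emph{offers}, not just the labels, and your sketch contains no offer accounting. This is the ingredient the paper uses in \cref{lem:unique_loop}. Agent $i$'s offer on $\omega$ changes only on $i$'s turns, and only when $\omega\in\Omega^{ar}$ (by $+\chi^i_\omega\varepsilon$) or $\omega\in\Omega^{ra}$ (by $-\chi^i_\omega\varepsilon$). Closing the loop therefore forces $\ell^\omega_{\Omega^{ar}}=\ell^\omega_{\Omega^{ra}}$. In the rotation above, the mixed trade is always accepted by agent $1$ and rejected by agent $2$, so every price drifts toward agent $2$ and the identity fails.

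The identity alone does not finish Behaviour 2 either. The sequence $aa,ra,ar,aa,ra,ar$ over six alternating turns is balanced for both agents, lets the offers on that trade return, and never enters $\Omega^{rr}$. Nothing in the identity forces $\Omega^{ar}$ and $\Omega^{ra}$ to be nonempty at once. The paper's proof states these two steps (``both Case 5 and Case 7 occur'' and ``all four classes are nonempty'') without a derivation.

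So a complete argument needs the offer-balance identity together with a real use of the facet orientations to exclude such balanced loops. That is exactly the part your proposal leaves as an expected obstacle.
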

Having ruled out these cases, we can conclude that, throughout the dynamic, the potential function keeps decreasing, therefore, \Cref{alg:offer-based-dynamics} eventually converges.
\end{proof2}

\begin{proof2}[Proof of \Cref{lem:well-behaved-segments}]
We begin with the observation that the starting offers $\offers$ live in the $\varepsilon$-lattice $\varepsilon\mathbb{Z}^{2\Omega}$, by assumption. So, as we move along $L$, offers change by adding or subtracting the same amount, say $\delta$, to the values of a subset of trades. In other words, if $\pb \in L$, then there exists an $\delta >0$ such that $p_\omega \in \varepsilon\mathbb{Z}^{2\Omega}+ \{0, \delta, -\delta\}$ for all trades $\omega \in \Omega$. This also implies $p_\omega - p_\chi \in \varepsilon\mathbb{Z}^{2\Omega} + \{0, \pm \delta, \pm 2\delta \}$ for any two trades $\omega, \chi \in \Omega$.

Suppose, for contradiction, that $L$ intersects facets $F$ and $F'$ at point $\pb \in L$, and fix the $\delta$ for this $\pb$. We consider all the different possible orientations for $F$ and $F'$. Let $\nb$ and $\nb'$ be the normal vectors of $F$ and $F'$. Note that we cannot have $\nb = \nb'$. This leaves the following cases.

\begin{itemize}
    \item[\textbf{Case 1}] $\nb = \eb^{\omega_i}$ and $\nb' = \eb^{\omega_j}$. Then $\pb \in F$ implies $p_{\omega_i} \in \varepsilon\mathbb{Z}^{2\Omega}+ \frac{\varepsilon}{4^i}$, and so $\delta = \frac{\varepsilon}{4^i}$. Secondly, $\pb \in F'$ implies $p_{\omega_j} \in \varepsilon\mathbb{Z}^{2\Omega} + \frac{\varepsilon}{4^j}$ and so $\delta = \frac{\varepsilon}{4^j}$. But as $\frac{\varepsilon}{4^i} \neq \pm \frac{\varepsilon}{4^j}$ for $i \neq j$, this is a contradiction.
    \item[\textbf{Case 2}] $\nb = \eb^{\omega_i}$ and $\nb' = \eb^{\omega_i} - \eb^{\omega_j}$. Then $\pb \in F$ implies $p_{\omega_i} \in \varepsilon\mathbb{Z}^{2\Omega} + \frac{\varepsilon}{4^i}$, and so $\delta = \frac{\varepsilon}{4^i}$. Secondly, $\pb \in F'$ implies $p_{\omega_i} - p_{\omega_j} \in \varepsilon\mathbb{Z}^{2\Omega}+ \frac{\varepsilon}{4^i} - \frac{\varepsilon}{4^j}$, implying $p_{\omega_j} \in \varepsilon\mathbb{Z}^{2\Omega} + \frac{\varepsilon}{4^j}$. But this tells us that $\delta = \frac{\varepsilon}{4^j} = \frac{\varepsilon}{4^i}$, a contradiction.
    \item[\textbf{Case 3}] $\nb = \eb^{\omega_i}$ and $\nb' = \eb^{\omega_j} - \eb^{\omega_k}$. Then $\pb \in F$ implies $p_{\omega_i} \in \varepsilon\mathbb{Z}^{2\Omega} + \frac{\varepsilon}{4^i}$, so $\delta = \frac{\varepsilon}{4^i}$. Secondly, $\pb \in F'$ implies $p_{\omega_j} - p_{\omega_k} \in \varepsilon\mathbb{Z}^{2\Omega} + \frac{\varepsilon}{4^j} - \frac{\varepsilon}{4^k}$. But $\frac{\varepsilon}{4^j} - \frac{\varepsilon}{4^k} \notin \{0, \pm \delta, \pm 2\delta \}$, a contradiction.
    \item[\textbf{Case 4}] $\nb = \eb^{\omega_i} - \eb^{\omega_j}$ and $\nb' = \eb^{\omega_i} - \eb^{\omega_k}$. Then $\pb \in F$ implies $p_{\omega_i} - p_{\omega_j} \in \varepsilon\mathbb{Z}^{2\Omega} + \frac{\varepsilon}{4^i} - \frac{\varepsilon}{4^j}$, and $\pb \in F'$ implies $p_{\omega_i} - p_{\omega_k} \in \varepsilon\mathbb{Z}^{2\Omega} + \frac{\varepsilon}{4^i} - \frac{\varepsilon}{4^k}$. But $\frac{\varepsilon}{4^i} - \frac{\varepsilon}{4^j}$ and $\frac{\varepsilon}{4^i} - \frac{\varepsilon}{4^k}$ cannot both lie in $\{0, \pm \delta, \pm 2 \delta\}$, a contradiction.
    \item[\textbf{Case 5}] $\nb = \eb^{\omega_i} - \eb^{\omega_j}$ and $\nb' = \eb^{\omega_k} - \eb^{\omega_l}$. Then $\pb \in F$ implies $p_{\omega_i} - p_{\omega_j} \in \varepsilon\mathbb{Z}^{2\Omega} + \frac{\varepsilon}{4^i} - \frac{\varepsilon}{4^j}$, and $\pb \in F'$ implies $p_{\omega_k} - p_{\omega_l} \in \varepsilon\mathbb{Z}^{2\Omega} + \frac{\varepsilon}{4^k} - \frac{\varepsilon}{4^l}$. But $\frac{\varepsilon}{4^i} - \frac{\varepsilon}{4^j}$ and $\frac{\varepsilon}{4^k} - \frac{\varepsilon}{4^l}$ cannot both lie in $\{0, \pm \delta, \pm 2 \delta\}$, a contradiction.
\end{itemize}
\end{proof2}

\begin{proof2}[Proof of \Cref{lem:crossing_facets}] 
Due to the characterization of the demand areas in \cite{baldwin2019understanding} and as detailed in \Cref{section:geometry-preferences}, we know that the delimiting facets $F$ are orthogonal to a vector of the form either $\eb^{\omega}$ or $\eb^{\chi} - \eb^{\phi}$, since the preferences of the agents are fully-substitutable. We study these two cases separately.

Note that, if $F \perp\eb^{\omega}$, it is not possible for $\omega$ to belong to $\Omega^{aa}$ or $\Omega^{rr}$ since the agent does not modifies her offers for such trades.
This implies that in the vector $L$ the coordinate relative to $\omega$ is null, hence it would not be possible to cross a facet in these directions. 
This leaves only two other possibilities corresponding to the following two cases:
\begin{itemize}
    \item[\textbf{Case 1}] If $\omega \in \Omega^{ar}$ agent $i$ modifies her offers in favor of agent $-i$. Therefore if a facet is crossed and there is a change in demand, it implies the bundle demanded by agent $-i$ gains $\omega$. 
    \item[\textbf{Case 2}] If $\omega \in \Omega^{ra}$ agent $i$ instead modifies offers in her favor, away from the ones of agent $-i$. Thus, if a facet is crossed, it implies the bundle demanded by agent $-i$ drops $\omega$.
\end{itemize}

The remaining cases correspond to the possible crossing of facets when $ F \perp \eb^{\chi}-\eb^{\phi}$. For the same motivation as before, it is not possible for both trades $\chi$ and $\phi$ to belong to $\Omega^{aa}\cup \Omega^{rr}$, hence we are left with the following possibilities.
\begin{itemize}
    \item[\textbf{Case 3}] if $\chi, \phi \in \Omega^{ar}$ a facet like $F$ can be crossed in the case in which agent $i$ is a seller for $\chi$ and a buyer for $\phi$. In this case, $F$ is indicating the border between an area where agent $-i$ demands both trades and one where it refuses both and it is possible to cross $F$ when the direction of the update is, as in this case, $(-1,1)$. 
    \item[\textbf{Case 4}] Following an analogous argument as above, if $\chi, \phi \in \Omega^{ra}$ a crossing is possible again when agent $i$ is a buyer for $\chi$ and a seller for $\phi$, moving in the opposite direction of the one described in Case 3, resulting in the demand bundle of agent $-i$ to lose both trades.\footnote{Note that this, and the following, case, can only occur when the agent has two different roles, buyer and seller, for the two trades considered. Otherwise, $F$ could not represent the facet that indicates the change in demand.}
    \item[\textbf{Case 5}] If one trade between $\chi$ and $\phi$ belongs to $\Omega^{aa}$, the other to $\Omega^{ar}$ and $i$ is a seller for both, then, due to the lowering of the offer of the trade in $\Omega^{ar}$ it is possible to cross a $F$, which will result in a swap of the two trades.
    \item[\textbf{Case 6}] When the two trades belong to the sets $\Omega^{ar}$ and $\Omega^{ra}$ and again agent $i$ is a buyer for both the resulting geometry makes the crossing possible, resulting in a swapping of the two trades in the demand bundle of $-i$.
    \item[\textbf{Case 7}] If the trades are in the sets $\Omega^{ra}, \Omega^{rr}$, a best response movement will cross a facet $F$ for instance in the case agent $i$ is a seller for both trades. Therefore when best responding she will increase the offer for the trade in $\Omega^{ra}$ leaving the other untouched. This could lead agent $-i$ to stop demanding the increases trade in favor of a previously rejected one, thus performing a swap.
\end{itemize}
The two remaining cases are when the two trades belong to the sets $\Omega^{aa}, \Omega^{ra}$ or $\Omega^{rr}, \Omega^{ar}$. These cases are ruled out from the fact that  a vector like $\eb^{\chi} - \eb^{\phi}$ cannot represent the opposite of the change of demand between the two areas, hence it is not possible for a facet $F$ perpendicular to it to delimitate the boundary between the two areas.
\end{proof2}

\begin{proof2}[Proof of \Cref{prop:phi_weakly_decresing}]
Assume the dynamic to have reached a state in which the vector of offers corresponds to $\offers$. Furthermore, denote by $i$ the agent who is chosen to best respond at this stage and by $\widehat{\offers}$ the vector reached after the update.
We start by computing the value of the potential function in $\offers$.
Note that, due to the alternation of the agents, when this state is reached, agent $-i$ just modified her offers, bringing the $L^1$ gap to zero, therefore $\varphi^{-i}(\offers)=0$.
On the other side, due to the update rule described in \Cref{alg:offer-based-dynamics}, we know that between $\offers$ and $\widehat{\offers}$, agent $i$ will only modify the offers corresponding to the trades in $\Omega^{ar}$ and $\Omega^{ra}$.
Hence, the value of the potential function is
\begin{align}\label{eq:phi_initial}
    \varphi(\offers) = \varphi^i(\offers)  = \vert\Omega^{ar}\vert + \vert \Omega^{ra}\vert.
\end{align}

After the response, when the offers correspond to $\widehat{\offers}$, agent $i$ has reduced her gaps to zero, which implies that the value of the potential function is completely determined by $\varphi^{-i}(\widehat{\offers})$. By definition, this corresponds to the number of trades for which she wishes to modify her offers given the counteroffers $\widehat{\offers}^i$.
Specifically, those are the trades $\omega$ that are in one of the following situations
\begin{itemize}
    \item[] $\omega \in \Omega^{aa}$ and $\omega \notin d^{-i}(\widehat{\offers}^{i})$;
    \item[] $\omega \in \Omega^{ar}$ and $\omega \notin d^{-i}(\widehat{\offers}^{i})$;
    \item[] $\omega \in \Omega^{ra}$ and $\omega \in d^{-i}(\widehat{\offers}^{i})$;
    \item[] $\omega \in \Omega^{rr}$ and $\omega \in d^{-i}(\widehat{\offers}^{i})$.
\end{itemize}

Due to \Cref{lem:crossing_facets}, we can quantify the number of trades that exhibit each of these four behaviors. 
\begin{itemize}
    \item The number of trades in $\Omega^{aa}$ that agent $-i$ stops demanding corresponds to the number of facets crossed that lead to a change of demand described by Case 5 in the lemma. We denote this number by $c_5$.
    \item The number of trades in $\Omega^{ar}$ that agent $-i$ keeps rejecting are given by the total minus the ones that, due to the crossing of a facet, from $\Omega^{ar}$ end up in either $\Omega^{aa}$ or $\Omega^{ra}$. Therefore, the number of $\omega$ in this case is 
    \[
    |\Omega^{ar}| - c_1 - 2c_3 - c_5 -c_6 
    \]
    \item Similarly, the trades in $\Omega^{ra}$ that agent $-i$ keeps demanding can be recover knowing the ones that instead from $\Omega^{ra}$ end up into $\Omega^{aa}$ or $\Omega^{ar}$. Therefore the number of trades $\omega$ for this case is
    \[
    \vert \Omega^{ra}\vert -c_2 -2c_4 -c_6 -c_7
    \]
    \item Lastly, the trades that were in  $\Omega^{rr}$ and then entered the demand bundle of $-i$ are a total of $c_7$
\end{itemize}
Due to this analysis, summing all the cases we can obtain the value of $\varphi^{-i}(\widehat{\offers})$, and therefore of the complete function as
\[
\varphi(\widehat{\offers})= \varphi^{-i}(\widehat{\offers})=\vert \Omega^{ar}\vert + \vert \Omega^{ra}\vert - c_1 -c_2 -2(c_3 + c_4 + c_6)
\]
combining this with \Cref{eq:phi_initial} leads the result.
\end{proof2}

\begin{proof2}[Proof of \Cref{lem:no_constant_phi}]
As shown by \Cref{prop:phi_weakly_decresing}, the function $\varphi$ is stationary between two consecutive steps only if no facet which results in a change of demand corresponding to Cases  1, 2, 3, 4, 6 is crossed. And, to have a stationary dynamic, this behavior has to repeat endlessly. 
Let $(\offers^t)_{t\in\mathbb{N}}$ be the sequence of offers given by the evolution of the market dynamic. Initially, assume that from a certain time step $t_0$ onward, it performs a loop. Furthermore denote by $c^t_i$ the number of facets of type $i$ crossed while going from $\offers^t$ to $\offers^{t+1}$.

An instance that would yield a stationary potential function is when $c^t_i=0, \ \forall i=1, \ldots, 7, \ \forall t \geq t_0$. 
This corresponds to the case in which the dynamic never crosses any facet during its loop.
By consequence, the demanded bundles of the two agents together with the composition of the four sets of the partition $\mathcal{P}$, remains constant for all $t\geq t_0$. 
Hence, at every step, agent $i$ will continue to match the proposed counter offer for the trades belonging to $\Omega^{ar}$ and shift away her offers for the ones in $\Omega^{ra}$. 
Since agent $-i$ is responding with an opposite behavior and no trade ever reaches a state of being demanded or refused by both agent, as would imply the crossing a facet, the value of the offers will diverge to $\pm \infty$, which is against the assumption of the offers being bounded.

Alternatively, the stationary behavior of $\varphi$ could be caused by the case where $c^t_i=0, \ \text{for } i=1, 2, 3, 4, 6$ , $\ \forall t\geq t_0$,  while  there exists some $t$ such that either $c^t_5,\ c^t_7\ne 0$. 
This means there exists at least a trade $\omega$ that is involved in a swap corresponding to a change of demand as the ones detailed in either Case 5 or 7. Note that for such trade the following holds.
\begin{lemma}\label{lem:unique_loop}
    Assume that the sequence $(\offers^t)_{t\geq t_0}$ performs a loop of length $k$ during which the demand bundle for one of the two agents changes according to either Case 5 or 7 of \Cref{lem:crossing_facets}, and let $\omega$ be one of the trades involved in one of the swaps. Then, throughout the loop, there exists some instants of time $t_1, t_2, t_3, t_4$ during which trade $\omega$ belongs to each of the elements of the partition $\mathcal{P}$.
\end{lemma}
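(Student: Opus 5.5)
We take $\omega$ to be a trade involved in a Case 5 or Case 7 swap at some step of the loop, and follow its membership in the partition $\mathcal{P}$ around the loop. Membership of $\omega$ is determined by two bits: whether agent $i$ demands it and whether agent $-i$ demands it. So it suffices to show that each bit changes value at some point of the loop, and that the two bits move out of phase, so that all four combinations occur.

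First we describe what the swap does to $\omega$. In Case 5, the non-responding agent swaps a trade of $\Omega^{aa}$ with a trade of $\Omega^{ar}$. So the trade leaving $\Omega^{aa}$ passes to $\Omega^{ra}$ (from that agent's viewpoint, relabelled after the roles alternate), and the trade leaving $\Omega^{ar}$ passes to $\Omega^{aa}$. In Case 7, a trade passes from $\Omega^{ra}$ to $\Omega^{rr}$, and a trade passes from $\Omega^{rr}$ to $\Omega^{ra}$. Thus $\omega$ certainly visits two classes, one with both agents agreeing and one with them disagreeing. Next we use periodicity. Since the sequence is a loop of length $k$, the demand bundles at time $t_0+k$ equal those at $t_0$. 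Every change of a demand bit along the loop must therefore be undone later in the loop.

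The only mechanisms for changing a bit while $\varphi$ stays constant are the Case 5 and Case 7 swaps; Cases 1, 2, 3, 4 and 6 are excluded because $c_j^t=0$. A Case 5 swap moves trades only between $\Omega^{aa}$ and $\Omega^{ar}/\Omega^{ra}$, so it changes the bit of one agent inside the "accepted" part. A Case 7 swap moves trades only between $\Omega^{rr}$ and $\Omega^{ra}/\Omega^{ar}$, so it changes a bit inside the "rejected" part. The key step is this. If $\omega$ leaves $\Omega^{aa}$ via a Case 5 swap, it sits in a disagreement class. The responding agent's best response cannot by itself return $\omega$ to $\Omega^{aa}$, because that would be a Case 1 gain. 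So the only way to return is for $\omega$ to pass through $\Omega^{rr}$ via a Case 7 swap and then come back. This forces $\omega$ to visit $\Omega^{aa}$, then one disagreement class, then $\Omega^{rr}$, then the other disagreement class, before the loop closes. The symmetric argument handles the case where $\omega$ starts with a Case 7 swap.

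We expect the main obstacle to be the bookkeeping of labels. The classes $\Omega^{ar}$ and $\Omega^{ra}$ are defined relative to which agent is responding, so after a best response the roles swap and $\Omega^{ar}$ and $\Omega^{ra}$ exchange. We would handle this by fixing agent 1 as the reference for the classes throughout and tracking the pair of demand bits directly. We would also show that $\omega$ must use both disagreement classes, rather than returning through the same one. The argument for this is that the direction of each swap, as read off in the proof of \cref{lem:crossing_facets}, is forced by which agent changes offers. A return through the same class would require the opposite facet crossing by the same agent, and that would make offers on $\omega$ move monotonically, contradicting boundedness as in behaviour 1.
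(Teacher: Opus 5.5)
\textbf{There is a genuine gap in your key step.} You argue that once $\omega$ has left $\Omega^{aa}$ through a Case 5 swap, putting it back into $\Omega^{aa}$ ``would be a Case 1 gain'', so $\omega$ must detour through $\Omega^{rr}$. But Case 5 itself gains the trade from $\Omega^{ar}$; you say so yourself (``the trade leaving $\Omega^{ar}$ passes to $\Omega^{aa}$'').

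Work with your two bits (agent 1's demand bit, agent 2's demand bit). In a response step only the non-responder's bit can change. A Case 5 swap toggles that bit while the responder accepts $\omega$, and a Case 7 swap toggles it while the responder rejects $\omega$. So the admissible moves form the 4-cycle $(a,a)$--$(a,r)$--$(r,r)$--$(r,a)$--$(a,a)$. Both edges at $(a,a)$ are Case 5 edges, usable in either direction, so nothing in the case analysis forces $\omega$ to reach $(r,r)$.

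Your last paragraph does not close the gap. The offer argument should be phrased via periodicity of the loop (zero net change), not boundedness. It is exactly the paper's telescoping identity $\ell^\omega_{\Omega^{ra}}=\ell^\omega_{\Omega^{ar}}>0$, and it only forces $\omega$ into \emph{both disagreement classes}. It does not force both agreement classes. The following walk of $\omega$ shows this.

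Let agent 1 sell $\omega$ to agent 2. Start at agent 1's turn in state $(a,r)$, with ask $x$ and bid $x-\varepsilon$.
\begin{enumerate}
\item Agent 1 matches, so the ask becomes $x-\varepsilon$, and agent 2 gains $\omega$ by Case 5.
\item At agent 2's turn nothing moves on $\omega$, and agent 1 drops it by Case 5, giving $(r,a)$.
\item Agent 1 raises the ask back to $x$, and agent 2 keeps $\omega$.
\item Agent 2 matches, so the bid becomes $x$, and agent 1 regains $\omega$ by Case 5.
\item At agent 1's turn nothing moves on $\omega$, and agent 2 drops it by Case 5.
\item Agent 2 lowers the bid to $x-\varepsilon$, which returns us to the starting configuration.
\end{enumerate}
Every swap is Case 5, and each agent's offer on $\omega$ has zero net change. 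The identity $\ell_{ar}=\ell_{ra}=1$ holds for both agents, both disagreement classes are visited, and $\Omega^{rr}$ never is. Each excursion also returns to $\Omega^{aa}$ through the same class it left by, so your dichotomy ``both classes rather than returning through the same one'' is not a real dichotomy.

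Hence bookkeeping on $\omega$ alone cannot yield the lemma. You would need an additional argument about the other trades paired in the swaps, or about the LIP geometry, to force a Case 7 swap. Be aware that the paper's proof reaches the same point, $\ell_{ra}=\ell_{ar}>0$, and then simply asserts that both a Case 5 and a Case 7 swap must have occurred. The walk above satisfies everything that argument uses, so that final inference is not supplied there either.
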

The above lemma is a powerful characterization of the evolution of $\omega$ within the partition $\mathcal{P}$. Furthermore, it also allows us to conclude that these type of loops cannot occur in a market where there are only three trades. We can deduce this thanks to the fact that $\omega$ has to cross all four sets of the partition throughout the loop, and each change in the composition of the sets in $\mathcal{P}$ that leaves $\varphi$ can happen only via an exchange of two trades, as highlighted by Lemmas \ref{prop:phi_weakly_decresing} and \ref{lem:well-behaved-segments}. This implies that for such loop to happen at any time each element of the partition must contain at least one element, which is not possible if the market counts less than four trades.

The two aforementioned instances cover all the possible cases where the potential function remains stable while the dynamic performs a loop.
Hence  we only have left to analyze the case in which the sequence $(\offers^t)_{t\in\mathbb{N}}$ does not create a loop and simultaneously does not converge.
The impossibility of this event simply follows from the fact that since by assumption the offers are bounded and the agents are modifying each of them by either 0 or $\varepsilon$, the evolution of the dynamic happens over a finite grid.
Hence either the sequence converges or eventually reaches an already visited point and, given that the evolution at every step is fully dictated by the value of the vector $\offers$, it will start repeating itself.
\end{proof2}

\begin{proof2}[Proof of \Cref{lem:unique_loop}]
We observe that since by assumption the sequence $(\offers^t)_{t\geq t_0}$ creates a loop of length $k$, the offers for each trade and both agents have to coincide every $k$ steps; Formally  $\sigma_{\omega}^{i, t}  -   \sigma_{\omega}^{i, t +k} = 0$, where these represent the components of $\offers^{t}$ and $\offers^{t+k}$ relative to the trade $\omega$ for agent $i$. 
Using a telescopic sum it is possible to decompose the difference as 
\begin{align*}
  0 = \sigma^{i,t}_{\omega}  -  \sigma^{i,t+k}_{\omega} 
  = \sum_{s=0}^{k-1} \sigma^{i,t+s}_{\omega}  -  \sigma^{i,t+s+1}_{\omega} \,,
\end{align*}
and furthermore rearranging this in the sum of the differences for the instants of times when $\omega$ belongs to each of the sets of the partition $ \mathcal{P}$, it holds
\begin{align*}
     \sum_{s=0}^{k-1} \sigma^i_{\omega, t + s}  -  \sigma^i_{\omega, t +s + 1}
  =  &\sum_{s: \omega \in \Omega^{aa}} \sigma^{i,t+s}_{\omega}  -  \sigma^{i,t+s+1}_{\omega}
    +\sum_{s: \omega \in \Omega^{ar}} \sigma^{i,t+s}_{\omega}  -  \sigma^{i,t+s+1}_{\omega}\\[1em]
    &+\sum_{s: \omega \in \Omega^{ra}} \sigma^{i,t+s}_{\omega}  -  \sigma^{i,t+s+1}_{\omega}
    +\sum_{s: \omega \in \Omega^{rr}} \sigma^{i,t+s}_{\omega}  -  \sigma^{i,t+s+1}_{\omega}\,.
\end{align*}
Due to the definition of the dynamic, note that the offers are not modified when the trade belongs to either $\Omega^{aa}$ or $\Omega^{rr}$, making the two respective sums in the above expression equal to 0. 
Otherwise, the offer is modified in favor of the moving agent, thus subtracting $\chi_{\omega}^i \varepsilon$ at each step, when $\omega\in\Omega^{ra}$.
Therefore, denoting by $$\ell_{\Omega^{ra}}^{\omega}= \big \vert \{s\in[0,k-1] \ : \omega\in\Omega^{ra} \ \text{at time} \ t+s\,, \  \text{when is agent $i$'s turn to best respond}\}\big \vert $$ the number of time steps that $\omega$ spends in $\Omega^{ra}$, it follows
\[
\sum_{s: \omega \in \Omega^{ra}} \sigma^{i,t+s}_{\omega}  -  \sigma^{i,t+s+1}_{\omega} = \chi_{\omega}^i\varepsilon \ell_{\Omega^{ra}}^{\omega}\,.
\]
This holds since the difference in offers when agent $i$ is not modifying them is obviously 0. During her turn instead, she modifies them starting from the counterparts' ones that, during the previous round, by definition, have been matched to hers, therefore the shift performed during round $t+s+1$ can be seen as a shift of $\pm \chi_{\omega}^i\varepsilon$ with respect to agent $i$'s previous offers.
An equivalent reasoning is possible when instead $\omega \in  \Omega^{ar}$ and instead the offer is modified in favor of the opposite agent of a quantity each time equivalent to $\chi_{\omega}^i\varepsilon$.
Replacing this in the initial expression gives 
\begin{align}
      0= \sum_{s=0}^{k-1} \sigma^{i,t+s}_{\omega}  -  \sigma^{i,t+s+1}_{\omega} 
      =- \chi_{\omega}^i\varepsilon \ell_{\Omega^{ar}}^{\omega} + \chi_{\omega}^i\varepsilon \ell_{\Omega^{ra}}^{\omega} = \chi_{\omega}^i\varepsilon \left( \ell_{\Omega^{ra}}^{\omega} - \ell_{\Omega^{ar}}^{\omega}\right)
\end{align}
Note that this holds for all the trades, including the ones that during the loop are involved in a swap of the kind described in Case 5 or 7.
Let $\omega$ be one of those, whose existence is guaranteed by assumption, then it is not possible for both $\ell_{\Omega^{ra}}^{\omega}$ and $\ell_{\Omega^{ar}}^{\omega}$ to be equal to zero, since in order to be involved in a swap the trade has to belong to one among $\Omega^{ar}$ or $\Omega^{ra}$ either before or after the exchange.
Furthermore, comparing the first and last element of the equality, we can conclude $ \ell_{\Omega^{ra}}=  \ell_{\Omega^{ar}}$ and $ \ell_{\Omega^{ra}},  \ell_{\Omega^{ar}}>0$. This is equivalent to say that there are instants of time throughout the loop when trade $\omega$ belongs to $\Omega^{ar}$ or $\Omega^{ra}$.
Finally, note that, if this holds, then necessarily also $\ell_{\Omega^{aa}}$ and $ \ell_{\Omega^{rr}}$ are strictly positive.
In fact, knowing that $ \ell_{\Omega^{ra}},  \ell_{\Omega^{ar}}>0$, we can conclude that the trade has undergone both a swap corresponding to Case 5 and Case 7, and therefore by the way those are defined, at some point of the loop $\omega$ has belonged to both $\Omega^{aa}$ and $\Omega^{rr}$.
\end{proof2}

\subsection{Proofs of \Cref{sec:clock-market}}\label{app:proofs-price-dynamic}
 
\begin{proof2}[Proof of \Cref{prop:price-dynamic-converges}]\phantomsection\label{proof:prop:price-dynamic-converges}
The proof works by showing that the price-based market dynamic is equivalent to optimizing a specific Lyapunov function through stochastic subgradient descent (SSD). To this end, consider $L$ the function

\begin{equation}
\label{eq:lyapunov}
L(\pb) \coloneqq \sum_{i \in I} \max_{\Phi \subseteq \Omega_i} \left [ v^i(\Phi) - \sum_{\omega \in \Phi} \chi^i_\omega p_\omega \right ]\,,
\end{equation}
which represents the sum of agents' utilities at a given price vector $\pb$. Note that it is possible to decompose such a function as  $L(\pb) = \sum_{i \in I} L^i(\pb)$, where each $L^i$ is defined as 
\[
L^i(\pb) \coloneqq \max_{\Phi \subseteq \Omega_i} \left [ v^i(\Phi) - \sum_{\omega \in \Phi} \chi^i_\omega p_\omega \right ] + \langle\boldsymbol{\chi}^i, \pb\rangle\,,
\]
$\boldsymbol{\chi}^i$ is a $\vert \Omega\vert$-dimensional vector whose components correspond to $\chi^i_{\omega}$ for each trade $\omega$, and $\langle \cdot, \cdot\rangle$ indicates the inner product.

Let $\pb^t$ be the vector of prices at time $t$ and assume that at time $t+1$ agent $i$ is chosen to best respond. As shown in \Cref{sec:network-trading-market}, it is possible to enforce a tie-breaking rule, such that the demanded bundle is unique for each $\pb^t$. From this, the update direction of agent $i$ is given by $\Delta^i \coloneqq (\eb^{\Phi} - \boldsymbol{1})\times\boldsymbol{\chi}^i$, where $\boldsymbol{1}$ is the unitary vector of dimension $\vert\Omega\vert$ and $\Phi$ is the demanded bundle at price $\pb^t$. It is therefore possible to rewrite the price update rule expressed in \Cref{eq:price_update} as 
\[
\pb^{t+1} = \pb^t+ \varepsilon \Delta^i\,.
\]

The following lemma highlights the connection between $\Delta^i$ and the functions $L^i$.
\begin{lemma}\label{lem:Delta_i_subgradient}
The update direction $\Delta^i$ of agent $i$ given current prices $\pb^t$ corresponds to the negative subgradient of $L^i$.
\end{lemma}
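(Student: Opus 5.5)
The plan is to use the fact that $L^i$ is a pointwise maximum of affine functions of $\pb$, plus a linear term. The gradient of the affine piece selected by the demanded bundle is then a subgradient, and I will compute that it equals $-\Delta^i$. Concretely, for each $\Phi \subseteq \Omega_i$ write
\[
f_\Phi(\pb) \coloneqq v^i(\Phi) - \sum_{\omega \in \Phi} \chi^i_\omega p_\omega + \langle \boldsymbol{\chi}^i, \pb \rangle ,
\]
so that $L^i(\pb) = \max_{\Phi \subseteq \Omega_i} f_\Phi(\pb)$. Bundles with $v^i(\Phi) = -\infty$ can be discarded, since the bounded-valuation assumption makes the maximum finite. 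Each $f_\Phi$ is affine with gradient $\boldsymbol{\chi}^i - \eb^{\Phi} \times \boldsymbol{\chi}^i = (\boldsymbol{1} - \eb^{\Phi}) \times \boldsymbol{\chi}^i$. This gradient is zero on coordinates outside $\Omega_i$, because $\chi^i_\omega = 0$ there.

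Next, fix $\pb^t$ and let $\Phi = d^i(\pb^t)$ be the bundle selected by the tie-breaking rule. By \cref{lem:tie_breaking_rule} (or directly by the definition of $d^i$), $\Phi \in D^i(\pb^t)$, so $f_\Phi(\pb^t) = L^i(\pb^t)$. Set $\vec{g} \coloneqq (\boldsymbol{1} - \eb^{\Phi}) \times \boldsymbol{\chi}^i = -\Delta^i$.

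I will verify the subgradient inequality directly rather than invoking Danskin's theorem. For any $\qb \in \R^\Omega$, since $f_\Phi$ is affine,
\[
L^i(\qb) \geq f_\Phi(\qb) = f_\Phi(\pb^t) + \langle \vec{g}, \qb - \pb^t \rangle = L^i(\pb^t) + \langle \vec{g}, \qb - \pb^t \rangle .
\]
Hence $\vec{g} \in \partial L^i(\pb^t)$. It follows that $\Delta^i = -\vec{g}$ is a negative subgradient, and the update $\pb^{t+1} = \pb^t + \varepsilon \Delta^i$ is a subgradient step on $L^i$. In passing, this also shows that $L^i$ is convex, as a maximum of affine functions.

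The only point needing care is the sign bookkeeping. I will check it coordinate-wise: for $\omega \in \Phi$ the entry of $\vec{g}$ is $0$, so the price is unchanged; for $\omega \in \Omega_i \setminus \Phi$ the entry is $\chi^i_\omega$, so $p_\omega$ moves by $-\varepsilon\chi^i_\omega$. This matches \eqref{eq:price_update}. The rest is routine, and I do not expect a real obstacle.
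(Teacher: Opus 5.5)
Your proof is correct and is essentially the paper's argument. Both lower-bound $L^i(\qb)$ by the affine piece of the demanded bundle $\Phi = d^i(\pb^t)$, which is tight at $\pb^t$; this yields the subgradient inequality with $-\Delta^i = (\boldsymbol{1} - \eb^{\Phi}) \times \boldsymbol{\chi}^i$. Your max-of-affine framing also gives convexity of $L^i$ directly, which the paper proves separately in \cref{lem:conv_L_tilde}.
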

Thanks to this lemma it is possible to cast the market dynamic of \Cref{alg:price-based-dynamics} as a stochastic subgradient descent routine as detailed in \Cref{alg:SSD_dynamic}.

\begin{algorithm}[tb!]
\begin{algorithmic}[1]
\Require Function $L$, step size $\varepsilon$, bound on valuations $R$.
\State Initialize starting vector of prices $\pb^0 \in [-R, R]^{\vert\Omega\vert}$ arbitrarily.
\For{$t \in \{1, \ldots, T\}$}
    \State Draw an index $i \in I$ uniformly at random.
    \State Let $-\Delta^i$ be the subgradient of the function $L^i$ in $\pb^t$.
    \State Update current price $\pb^{t+1} = \pb^t + \varepsilon\Delta^i$.
\EndFor
\State \Return $\pb^T$.
\end{algorithmic}
\label{alg:SSD_dynamic}
\caption{Stochastic Subgradient Descent}
\end{algorithm}

Finally, note that reaching the minimum of the function $L$ is equivalent to reaching an equilibrium in the original market, since the following holds.

\begin{lemma}\label{lem:min_lyapunov}
$L$ is minimized at competitive equilibrium prices $\pb^*$, at which $L(\pb^*) = w(I)$. At any non-equilibrium prices $\pb$, we have $L(\pb) > w(I)$.
\end{lemma}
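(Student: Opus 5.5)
The plan is to bound $L$ from below by $w(I)$ through a duality (weak-duality) argument, and then characterize when the bound is tight using the competitive-equilibrium definition and the first welfare theorem.

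\emph{Lower bound.} Fix any prices $\pb \in \R^\Omega$ and any efficient allocation $\Phi^\star$. For each agent $i$, the maximum in $L^i$ (the $i$-th summand of \eqref{eq:lyapunov}) is at least the value of the particular bundle $\Phi^\star_i$:
\[
L(\pb) \geq \sum_{i \in I} \Big[ v^i(\Phi^\star_i) - \sum_{\omega \in \Phi^\star_i} \chi^i_\omega p_\omega \Big] = w(I) - \sum_{\omega \in \Phi^\star} p_\omega \big(\chi^{b(\omega)}_\omega + \chi^{s(\omega)}_\omega\big) = w(I).
\]
The key observation is that the price terms cancel trade by trade. Feasibility puts each $\omega \in \Phi^\star$ in both $\Phi^\star_{b(\omega)}$ and $\Phi^\star_{s(\omega)}$, where it enters with $\chi = 1$ and $\chi = -1$ respectively.

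\emph{Tightness at equilibrium.} If $\pb^*$ is a competitive equilibrium price vector with allocation $\Phi$, then each $\Phi_i \in D^i(\pb^*)$, so every maximum is attained at $\Phi_i$. The same cancellation then gives $L(\pb^*) = \sum_i v^i(\Phi_i)$. This equals $w(I)$ by the first welfare theorem, which states that competitive equilibrium allocations are efficient. One could also avoid citing that theorem by combining this equality with the lower bound above. Hence $L$ attains its minimum value $w(I)$ at competitive equilibrium prices.

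\emph{Strictness away from equilibrium.} This is the step that needs the most care. I would argue it by contrapositive. Suppose $L(\pb) = w(I)$ and let $\Phi^\star$ be efficient. Then equality must hold in every per-agent inequality of the lower-bound chain, because each term is individually $\geq$ and the sums agree. Equality for agent $i$ means $\Phi^\star_i$ attains the maximum, i.e.\ $\Phi^\star_i \in D^i(\pb)$ for all $i$, so $(\pb, \Phi^\star)$ is a competitive equilibrium and $\pb$ is an equilibrium price vector. Consequently $L(\pb) > w(I)$ whenever $\pb$ is not an equilibrium price.

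This argument shows more: at any equilibrium price vector, every efficient allocation is supported. It relies only on the existence of some efficient allocation, which holds since $\Omega$ is finite. It does not need full substitutability, except to guarantee that $P^* \neq \emptyset$ so that the minimum is actually attained. Substitutability gives existence of a competitive equilibrium by \citet{hatfield2013stability}, and I would cite that for the first sentence of the lemma.
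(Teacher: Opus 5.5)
Your proof is correct and follows essentially the same route as the paper's: bound each agent's indirect utility below by the value of an efficient bundle, cancel the price terms trade by trade to obtain $w(I)$, and get strictness because equality in every per-agent term would mean $\pb$ supports the efficient allocation and is therefore an equilibrium price. Your contrapositive with an arbitrary efficient allocation is a more explicit version of the paper's brief ``$\pb$ cannot support $\Phi$'' step. As you note, it also removes the need for equilibrium existence in the strictness claim.
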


Once the equivalence between the two methods is established, we can prove the convergence using standard results from convex optimization \citep{garrigos2023handbook}.

Thus, let $\pb^*$ be an equilibrium price. We can assume that $\pb^*\in [-R, R]^m$. Note that this bound holds trivially for the components of the vector associated to active trades and the others can be chosen in the same interval without loss of generality.
It is possible to decompose the difference between $\pb^{t+1}$ and $\pb^*$ as 
\begin{align*}
    \Vert \pb^{t+1} - \pb^*\Vert^2 
    &= \Vert\pb^t -\pb^* + \varepsilon\Delta^{i_t}\Vert^2\\
    &= \Vert\pb^t -\pb^*\Vert^2 +2\varepsilon \langle \Delta^{i_t}, \pb^t -\pb^*\rangle + \varepsilon^2\Vert\Delta^{i_t}\Vert^2\,.
\end{align*}
Note that, by definition, the components of the vector $\Delta^{i_t}$ are bounded by 1, hence  $\Vert\Delta^{i_t}\Vert^2 \leq \Delta$, where the latter corresponds to the maximum vertex degree in the market network. Moreover, since, given $\pb^t$, the index $i$ is chosen uniformly at random among $n$ possible ones, it holds that 
\[
n \bbE[-\Delta^{i_t}\vert \pb^t]= \sum_{i=1}^n-\Delta^i\,.
\]
It is straightforward to see that this corresponds to a subgradient of the function $L$ computed in $\pb^t$.

Therefore, taking the conditional expectation with respect to $\pb^t$ in the expression above leads to 
\begin{align*}
    \bbE[  \Vert \pb^{t+1} - \pb^*\Vert^2 \vert \pb^t]
    &= \Vert\pb^t -\pb^*\Vert^2 -2 \varepsilon\langle \bbE[-\Delta^{i_t}\vert \pb^t], \pb^t -\pb^*\rangle + \varepsilon^2\bbE[\Vert\Delta^{i_t}\Vert^2\vert \pb^t]\\
    &\leq \Vert\pb^t -\pb^*\Vert^2 -\frac{2\varepsilon}{n}\langle n\bbE[-\Delta^{i_t}\vert \pb^t], \pb^t -\pb^*\rangle + \varepsilon^2\bbE[\Delta\vert \pb^t]\\
    &= \Vert\pb^t -\pb^*\Vert^2 -\frac{2\varepsilon}{n}\langle\sum_{i=1}^n-\Delta^i, \pb^t-\pb^*\rangle + \varepsilon^2\Delta\\
    & \leq \Vert\pb^t -\pb^*\Vert^2 -\frac{2\varepsilon}{n}\left(L(\pb^t)-L(\pb^*)\right)+ \varepsilon^2\Delta
\end{align*}

Therefore, taking the expectation with respect to $\pb^t$, and rearranging the terms, this can be rewritten as
\begin{align*}
    \frac{2\varepsilon}{n}\bbE[L(\pb^t)-L(\pb^*)] \leq  \bbE[  \Vert \pb^{t} - \pb^*\Vert^2 ] -  \bbE[  \Vert \pb^{t+1} - \pb^*\Vert^2] + \varepsilon^2\Delta
\end{align*}
Summing up over the rounds from $0$ to $T-1$, we obtain 
\begin{align*}
     \frac{2\varepsilon}{n} \sum_{t=0}^{T-1} \bbE[L(\pb^t)-L(\pb^*)]
     &\leq \sum_{t=0}^{T-1} \bbE[  \Vert \pb^{t} - \pb^*\Vert^2 ] -  \bbE[  \Vert \pb^{t+1} - \pb^*\Vert^2] + \varepsilon^2\Delta(T-1)\\
     &\leq \Vert \pb^0 - \pb^*\Vert^2 + \varepsilon^2\Delta(T-1)
\end{align*}
where the last inequality is obtained by computing the telescoping sum and bounding by removing the negative term $\bbE[\Vert \pb^{T} - \pb^*\Vert^2 ]$. Dividing both terms by $\nicefrac{2\varepsilon T}{n}$, the expression becomes
\begin{align*}
    \frac{1}{T}\sum_{t=0}^{T-1} \bbE[L(\pb^t)-L(\pb^*)] \leq \frac{n\Vert \pb^0 - \pb^*\Vert^2} {2\varepsilon T} + \frac{n\Delta\varepsilon}{2}\,.
\end{align*}
Consider now the following lemma
\begin{lemma}\label{lem:conv_L_tilde}
    The function $L$ defined above is convex.
\end{lemma}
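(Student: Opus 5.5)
The plan is to show that $L$ is a finite sum of maxima of affine functions of $\pb$; such a function is convex. By \eqref{eq:lyapunov} we have $L(\pb) = \sum_{i \in I} \max_{\Phi \subseteq \Omega_i} g^i_\Phi(\pb)$, where for each agent $i$ and each bundle $\Phi \subseteq \Omega_i$ we set
\[
g^i_\Phi(\pb) \coloneqq v^i(\Phi) - \sum_{\omega \in \Phi} \chi^i_\omega p_\omega .
\]

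First, for fixed $i$ and $\Phi$, the map $g^i_\Phi$ is affine in $\pb$. It is a constant $v^i(\Phi)$ plus the linear term $-\langle \eb^{\Phi} \times \boldsymbol{\chi}^i, \pb\rangle$. Under the standing boundedness assumption, $v^i(\Phi)$ is finite for every feasible bundle. Bundles with $v^i(\Phi) = -\infty$ contribute the constant function $-\infty$, which never attains the maximum, because $\Phi=\emptyset$ always gives the finite value $0$. We can therefore drop those bundles and keep a finite family of genuine affine functions.

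Second, for each agent $i$, the indirect utility $\max_{\Phi} g^i_\Phi$ is the pointwise maximum of finitely many affine, hence convex, functions. It is therefore convex, and it is finite everywhere because the family contains $\emptyset$. For completeness, one can verify the convexity inequality directly: for $\lambda \in [0,1]$, choose $\Phi$ attaining the maximum at $\lambda\pb + (1-\lambda)\qb$. Then
\[
g^i_\Phi(\lambda\pb + (1-\lambda)\qb) = \lambda g^i_\Phi(\pb) + (1-\lambda) g^i_\Phi(\qb) \le \lambda \max_\Psi g^i_\Psi(\pb) + (1-\lambda)\max_\Psi g^i_\Psi(\qb).
\]

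Third, $L$ is a sum over agents of convex functions, so it is convex. The same argument applies to the decomposition $L = \sum_i L^i$ used in \Cref{lem:Delta_i_subgradient}. Each $L^i$ equals the indirect utility plus the linear term $\langle \boldsymbol{\chi}^i, \pb\rangle$, so it is also convex, and the linear terms sum to zero because every trade has exactly one buyer and one seller. I do not expect any real obstacle. The only point needing care is the handling of $-\infty$ values, which the second step addresses.
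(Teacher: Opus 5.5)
Your proposal is correct and is essentially the paper's argument. The paper also fixes a bundle $\Phi$ demanded at $t\pb+(1-t)\qb$, uses that the utility of this fixed bundle is affine in prices to bound $L^i(t\pb+(1-t)\qb)$ by $tL^i(\pb)+(1-t)L^i(\qb)$ (phrased there as a contradiction), and concludes because a sum of convex functions is convex; your direct ``maximum of affine functions'' framing, with the remark that the terms $\langle \boldsymbol{\chi}^i,\pb\rangle$ cancel across agents, is just a cleaner presentation of the same idea.
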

Combining this with Jensen's inequality, it is possible to deduce
\[
\bbE[L(\bar{\pb}^T)-L(\pb^*)] \leq \frac{1}{T}\sum_{t=0}^{T-1} \bbE[L(\pb^t)-L(\pb^*)]\,,
\]
and thus
\[
\bbE[L(\bar{\pb}^T)-L(\pb^*)] \leq \frac{2nmR^2} {2\varepsilon T} + \frac{n\Delta\varepsilon}{2}\,,
\]
where we have used the fact that, thanks to the choice of $\pb^0$ and $\pb^*$,  $\Vert \pb^0 - \pb^*\Vert^2 \leq 2mR^2$. Finally, choosing $\varepsilon$ as in the statement, we obtain
\[
\bbE[L(\bar{\pb}^T)-L(\pb^*)] \leq nR\sqrt{\frac{2m\Delta}{T}}\,. 
\]
To obtain a final bound on the quantity  $\Vert \bar{\pb}^T - \pb^*\Vert_{\infty}$,  note that since $L$ is convex, there exists a $\Phi \subseteq \Omega$ such that $\pm \eta \eb^\Phi$ is a steepest descent direction (cf.~\citet[Theorem 2.5]{Shioura2017algorithms}). Without loss of generality, suppose it is $\eta\eb^\Phi$, as the case $-\eta\eb^\Phi$ is analogous. Due to the structure of the aggregate LIP of $v$, there is a bundle, say $\Psi$, that is demanded at both $\pb$ and $\pb' \coloneqq \pb + \eta \eb^\Phi$.
\[
    L(\pb) - L(\pb') = v(\Psi) - \langle\pb , \eb^{\Psi} \rangle- (v(\Psi) - \langle\pb' ,\eb^{\Psi}\rangle) = \langle\eta \eb^\Phi , \eb^{\Psi}\rangle > 0.
\]
The last inequality follows from the fact that $\pb \notin P^*$, so $L$ strictly decreases in the steepest descent direction $\eta \eb^\Phi$. Moreover, since the vectors $\eb^\Phi$ and $\eb^{\Psi}$ have components either 0 or 1, the fact that the difference in the function is non-zero tells us that the inner product between the two vectors is at least 1, hence $\langle\eta \eb^\Phi, \eb^\Psi\rangle > \eta$. 
This signifies the fact that a shift of $\eta$ in any direction from an initial price $\pb$ causes the function to increase of at least $\eta$. Therefore, choosing $\eta = nR\sqrt{\frac{2m\Delta}{T}}$, we can conclude that since $\bbE[L(\bar{\pb}^T)-L(\pb^*)] \leq \eta$, it must hold 
\[
    \bbE[\Vert \bar{\pb}^T - \pb^*\Vert_{\infty}] \leq nR\sqrt{\frac{2m\Delta}{T}}\,.
\]
\end{proof2}

\begin{proof2}[Proof of \Cref{lem:Delta_i_subgradient}]
Using the vector notation, let $\Phi$ be the bundle of demanded trades by agent $i$ at $\pb$. Due to the definition of $L^i$, for any other vector of prices $\qb$, it holds
\[
L^i(\qb) = \max_{\Psi} \left [ v^i(\Psi) - \langle \boldsymbol{\chi^i}\times\eb^{\Psi}, \qb \rangle \right ] + \langle\boldsymbol{\chi^i} , \qb\rangle \geq v^i(\Phi) -\langle \boldsymbol{\chi^i}\times\eb^{\Phi}, \qb \rangle + \langle \boldsymbol{\chi^i}, \qb\rangle\,.
\]
Hence
\begin{align*}
L^i(\qb) - L^i(\pb) 
&\geq v^i(\Phi) - \langle\boldsymbol{\chi^i}\times\eb^{\Phi}, \qb\rangle + \langle \boldsymbol{\chi^i},\qb\rangle -\left(v^i(\Phi) - \langle \boldsymbol{\chi^i}\times\eb^{\Phi}, \pb\rangle + \langle\boldsymbol{\chi^i}, \pb\rangle\right)\\[1em]
&=\langle \boldsymbol{\chi^i}\times\eb^{\Phi}, \pb - \qb\rangle + \langle \boldsymbol{\chi}^i ,\qb - \pb\rangle\\[1em]
&= \langle\boldsymbol{\chi^i}- \boldsymbol{\chi^i}\times\eb^{\Phi},\qb - \pb\rangle\\[1em]
&= \langle -\Delta^i ,\qb - \pb\rangle 
\end{align*}
Therefore $-\Delta^i$ is a subgradient of $L^i$ at $\pb$.
\end{proof2}

\begin{proof2}[Proof of \Cref{lem:min_lyapunov}]
This is a standard result from literature (\textit{e.g.} \cite{fujishige2025universally}) that we report for the sake of completeness.

Let $\pb^*$ be a competitive equilibrium price vector. Then there exists a corresponding competitive equilibrium allocation $\Phi$ such that
\[
L(\pb^*) = \sum_{i \in I} \left [ v^i(\Phi_i) - \sum_{\omega \in \Phi_i} \chi^i_\omega p_\omega \right ] = \sum_{i \in I} v^i(\Phi_i) = w(I)\,,
\]
where we have used the fact that $\sum_{i \in I} \sum_{\omega \in \Phi_i} \chi^i_\omega p_\omega = 0$, since each trade has one buyer and one seller.
The last equality holds because competitive equilibria are efficient. For any non-equilibrium prices $\pb$, and any agent $i \in I$, we have
\begin{equation*}
\max_{\Psi \subseteq \Omega_i} \left [ v^i(\Psi) - \sum_{\omega \in \Psi} \chi^i_\omega p_\omega \right ] \geq v^i(\Phi_i) - \sum_{\omega \in \Phi_i} \chi^i_\omega p_\omega.
\end{equation*}
Moreover, this inequality is strict for at least one agent, as $\pb$ cannot support the efficient allocation $\Phi$. This implies that $L(\pb) > L(\pb^*) = w(I)$.
\end{proof2}

\begin{proof2}[Proof of \Cref{lem:conv_L_tilde}]
In order to prove that $L$ is a convex function we will show that the individual $L^i$ are.
To this end, fix an $i$ and assume by contradiction the opposite. Specifically that, given $\pb$ and $\qb$ vector of prices, it exists some $t \in [0, 1]$ for which
\[
L^i(t\pb+(1-t)\qb) > tL^i(\pb) +(1-t)L^i(\qb)
\]
Let $\Phi$ the bundle of trades demanded at price $t\pb+(1-t)\qb$, by definition it holds
\[
L^i(t\pb+(1-t)\qb) = v^i(\Phi) - \sum_{\omega \in \Phi }\chi_{\omega}^i(tp_{\omega}+(1-t)q_{\omega}) + \langle \boldsymbol{\chi}^i, t\pb+(1-t)\qb\rangle\,.
\]
Moreover, since the value of the function $L^i$ correspond to the maximal utility achievable at a certain price, it holds
\[
L^i(\pb) \geq v^i(\Phi) - \sum_{\omega \in \Phi }\chi_{\omega}^ip_{\omega} +\langle  \boldsymbol{\chi}^i,\pb\rangle
\]
and 
\[
L^i(\pb) \geq v^i(\Phi) - \sum_{\omega \in \Phi }\chi_{\omega}^iq_{\omega} + \langle \boldsymbol{\chi}^i,\qb\rangle\,.
\]
From which 
\begin{align*}
   v^i(\Phi) - \sum_{\omega \in \Phi }\chi_{\omega}^i(tp_{\omega}&+(1-t)q_{\omega}) +\langle \boldsymbol{\chi}^i, t\pb+(1-t)\qb\rangle \\[1em]
   &> tL^i(\pb) +(1-t)L^i(\qb)\\[1em]
   &\geq t\left( v^i(\Phi) - \sum_{\omega \in \Phi }\chi_{\omega}^ip_{\omega} + \langle\boldsymbol{\chi}^i,\pb\rangle \right) +(1-t)\left(v^i(\Phi) - \sum_{\omega \in \Phi }\chi_{\omega}^iq_{\omega} + \langle \boldsymbol{\chi}^i,\qb\rangle\right)\\[1em]
   &=v^i(\Phi) - \sum_{\omega \in \Phi }\chi_{\omega}^i(tp_{\omega}+(1-t)q_{\omega}) + \langle\boldsymbol{\chi}^i, t\pb+(1-t)\qb\rangle
\end{align*}
From which, simplifying the price term we find $v^i(\Phi) > v^i(\Phi) $, which is a contradiction.

The proof for the general function $L$ then follows from the fact that a sum of convex functions is itself 
convex.
\end{proof2}

\subsection{Proofs of \Cref{sec:stability-fairness}}
\label{sec:proofs-stability-fairness}
\begin{proof2}[Proof of \cref{prop:core-imputation-implementation}]\phantomsection\label{proof:prop:core-imputation-implementation}
Fix some prices $\qb$ that extend $\Phi$ to a competitive equilibrium $(\qb, \Phi)$. \cite{hatfield2013stability} show that such prices exist and that $(\qb, \Phi)$ with prices restricted to $\Phi$ is a core outcome, as preferences are fully substitutable.

Suppose we (temporarily) remove all trades apart from $\Phi$ from the market network. Then the network is partitioned into $k \geq 1$ connected components. Let $I_1, \ldots, I_k$ be the agents of these components. First we establish that, for any $j \in [k]$,
\begin{equation}
\label{eq:component-utility-sum}
    \sum_{i \in I_j} u^i(\Phi, \qb) = w(I_j).
\end{equation}
Indeed, $\sum_{i \in I_j} u^i(\Phi, \qb) \geq w(I_j)$ follows immediately from the coalitional rationality that $(\Phi, \qb)$ satisfies as a core outcome. For the converse inequality, we note that the buyer and seller in each trade of $\Phi$ are either both in $I_j$, or both outside $I_j$ (as $I_j$ are the agents of a connected component of the market network). Letting $\Phi_{I_j}$ be the set of trades between agents in $I_j$, we get $\sum_{i \in I_j} u^i(\Phi, \qb) = \sum_{i \in I_j} u^i(\Phi_{I_j}, \qb)$. As $\Phi_{I_j}$ is also a valid set of trades for the market $\mathcal{M}[I_j]$ restricted to agents $I_j$ and $w(I_j)$ is the maximum social welfare achievable in this sub-market, we have $\sum_{i \in I_j} u^i(\Phi_{I_j}, \qb) \leq w(I_j)$.

Moreover, as the $I_1, \ldots, I_k$ are disjoint and $(\Phi, \qb)$ is a core outcome,
\[
    \sum_{j \in [k]} \sum_{i \in I_j} u^i(\Phi, \qb) = \sum_{i \in I} u^i(\Phi, \qb) = w(I).
\]

We now see what this implies about the payoff vector $\ub$. As $\ub$ lies in the payoff core, it satisfies $\sum_{i \in I} u_i = w(I)$ and $\sum_{i \in C} u_i \geq w(C)$ for all $C \subseteq I$. Now we use the following fact: for any two vectors $\yb, \zb \in \R^k$ with $\sum_{i \in [k]} y_i = \sum_{i \in [k]} z_i$ and $\yb \geq \zb$, we have $\yb = \zb$. By defining $\yb$ and $\zb$ as $y_j = \sum_{i \in I_j} u_i$ and $z_j = \sum_{i \in I_j} u^j(\Phi, \qb)$, and verifying that $y_j \geq w(I_j) = z_j$ for all $j \in [k]$, we thus see that
\begin{equation}
\label{eq:equality-by-component}
\sum_{i \in I_j} u_i = \sum_{i \in I_j} u^i(\Phi, \qb), \quad j \in [k].
\end{equation}

Finally, we can use transferability of utility between any two connected agents in the market network to show that there exist prices at which $u_i = u^i(\Phi, \qb)$ for all $i \in I$. Suppose equality doesn't hold for some agent with prices $\qb$. By \cref{eq:equality-by-component}, it follows that $u_i < u^i(\Phi, \qb)$ and $u_l > u^l(\Phi, \qb)$ for some two agents $i,l \in I_j$. As they both lie in the same component, there exists a path from $i$ to $l$ in the market network that consists of trades from $\Phi$. We can now modify the prices of all these trades by adding or subtracting the same constant to get prices $\qb'$ at which $u_i = u^i(\Phi, \qb')$ or $u_l = u^l(\Phi, \qb')$ (or both) holds. This price-adjustment procedure does not change the utility of the `interior' agents on the path. (We give more details on this procedure immediately below this proof.)

We repeat this process of finding two agents in the same connected component who do not satisfy equality $u_i = u^i(\Phi, \qb)$, and shifting prices along a path between them, until all agents satisfy the equality. Note that we are guaranteed to make progress, as in every step we strictly reduce the number of agents who do not satisfy equality. Letting $\pb$ denote the final prices, we see that all agents now satisfy the equality $u_i = u^i(\Phi, \pb)$, and so $(\Phi, \pb)$ is a core outcome that achieves payoffs~$\ub$.
\end{proof2}

Fix $(\Phi, \qb)$, and suppose $i$ and $j$ are two agents connected by a path $P$ of trades from $\Phi$. Call a trade on this path \textit{forward} if it points in the direction of $j$, and \textit{backward} if it points in the direction of $i$. We claim that by adding $\varepsilon$ to the price of all forward trades and subtracting $\varepsilon$ from the price of all backward trades, we get prices $\qb'$ that satisfy the following:
\begin{itemize}
    \item the utility of agent $i$ is increased by $\varepsilon$, so $u^i(\Phi, \qb') = u^i(\Phi, \qb) + \varepsilon$;
    \item the utility of agent $j$ is decreased by $\varepsilon$, so $u^j(\Phi, \qb') = u^j(\Phi, \qb) - \varepsilon$;
    \item the utility of all other agents along the path remains the same, so $u^l(\Phi, \qb') = u^l(\Phi, \qb)$ for $l \notin \{i,j\}$.
\end{itemize}
To see this for agent $i$, note that if the (unique) trade involving $i$ is a forward trade, then it is a selling trade for the agent whose price increases, and so her utility increases; and if the trade is a backward trade, then it is a buying trade and so her utility decreases. The argument for agent $j$ is analogous. Finally, any other agent $l$ along the path is involved in two trades. If both trades are buying, or both selling, then one is forward and one is backward, so the price changes cancel out when writing out the agent's utility at $\qb'$. On the other hand, if one trade is buying, and the other is selling, then both are forward or both are backward, and so the price changes also cancel out.

\begin{proof2}[Proof of \Cref{lem:pos-utility-3-agents}]
\phantomsection\label{proof:lem:pos-utility-3-agents}
Fix a market with welfare function $w$. Suppose there are exactly three essential agents, $1,2$ and $3$; the case for $1$ or $2$ agents is analogous but simpler. All inessential agents in the market get utility $0$ in any core imputation. Let $C$ be the set of all core imputations. If, for each essential agent $i \in \{1,2,3\}$, the core contains an imputation with $x_i > 0$, then the leximin core outcome gives each essential agent a positive utility and we are done. So, without loss of generality, suppose $C$ does not contain any imputations in which agent $3$ gets positive utility (i.e., $x_3 = 0, \forall \xb \in C $). Hence, $x_1 + x_2 = w(\{1,2,3\})$ and $x_3 = 0$ for all core imputations $\xb \in C$. Fix any core imputation $\xb$.

We now argue that $w(\{1,2\}) = w(\{1,2,3\})$. Suppose not, so $w(\{1,2\}) = w(\{1,2,3\}) - \varepsilon$ for some $\varepsilon \in (0,1]$. Construct the imputation $\yb = (x_1 - \delta^1, x_2 - \delta^2, x_3 + \delta^1 + \delta^2)$ with $\delta^1, \delta^2 \geq 0$ chosen such that $\delta^1 + \delta^2 \leq \varepsilon$ and $y_1, y_2 \geq 0$. Such an $\yb$ exists because at least one of $x_1$ and $x_2$ is strictly positive. We see that $\yb$ is a core imputation, as it satisfies all core constraints:
\begin{equation*}
\begin{aligned}
    y_1, y_2, y_3 &\geq 0,\\
    y_1 + y_2 &= x_1 + x_2 - \delta^1 - \delta^2 &&\geq w(\{1,2,3\}) - \varepsilon = w(\{1, 2\}), \\
    y_1 + y_3 &= x_1 + x_3 + \delta^2 &&\geq w(\{1, 3\}), \\
    y_2 + y_3 &= x_2 + x_3 + \delta^1 &&\geq w(\{2, 3\}),\\
    y_1 + y_2 + y_3 &= x_1 + x_2 + x_3 &&= w(\{1, 2, 3\}).
\end{aligned}
\end{equation*}
But this is a contradiction, as we have $y_3 > 0$ but $z_3 = 0$ holds for all core imputations $\zb \in C$ by assumption. So $w(\{1,2\}) = w(\{1,2,3\})$, implying that agent $3$ is inessential, as removing them does not reduce the market value. This contradicts our assumption that agent $3$ is essential, and we are done.
\end{proof2}

\begin{proof2}[Proof of \cref{prop:leximin-leximax-minvar-coincide-3-agents}]\phantomsection\label{proof:prop:leximin-leximax-minvar-coincide-3-agents}
We know that inessential agents get utility $0$ in any core imputation. So we assume, without loss of generality, that the market has three agents $\{1,2,3\}$. We also assume that the value of the market is $1$. This implies that we can describe the core $C$ by non-negative coefficients $a_1 \leq a_2 \leq a_3$ such that
\[
C \coloneqq \{ \xb \mid x_1 + x_2 + x_3 = 1, \text{ and } 0 \leq x_i \leq a_i \text{ for } i = 1,2,3 \}.
\]
As the core is non-empty, we must have $a_1 + a_2 + a_3 \geq 1$.

We now distinguish between three cases. In each case, we define a point $\xb$ in the core, and show that it is the leximin, leximax, and minimum variance core imputation. Note that $a_1 \leq a_2 \leq a_3$ implies $a_3 \geq \frac{1}{3}$. Moreover, $\xb$ is the minimum variance core imputation iff it minimises, for any scalar $c$, the function
\[
    \varphi_c(\yb) \coloneqq \sum_{i=1}^3 \left ( y_i - c \right )^2.
\]

\textbf{Case 1:} Suppose $a_1 \geq \frac{1}{3}$. We define $\xb = (\frac{1}{3}, \frac{1}{3}, \frac{1}{3})$. This point is clearly in the core and leximin as well as leximax. Moreover, $\xb$ also minimises $\varphi_{\frac{1}{3}}(\yb)$ over all $\yb \in C$, and so $\xb$ is minvar.

\textbf{Case 2:} Suppose $a_1 < \frac{1}{3}$ and $a_2 \geq \frac{1-a_1}{2}$. Note that $a_1 < \frac{1}{3} < \frac{1-a_1}{2}$. We define $\xb = (a_1, \frac{1-a_1}{2}, \frac{1-a_1}{2})$, which is clearly in the core.

We first note that if the leximin (leximax) solution $\yb$ satisfies $y_1 = x_1 = a_1$, then it follows from $a_2, a_3 \geq \frac{1-a_1}{2}$ that $\yb = \xb$ and so $\xb$ is leximin (leximax).
Every entry of the leximin $\yb$ must be weakly greater than $x_1$, so $y_1 = a_1 = x_1$ and $\xb$ is leximin. Now suppose $\yb$ is leximax and $y_1 < x_1 = a_1$. Then $y_2 + y_3 > 1 - a_1$, and so $y_2$ or $y_3$ is strictly greater than $\frac{1-a_1}{2}$. But then $\yb$ is not leximax, as it has an entry strictly greater than the maximum entry of $\xb$. So we have $y_1 = x_1 = a_1$, and $\xb$ is leximax.

Finally, $\xb$ is minvar because $\xb$ minimises $\varphi_{\frac{1-a_1}{2}}(\yb)$ over all $\yb \in C$.

\textbf{Case 3:} Suppose $a_1 < \frac{1}{3}$ and $a_2 < \frac{1-a_1}{2}$. We define $\xb = (a_1, a_2, 1 - a_1 - a_2)$.
\begin{itemize}
    \item Secondly, $a_2 < \frac{1 - a_1}{2}$ implies $1 - a_1 - a_2 > a_2$.
\end{itemize}
Clearly, $x_1 + x_2 + x_3 = 1$, $0 \leq x_1 \leq a_1$ and $0 \leq x_2 \leq a_2$. Secondly, $a_1 + a_2 + a_3 \geq 1$ and our assumptions on $a_1$ and $a_2$ imply that $\frac{1}{3} < 1 - a_1 - a_2 \leq a_3$. Hence $0 \leq x_3 \leq a_3$, so $\xb \in C$.

Next we argue that $\xb$ is leximin and leximax. For any $\yb \in C$, we see that $y_3$ is strictly larger than $y_1$ and $y_2$, as
\[
 y_3 = 1 - y_1 - y_2 \geq 1 - a_1 - a_2 > a_2 \geq a_1,
\]
and the core inequalities give us $a_2 \geq y_2$ as well as $a_1 \geq y_1$. As $x_1 \geq y_1$ and $x_2 \geq y_2$ for any core imputation $\yb$, it follows that $\xb$ is leximin and leximax.

To see that $\xb$ is minvar, we first note that $x_1 = a_1 < x_2 = a_2 < \frac{1-a_1}{2} < x_3 = 1 - a_1 - a_2$. It follows that $\xb$ is minvar, because $\xb$ minimises $\varphi_{\frac{1-a_1}{2}}(\yb)$ over all $\yb \in C$.
\end{proof2}

\begin{proposition}
\label{proposition:CE-taxed-market-is-efficient}
    The allocation $\Phi$ of any competitive equilibrium $(\pb, \Phi)$ of the taxed market defined in \cref{sec:impossibility-of-fairness-in-the-core} is an efficient set of trades for the original market.
\end{proposition}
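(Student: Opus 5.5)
The plan is to sum the taxed equilibrium conditions over all agents and compare the result with an arbitrary alternative allocation. Let $(\pb, \Phi)$ be a competitive equilibrium of the taxed market with rate $\alpha \in [0,1]$, and fix any feasible allocation $\Psi \subseteq \Omega$. For each agent $i$, the bundle $\Psi_i = \Psi \cap \Omega_i$ is an admissible deviation. So the equilibrium condition gives
\[
\frac{\alpha}{|I|}W(\Phi) + (1-\alpha)u^i(\pb,\Phi_i) \ \ge\ \frac{\alpha}{|I|}W(\Psi) + (1-\alpha)u^i(\pb,\Psi_i).
\]

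Before summing, one modelling point must be fixed: what $W$ means when agent $i$ deviates unilaterally. Following the observation preceding the proposition, I read $W(\Phi)$ as the welfare of the allocation under consideration, so that agent $i$'s deviation to $\Psi_i$ is evaluated at the welfare of the full allocation $\Psi$ being compared.

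Under this reading, summing the inequalities over all $i \in I$ gives
\[
\alpha W(\Phi) + (1-\alpha)\sum_i u^i(\pb,\Phi_i) \ \ge\ \alpha W(\Psi) + (1-\alpha)\sum_i u^i(\pb,\Psi_i).
\]
The key step is that payments cancel. Every trade has exactly one buyer and one seller, so $\sum_i \chi^i_\omega p_\omega = 0$ for each $\omega$. Hence $\sum_i u^i(\pb,\Phi_i) = W(\Phi)$ and $\sum_i u^i(\pb,\Psi_i) = W(\Psi)$, which is the same cancellation used in the proof of \cref{lem:min_lyapunov}. The inequality therefore collapses to $W(\Phi) \ge W(\Psi)$. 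Since $\Psi$ was arbitrary, $\Phi$ attains the market value and is efficient.

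The hard step is the modelling point above, not the algebra. If instead a deviation by agent $i$ only changes her own trades, so that $W$ is evaluated at the mixed allocation, the summation does not line up directly. In that case I would argue by cases.
\begin{itemize}
\item For $\alpha = 1$, every agent is indifferent to her own payments. Equilibrium then forces each $\Phi_i$ to maximize $W$ over unilateral changes, and efficiency has to be derived separately.
\item For $\alpha < 1$, I would divide by $1-\alpha$ and sum. The comparison uses the welfare of the full deviation profile, which is exactly what the interpretation in the preceding observation supplies.
\end{itemize}
I would adopt that interpretation explicitly, consistent with the observation that at $\alpha = 1$ the equilibria are precisely the efficient allocations.
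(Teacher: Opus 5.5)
Your proof is correct and is essentially the paper's argument: test the taxed equilibrium conditions against a full alternative allocation $\Psi$ and use that payments cancel across buyers and sellers. Your direct summation is in fact slightly cleaner than the paper's route, which first uses $W(\Phi)\le W(\Psi)$ for an efficient $\Psi$ to derive per-agent inequalities $u^i(\pb,\Psi)\le u^i(\pb,\Phi)$; that step implicitly divides by $1-\alpha$ and so needs a separate, trivial treatment of $\alpha=1$, which your version does not.

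Your worry about the mixed-allocation reading is unfounded, so you can drop the case split. Let $W(\Psi_i,\Phi_{-i})$ denote welfare with only agent $i$'s bundle replaced by $\Psi_i$. Then $\sum_{i\in I}\bigl[W(\Phi)-W(\Psi_i,\Phi_{-i})\bigr]=W(\Phi)-W(\Psi)$, and the same summation yields $\bigl(\tfrac{\alpha}{|I|}+1-\alpha\bigr)\bigl(W(\Phi)-W(\Psi)\bigr)\ge 0$. The coefficient is strictly positive for every $\alpha\in[0,1]$, including $\alpha=1$.
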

\begin{proof}
    Fix an arbitrary competitive equilibrium $(\pb, \Phi)$ in the taxed market, and let $\Psi$ be an efficient set of trades in the original setting. We have
    \[
    W(\Phi) = \sum_{i \in I} v^i(\Phi) \leq \sum_{i \in I} v^i(\Psi) = W(\Psi).
    \]
    As $(\pb, \Phi)$ is a competitive equilibrium in the taxed market, we have $\widehat{u}^i(\pb, \Psi) \leq \widehat{u}^i(\pb, \Phi)$ for each agent $i$. Together with $W(\Phi) \leq W(\Psi)$, this implies $u^i(\pb, \Psi) \leq u^i(\pb, \Phi)$ for each agent $i$, and so
    \[
    \sum_{i \in I} v^i(\Psi) = \sum_{i \in I} u^i(\pb, \Psi) \leq \sum_{i \in I} u^i(\pb, \Phi) = \sum_{i \in I} v^i(\Phi).
    \]
    Hence $\sum_{i \in I} v^i(\Phi) = \sum_{i \in I} v^i(\Psi)$, so $\Phi$ is also efficient.
\end{proof}

\section{Additional Material and Examples}\label{app:sec:additional-examples}

Let $x^i_{\Phi} \in \{0,1\}$ denote the allocation of bundle $\Phi \in 2^{\Omega^i}$ to agents $i$. We write the social welfare maximization problem as follows.

\begin{maxi*}|s|{x^i_{\Phi}}{\sum_{i \in I} \sum_{\Phi \in 2^{\Omega^i}} v^i(\Phi) x^i_{\Phi}}
    {\label{opt:P1''}}{}\tag{$SW$}
    \addConstraint{\sum_{i\in I} \sum_{\Phi \in 2^{\Omega^i}} \chi^i_{\omega \Phi} x^i_{\Phi} = 0,}{\quad \forall \omega \in \Omega}{
    } 
    \addConstraint{\sum_{\Phi\in 2^{\Omega^i}} x^i_{\Phi} \leq 1,}{\quad \forall i \in I}{
    }
    \addConstraint{x^i_\Phi \in \{0,1\}}{\quad \forall i\in I, \Phi \in 2^{\Omega^i}}
\end{maxi*}

The SW problem of the divisible extension of the network market has a dual formulation which yields a Lyapunov function.

\begin{mini*}|s|{\pb, \ub}{\sum_{i\in I} u^i}
    {\label{opt:D1''}}{}\tag{$SWD$}
    \addConstraint{u^i + \sum_{\omega\in\Omega} \chi_{\omega\Phi}^i p_\omega \geq v^i(\Phi),}{\quad \forall i \in I, \forall \Phi \in 2^{\Omega^i}}{\quad [x^i_{\Phi}]}
    \addConstraint{u^i\geq 0}{\quad \forall i\in I}
\end{mini*}

We notice that the dual problem is equivalent to minimizing the sum of indirect utilities, which can also be seen as a linear programming formulation of the Lyapunov function developed in \cref{sec:convergence-of-decentralized-trading}.

\begin{example}\label{example:core-not-NE}
This example is taken from the supplementary material of \citet{hatfield2013stability}. Outcome $(\pb = (2,2),\Phi = (\chi,\varphi))$ is in the core but cannot be achieved by a Nash equilibrium.

\begin{figure}[htp]
\centering

\begin{subfigure}[c]{0.49\textwidth}
    \centering
    \begin{tikzpicture}[xscale=2, yscale=1]
        \node[agent] (i) at (0,0) {$i$};
        \node[agent] (j) at (2,0) {$j$};
        \draw[trade] (i) to[bend left] node[midway,fill=white] {$\chi$} (j);
        \draw[trade] (i) to[bend right] node[midway,fill=white] {$\varphi$} (j);
    \end{tikzpicture}
    \caption{Trading network}
    \label{fig:two-trade-network}
\end{subfigure}\hfill
\begin{subfigure}[c]{0.49\textwidth}
    \centering
    \begin{tabular}{cccc}
        \toprule
        Agent & $\{\chi\}$ & $\{\varphi\}$ & $\{\chi,\varphi\}$ \\
        \midrule
        $i$ & $0$ & $0$ & $-3$ \\
        $j$ & $5$ & $5$ & $9$ \\
        \bottomrule
    \end{tabular}
    \caption{Valuations}
    \label{fig:two-trade-valuations}
\end{subfigure}

\caption{A two-agent market with two parallel trades and corresponding valuations}
\label{fig:parallel-trades-example}
\end{figure}
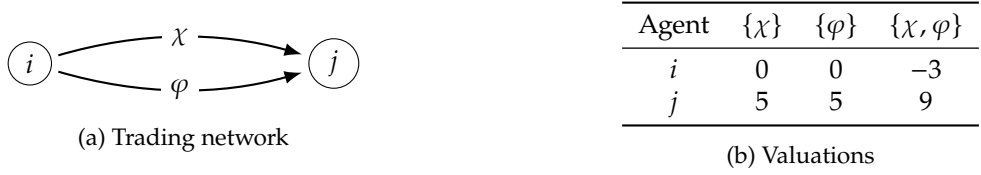
\end{example}

\begin{example}
\label{example:two-agents-two-opposite-trades}
Consider the following example from \cite{hatfield2013stability}, a market with two agents, $I = \{1,2\}$, and two trades, $\Omega = \{\chi, \varphi\}$. Agent $1$ has complements preferences, and agent $2$ has substitutes preferences.

\begin{center}
    \begin{minipage}{0.49\textwidth}
    \centering
    \begin{tikzpicture}[xscale=3.5, yscale=1.8]
    \node[agent] (s) at (0,0) {$1$};
    \node[agent] (b) at (1,0) {$2$};
    \draw[trade] (s) to[bend left] node[midway,fill=white] {$\chi$} (b);
    \draw[trade] (b) to[bend left] node[midway,fill=white] {$\varphi$} (s);
    \end{tikzpicture}
    \end{minipage}
    \begin{minipage}{0.49\textwidth}
    \centering
    \begin{tabular}{ccccc}
    \toprule
    & $\{ \chi \}$ & $\{ \varphi \}$ & $\{\chi, \varphi \}$ \\
    \midrule
    Agent $1$ & $-4$ & $-4$ & $-4$ \\
    Agent $2$ & $3$ & $3$ & $3$ \\
    \bottomrule
    \end{tabular}
    \end{minipage}
\end{center}
Any offers between $3$ and $4$ lead to unilateral deviations that prohibit trade, as both agents would be making a loss. The only pure Nash equilibrium is such that $\offer^i > 4$ and $\offer^j < 3$, which is $\varepsilon$-tight only for $\varepsilon > 1$.
\end{example}

\begin{example}\phantomsection\label{ex:dynamics}
This example aims at illustrating the evolution of the dynamics as defined in \Cref{alg:offer-based-dynamics} and \Cref{alg:price-based-dynamics}, highlighting the difference between the two. The market considered is the one detailed in \cref{fig:two-seller-example}, where two different sellers have the possibility of trading $\omega_1$ and $\omega_2$ with a common buyer. In both cases we will assume $\varepsilon = \nicefrac{1}{2}$.
\begin{figure}[h!]
    \centering
    \begin{subfigure}[b]{0.49\textwidth}
    \centering
    \begin{tikzpicture}[xscale=3.5, yscale=0.9]
    \node[agent] (s1) at (0,1) {$s_1$};
    \node[agent] (s2) at (0,-1) {$s_2$};
    \node[agent] (b) at (1,0) {$b$};
    \draw[trade] (s1) to[bend left] node[pos=0.5,fill=white] {$\omega_1$} (b);
    \draw[trade] (s2) to[bend right] node[pos=0.5,fill=white] {$\omega_2$} (b);
    \end{tikzpicture}
    \caption{Graph}
    \end{subfigure}
    \begin{subfigure}[b]{0.49\textwidth}
    \centering
    \begin{tabular}{cccc}
        \toprule
         Agent & $\{\omega_1\}$ & $\{\omega_2\}$ & $\{\omega_1,\omega_2\}$ \\
         \midrule
         Seller $s_1$ & $-2$ & -- & -- \\
         Seller $s_2$ & -- & $-2$ & -- \\
         Buyer $b$    & $3$ & $3$ & $3$ \\
         \bottomrule
    \end{tabular}
    \caption{Valuations}
    \end{subfigure}
    \caption{Two-seller, one-buyer market with two trades.}
    \label{fig:two-seller-example}
\end{figure}
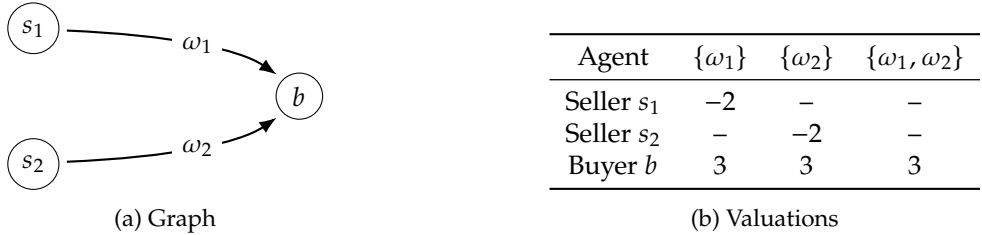

\Cref{tab:two-seller-example-offers} showcases a possible evolution of the offer dynamic. The columns of the table represent the offers of the agents, whether the trades are demanded by the active agent (Yes / No), the updates that each applies to its own offers, and the set of unsatisfied agents $U$. Each row corresponds to one step of the dynamics where the agent specified in the second column is active and updates.

Note that after the initial update from agents $s_1$ and $b$, all the offers are at most $\varepsilon$ apart from the counterpart one. Lastly, note that $s_1$ and $b$ agree immediately on $\omega_1$, hence the first agent will not re-appear in the set $U$, and instead there will be an alternation of the remaining two agents until an agreement is reached. 

\begin{table}[htb!]
\centering
\renewcommand{\arraystretch}{1.15}
\setlength{\tabcolsep}{4pt}
\begin{tabular}{c c c c c c c c c}
\toprule
Round
& active $i$
& $\sigma^b$
& $\sigma^{s_1}$
& $\sigma^{s_2}$
& $\omega_1$
& $\omega_2$
& update
& $U$ after update \\
\midrule
0
& --
& $(2,2)$
& $3$
& $3$
& --
& --
& --
& $b,s_1,s_2$ \\
\midrule
1
& $s_1$
& $(2,2)$
& $3$
& $3$
& Y
& --
& $3\!\to\!\textcolor{blue}{2}$
& $b,s_2$ \\
2
& $b$
& $(2,2)$
& \textcolor{blue}{2}
& $3$
& Y
& N
& $(2,2)\!\to\!(2,\textcolor{darkgreen}{2.5})$
& $s_2$ \\
3
& $s_2$
& $(2,\textcolor{darkgreen}{2.5})$
& $2$
& $3$
& --
& Y
& $3\!\to\!\textcolor{red}{2.5}$
& $b$ \\
4
& $b$
& $(2,2.5)$
& $2$
& \textcolor{red}{2.5}
& Y
& N
& $(2,2.5)\!\to\!(2,\textcolor{darkgreen}{2})$
& $s_2$ \\
5
& $s_2$
& $(2,\textcolor{darkgreen}{2})$
& $2$
& $2.5$
& --
& Y
& $2.5\!\to\!\textcolor{red}{2}$
& $b$ \\
6
& $b$
& $(2,2)$
& $2$
& \textcolor{red}{2}
& Y
& N
& $(2,2)\!\to\!(2,\textcolor{darkgreen}{1.5})$
& $s_2$ \\
7
& $s_2$
& $(2,\textcolor{darkgreen}{1.5})$
& $2$
& $2$
& --
& N
& $2\!\to\!2$
& -- \\
\bottomrule
\end{tabular}
\caption{Offer dynamics with two sellers $s_1,s_2$ and one buyer $b$. Colors indicate actual updates and the corresponding propagated value
in the next round:
$\textcolor{blue}{s_1}$,
$\textcolor{red}{s_2}$,
$\textcolor{darkgreen}{b}$.}
\label{tab:two-seller-example-offers}
\end{table}

In \cref{tab:two-seller-example-prices}, we demonstrate the trajectory of the alternative, price-based, dynamic. The last column of each row indicates the prices after the update. The third and fourth columns show the demand (Yes / No) for each of the two trades with respect to the prices from the preceding row. Note that the active agent is chosen uniformly at random from the set $I$ in each round, which can cause the same agent to be active for several consecutive times, as happens here in rounds~1 and~2. Observe also that the buyer switches her demand from $\omega_1$ to $\omega_2$ as prices change. Lastly, note that even when the equilibrium prices are transiently reached in round~4, the dynamics does not stop and the buyer departs from it in round~5.
\begin{table}[htb!]
\centering
\renewcommand{\arraystretch}{1.15}
\setlength{\tabcolsep}{4pt}
\begin{tabular}{c c c c c }
\toprule
Round
& active $i$
& $\omega_1$
& $\omega_2$
& Current $\pb$ \\
\midrule
0
& --
& --
& --
& $(2,2)$ \\
\midrule
1
& $b$
& Y
& N
& $(2, \textcolor{darkgreen}{1.5})$ \\
2
& $b$
& N
& Y
& $(\textcolor{darkgreen}{1.5}, 1.5)$ \\
3
& $s_1$
& N
& -
& $(\textcolor{blue}{2}, 1.5)$ \\
4
& $s_2$
& -
& N
& $(2, \textcolor{red}{2})$ \\
5
& $b$
& Y
& N
& $(2, \textcolor{darkgreen}{1.5})$ \\
\bottomrule
\end{tabular}
\caption{Price dynamics with two sellers $s_1,s_2$ and one buyer $b$. The dynamics uses the tie-breaking rule where $d^i(\pb)$ is chosen as the lexicographically smallest bundle in $D^i(\pb)$ of maximum cardinality (also defined in \cref{sec:network-trading-market}). Colors indicate updated offers:
$\textcolor{blue}{s_1}$,
$\textcolor{red}{s_2}$,
$\textcolor{darkgreen}{b}$.}
\label{tab:two-seller-example-prices}
\end{table}
\end{example}
\begin{example}\label{example:leximin-not-equal-to-leximax}
In this example with four essential agents, the leximin core imputation does not coincide with the leximax core imputation.

We show that the leximin core imputation is
$(u_{s_1}, u_{s_2}, u_{b_1}, u_{b_2}) = (11.5, 6.5, 8.5, 6.5)$
and the leximax core imputation is
$(u_{s_1}, u_{s_2}, u_{b_1}, u_{b_2}) = (11, 6, 9, 7)$.
Firstly, we can verify immediately that both vectors are core imputations.
Secondly, for any core imputation $u$, the core constraints tell us that
$u_{s_1} + u_{s_2} + u_{b_1} + u_{b_2} = w(\{s_1, s_2, b_1, b_2\}) = 33$,
$u_{s_1} + u_{b_2} \geq w(\{s_1, b_2\}) = 18$
and
$u_{s_2} + u_{b_1} \geq w(\{s_2, b_1\}) = 15$.
Thus,
$u_{s_1} + u_{b_2} = 18$.
Likewise, we can show
$u_{s_2} + u_{b_1} = 15$.

Now suppose that $u$ is the leximin core imputation.
By definition of leximin, it holds that $u_{i} \geq 6.5$ for each
$i \in \{s_1, s_2, b_1, b_2\}$.
Moreover,
$w(\{s_1, s_2, b_1, b_2\}) = 33$
and
$w(\{s_1, b_1\}) = 20$
tell us that
$u_{s_1} + u_{s_2} + u_{b_1} + u_{b_2} = 33$
and
$u_{s_1} + u_{b_1} \geq 20$,
so
$u_{s_2} + u_{b_2} \leq 13$.
Thus
$u_{s_2}, u_{b_2} \geq 6.5$
implies
$u_{s_2} = u_{b_2} = 6.5$.
By the previous paragraph, this implies
$u_{s_1} = 11.5$
and
$u_{b_1} = 8.5$,
so
$u = (11.5, 6.5, 8.5, 6.5)$
as required.

Now suppose that $u$ is the leximax core imputation.
By definition of leximax, it holds that $u_{i} \leq 11$ for each
$i \in \{s_1, s_2, b_1, b_2\}$.
We again apply the core constraints to $u$ to show that
$u = (11, 6, 9, 7)$.
Specifically,
$u_{s_1} + u_{s_2} + u_{b_1} + u_{b_2} = 33$
and
$u_{s_1} + u_{s_2} + u_{b_1} \geq 26$
implies
$u_{b_2} \leq 7$,
so by
$u_{s_1} + u_{b_2} = 18$
we have
$u_{s_1} \geq 11$.
Thus,
$u_{s_1} = 11$
and
$u_{b_2} = 7$.
Due to the grand coalition constraint,
$u_{s_2}$ and $u_{b_1}$ satisfy
$u_{s_2} + u_{b_1} = 15$.
Moreover, constraint
$u_{s_1} + u_{b_1} \geq 20$ together with $u_{s_1} = 11$ implies
$u_{b_1} \geq 9$,
and constraint
$u_{s_1} + u_{b_1} \geq 20$
implies
$u_{s_2} + u_{b_2} \leq 13$.
With $u_{b_2} = 7$, this gives
$u_{s_2} \leq 6$.
As
$(u_{s_1}, u_{s_2}, u_{b_1}, u_{b_2}) = (11, 6, 9, 7)$
dominates the leximax order,
$u = (11, 6, 9, 7)$
as required.

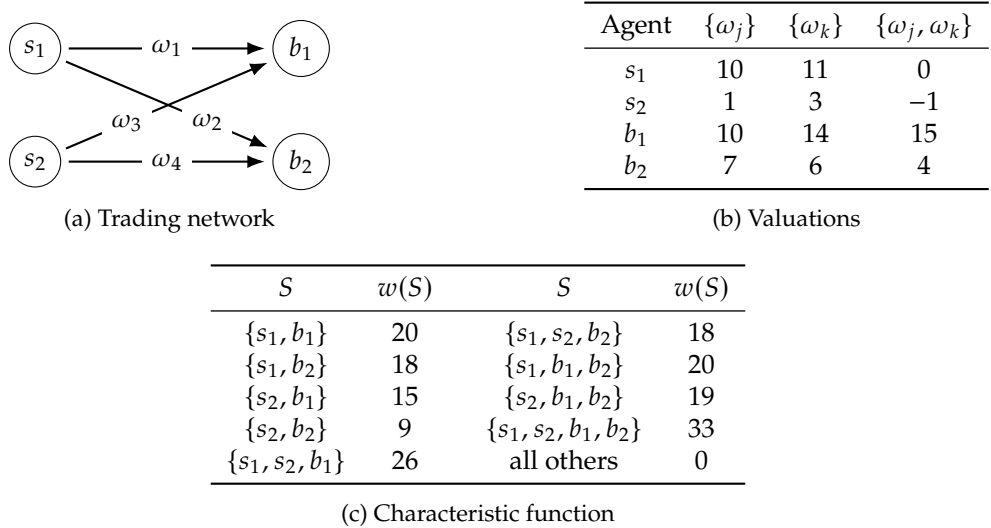
\begin{figure}[htp]
    \centering
    \begin{subfigure}[b]{0.49\textwidth}
        \centering
        \begin{tikzpicture}[xscale=3.5, yscale=1.5]
        \node[agent] (s1) at (0,1) {$s_1$};
        \node[agent] (s2) at (0,0) {$s_2$};
        \node[agent] (b1) at (1,1) {$b_1$};
        \node[agent] (b2) at (1,0) {$b_2$};
        \draw[trade] (s1) -- (b1) node[midway,fill=white] {$\omega_1$};
        \draw[trade] (s1) -- (b2) node[pos=0.7,fill=white] {$\omega_2$};
        \draw[trade] (s2) -- (b1) node[pos=0.3,fill=white] {$\omega_3$};
        \draw[trade] (s2) -- (b2) node[pos=0.5,fill=white] {$\omega_4$};
        \end{tikzpicture}
        \caption{Trading network}
    \end{subfigure}
    \hfill
    \begin{subfigure}[b]{0.49\textwidth}
        \centering
        \begin{tabular}{cccc}
            \toprule
             Agent & $\{ \omega_j \}$ & $\{ \omega_{k} \}$ & $\{ \omega_j, \omega_{k} \}$ \\
             \midrule
             $s_1$      & $10$ & $11$ & $0$ \\
             $s_2$      & $1$ & $3$ & $-1$ \\
             $b_1$      & $10$ & $14$ & $15$ \\
             $b_2$      & $7$ & $6$ & $4$ \\
             \bottomrule
        \end{tabular}
        \caption{Valuations}
    \end{subfigure}

    \vspace{1em}

    \begin{subfigure}[b]{0.9\textwidth}
        \centering
        \begin{tabular}{c c@{\qquad}c c}
            \toprule
            $S$ & $w(S)$ & $S$ & $w(S)$ \\
            \midrule
            $\{s_1,b_1\}$ & $20$  & $\{s_1,s_2,b_2\}$ & $18$ \\
            $\{s_1,b_2\}$ & $18$  & $\{s_1,b_1,b_2\}$ & $20$ \\
            $\{s_2,b_1\}$ & $15$  & $\{s_2,b_1,b_2\}$ & $19$ \\
            $\{s_2,b_2\}$ & $9$   & $\{s_1,s_2,b_1,b_2\}$ & $33$ \\
            $\{s_1,s_2,b_1\}$ & $26$ & all others & $0$ \\
            \bottomrule
        \end{tabular}
        \caption{Characteristic function}
    \end{subfigure}

    \caption{A trading market with four essential agents, where the leximin core does not coincide with the leximax core. The valuation of each agent $i$ is given for $\Omega_i = \{ \omega_j, \omega_k \}$ with $j \leq k$.}
    \label{fig:leximin-not-equal-to-leximax}
\end{figure}
\end{example}

\section{From Markets to Auctions}
\label{app:markets-to-auctions}

We show that the trading network market can be reduced to an auction with indivisible goods and a fixed supply of one unit per good. An auction $\auction = (I, G, \widehat{v})$ consists of a set $I$ of buyers, a set $G$ of goods, and valuations $\widehat{v} = (\widehat{v}^i)_{i \in I}$ for the buyers.
Quasi-linear utilities $\widehat{u}^i$ and demands $\widehat{D}^i$ are defined the usual way from valuations $\widehat{v}$.

To map a networked market $\market$ to an auction $\auction$, we identify the trades of the networked market with the goods in the auction, and each agent in the networked market becomes a buyer in the auction, by analogy with \citet{Hatfield-2019}.

\begin{definition}[Allocation mapping]
     For every active trade $\omega \in \Phi$, we allocate $\omega$ to the buyer $b(\omega)$ in the auction market. For every inactive trade $\omega \notin \Phi$, we allocate $\omega$ to $s(\omega)$. Formally, for any buyer $i$ in the auction, $\Psi^i = \tau^i(\Phi) \equiv \Phi_{i \leftarrow} \cup \Omega_i \setminus \Phi_{i \rightarrow}$.
    In vector notation, $x^{\Psi,i}_\omega = \widehat{\tau}^i(x^{\Phi,i}_\omega) \equiv x^{\Phi,i}_\omega + \mathbf{1}\{i = s(\omega)\}$.
\end{definition}

This establishes a one-to-one correspondence between market allocations and feasible auction allocation collections: Any market allocation $\Phi$ maps to auction allocation $\Psi = (\tau^i(\Phi_i))_{i\in I}$; and any feasible auction allocation $\Psi$ maps to the market allocation $\Phi = (\tau^i(\Psi_i))_{i\in I}$. We note that $\tau(\Phi) \subseteq \Omega_i$ for any $\Phi \subseteq \Omega_i$, and that $\tau = \tau^{-1}$ (but not $\widehat{\tau}$). Each buyer's valuation in the auction matches the corresponding valuation in the network market, subject to the remapping provided by $\tau$, i.e.,~$\widehat{v}(\Psi) = v^i(\tau^i(\Psi))$ for any $\Psi \subseteq \Omega$. \citet{Hatfield-2019} demonstrate that an agent's valuation in the network market is fully substitutable if and only if its corresponding auction market valuation is substitutes. We can prove this more simply by demonstrating that a valuation function has the same LIP when we shift its demanded bundles by a fixed ``offset bundle'', so \cref{fact:substitutes-normals} in \cref{section:geometry-preferences} proves the substitutes property. We go further in proving that arrangements, demand, and competitive equilibria map between the network and the auction market. An arrangement in the auction is a competitive equilibrium if the market is cleared, and all agents are satisfied with their allocation (so demand equals supply).

\begin{definition}
\label{definition:auction-CE}
    An auction arrangement $(\pb, (\Phi_i)_{i \in I})$ for auction $\auction = (I, G, \widehat{v})$ is a \textit{competitive equilibrium (CE)} iff it is feasible and 
    \begin{enumerate}[(i)]
        \item supply is allocated, so $\bigcup_{i \in I} \Phi_i = G$,
        \item every agent demands $\Phi_i$ at $\pb$, so $\Phi_i \in \widehat{D}^i(\pb)$.
    \end{enumerate}
\end{definition}

It is well-known that a competitive equilibrium exists if all buyers have gross substitutes valuations \citep{baldwin2019understanding, Ausubel-2006}. Moreover, any competitive equilibrium maximizes social welfare among all allocations that do not exceed supply, by the social welfare theorems. The social welfare of allocation $(\Psi_i)_{i \in I}$ is given by $\sum_{i \in I} \widehat{v}^i(\Psi_i)$.

Recall that $D^i$ is the demand correspondence of agent $i$ in the networked market, and $\widehat{D}^i$ is the demand correspondence of the same agent in the auction. We see that $\tau$ also maps demanded bundles:

\begin{lemma}
\label{lemma:demand-mapping}
Fix prices $\pb \in \Rn$. For any agent $i \in I$ and any two bundles $\Phi, \Psi \subseteq \Omega_i$ with $\Phi = \tau(\Psi)$, we have $\Phi \in D^i(\pb)$ iff $\Psi \in \widehat{D}^i(\pb)$.
\end{lemma}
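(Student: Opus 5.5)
The plan is to show that the two utility functions, under the correspondence $\tau$, differ only by a constant that depends on $i$ and $\pb$ but not on the bundle. Both demand correspondences are argmax sets over $\Psi \subseteq \Omega_i$, and $\tau$ is a bijection on subsets of $\Omega_i$ (indeed an involution). So an additive shift of this kind immediately gives $\tau(D^i(\pb)) = \widehat{D}^i(\pb)$, which is the claim.

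Concretely, fix $\Psi \subseteq \Omega_i$ and set $\Phi = \tau(\Psi)$. The auction utility is
\[
\widehat{u}^i(\Psi,\pb) = \widehat{v}^i(\Psi) - \sum_{\omega \in \Psi} p_\omega = v^i(\Phi) - \sum_{\omega \in \Psi} p_\omega .
\]
I will split $\Psi$ into its buying and selling parts using $\tau$. Since $\tau$ is an involution, $\Psi = \tau(\Phi) = \Phi_{i\leftarrow} \cup (\Omega_{i\rightarrow}\setminus \Phi_{i\rightarrow})$. Thus $\Psi$ consists of the buying trades in $\Phi$ together with the selling trades of $i$ not in $\Phi$. (The paper's formula $\Phi_{i \leftarrow} \cup \Omega_i \setminus \Phi_{i \rightarrow}$ must be read as this, with $\Omega_{i\leftarrow}\setminus\Phi_{i\leftarrow}$ excluded; I will state this reading explicitly, matching the vector form $\widehat{\tau}$.) Therefore
\[
\sum_{\omega\in\Psi} p_\omega = \sum_{\omega \in \Phi_{i\leftarrow}} p_\omega + \sum_{\omega\in\Omega_{i\rightarrow}} p_\omega - \sum_{\omega \in \Phi_{i\rightarrow}} p_\omega = \sum_{\omega\in\Phi}\chi^i_\omega p_\omega + \sum_{\omega\in\Omega_{i\rightarrow}} p_\omega .
\]
This yields $\widehat{u}^i(\Psi,\pb) = u^i(\Phi,\pb) - K^i(\pb)$ with $K^i(\pb) = \sum_{\omega\in\Omega_{i\rightarrow}} p_\omega$, which is independent of the bundle.

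To conclude, I maximize both sides over $\Psi \subseteq \Omega_i$, equivalently over $\Phi = \tau(\Psi)$, which also ranges over all of $2^{\Omega_i}$ because $\tau$ is a bijection. Then $\Phi \in D^i(\pb)$ if and only if $\Psi \in \widehat{D}^i(\pb)$. Bundles with value $-\infty$ cause no trouble, since the shift is finite.

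The only delicate step is the bookkeeping of $\tau$. One has to make sure it acts on $i$'s own trades exactly as the vector form $x + \mathbf{1}\{i=s(\omega)\}$ does: a sale is flipped to non-allocation and vice versa, while purchases are unchanged. One also has to confirm that it is an involution on $2^{\Omega_i}$. Once that is fixed, the rest is the one-line computation above.
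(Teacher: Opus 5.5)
Your proposal is correct and follows essentially the same route as the paper: change variables through the involution $\tau$, then note that market and auction utilities differ only by the bundle-independent constant $\sum_{\omega \in \Omega_{i \rightarrow}} p_\omega$, which leaves the argmax unchanged. Your explicit reading $\tau^i(\Phi) = \Phi_{i\leftarrow} \cup (\Omega_{i\rightarrow} \setminus \Phi_{i\rightarrow})$ is the one the paper's computation implicitly relies on; it is also the reading consistent with the vector form $\widehat{\tau}$ and with $\tau = \tau^{-1}$.
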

\begin{proof}
We have
\begin{align*}
    \Phi \in D^i(\pb)
    &\iff \Phi \in \argmax_{\Phi' \in \Omega_i} \left ( v^i(\Phi') - \sum_{\omega \in \Phi'} \chi^i_{\omega} p_\omega \right )\\
    &\iff \Psi \in \argmax_{\Psi' \in \Omega_i} \left ( v^i(\tau^i(\Psi')) - \sum_{\omega \in \tau^i(\Psi')} \chi^i_{\omega} p_\omega \right ) \\
    &\iff \Psi \in \argmax_{\Psi' \in \Omega_i} \left ( \widehat{v}^i(\Psi') - \sum_{\omega \in \Psi'} p_\omega + \sum_{\omega \in \Omega_{i \rightarrow}} p_\omega \right ) \\
    &\iff \Psi \in \argmax_{\Psi' \in \Omega_i} \left ( \widehat{v}^i(\Psi') - \sum_{\omega \in \Psi'} p_\omega \right )
\end{align*}
Here the second line follows from the first line by applying the variable change $\Phi' = \tau(\Psi')$. The fourth line follows from the third line because $\sum_{\omega \in \Omega_{i \rightarrow}} p_\omega$ is a constant, as $i$ and $\pb$ are fixed. 
\end{proof}

\begin{proposition}
\label{proposition:CE-mapping}
Market arrangement $(\pb, \Phi)$ is a competitive equilibrium iff auction arrangement $(\pb, (\Psi_i)_{i \in I})$ with $\Psi_i = \tau^i(\Phi_i)$ is a competitive equilibrium.
\end{proposition}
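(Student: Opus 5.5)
The plan is to check the two conditions of an auction competitive equilibrium (\cref{definition:auction-CE}) directly, using \cref{lemma:demand-mapping} for the demand part and a counting argument on trades for feasibility and supply. A network competitive equilibrium only asks that each agent demands $\Phi_i$ at $\pb$, because feasibility is built into the representation of $\Phi$ as a single set of trades. So the proof reduces to two claims. First, the agent-by-agent demand conditions correspond under $\tau^i$. Second, the collection $(\tau^i(\Phi_i))_{i\in I}$ is always a feasible auction allocation that allocates all of $G=\Omega$, and conversely every such auction allocation comes from a network allocation $\Phi$.

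For the demand part, fix an agent $i$. Since $\tau^i$ is an involution on subsets of $\Omega_i$, \cref{lemma:demand-mapping} applied with $\Psi_i=\tau^i(\Phi_i)$ gives $\Phi_i\in D^i(\pb)$ iff $\Psi_i\in\widehat D^i(\pb)$. This handles condition (ii) in both directions.

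For the supply and feasibility part, fix a trade $\omega$. Only its buyer $b(\omega)$ and seller $s(\omega)$ can hold $\omega$ in the auction.
\begin{itemize}
\item If $\omega\in\Phi$, then $\omega\in\Phi_{b(\omega)\leftarrow}$, so it lies in $\tau^{b(\omega)}(\Phi)$. It also lies in $\Phi_{s(\omega)\rightarrow}$, so it is removed from $\tau^{s(\omega)}(\Phi)$.
\item If $\omega\notin\Phi$, the reverse happens: $\omega$ goes to $s(\omega)$ and not to $b(\omega)$.
\end{itemize}
So each good goes to exactly one agent, which gives feasibility (no overallocation) and condition (i) (all supply allocated).

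Conversely, take an auction CE $(\pb,(\Psi_i))$. Condition (i) and feasibility mean each $\omega$ is held by exactly one of $b(\omega),s(\omega)$. Define $\Phi$ as the set of trades held by their buyer. Then $\tau^i(\Phi_i)=\Psi_i$ for every $i$, which is the correspondence already noted after the allocation mapping, and the demand equivalence finishes the argument.

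The main obstacle is bookkeeping rather than mathematics. The definition writes $\tau^i(\Phi)=\Phi_{i\leftarrow}\cup\Omega_i\setminus\Phi_{i\rightarrow}$, which must be parsed as $\Phi_{i\leftarrow}\cup(\Omega_{i\rightarrow}\setminus\Phi_{i\rightarrow})$ for the per-trade case check above to go through. I will state this reading explicitly so that no agent is allocated buying trades it did not purchase.
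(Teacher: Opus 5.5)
Your proof is correct and follows the paper's argument: a per-trade check that each $\omega$ lies in exactly one of $\Psi_{b(\omega)}$ and $\Psi_{s(\omega)}$ gives feasibility and condition (i), and \cref{lemma:demand-mapping} gives condition (ii). Your reading $\tau^i(\Phi)=\Phi_{i\leftarrow}\cup(\Omega_{i\rightarrow}\setminus\Phi_{i\rightarrow})$ is the right one, since it matches both the verbal definition and the vector formula $\widehat{\tau}$; the extra converse starting from an arbitrary auction competitive equilibrium is harmless but not needed, because the statement already fixes $\Psi_i=\tau^i(\Phi_i)$.
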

\begin{proof}
By construction of $\tau$, we have $\omega \in \Phi \iff \omega \in \Psi_{b(\omega)} \iff \omega \notin \Psi_{s(\omega)}$. And it is immediate that $\omega \notin \Psi_{i}$ if $i \notin \{b(\omega), s(\omega)\}$. So we see that $(\Psi_i)_{i \in I}$ satisfies the first condition of \cref{definition:auction-CE}. \Cref{lemma:demand-mapping} shows us that $(\pb, (\Psi_i)_{i \in I})$ satisfies the second condition of \cref{definition:auction-CE} iff $(\pb, \Phi)$ is a competitive equilibrium.
\end{proof}

We know that the $\widehat{v}^i$ are substitutes valuations. So, the known equilibrium existence for auctions with substitutes buyers also implies equilibrium existence in our markets.

\begin{theorem}
\label{market:CE-existence}
In networked markets with fully substitutable agents there exists at least one competitive equilibrium.
\end{theorem}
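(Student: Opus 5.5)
The plan is to transfer the existence result from the associated auction to the network market via the allocation mapping $\tau$ and \cref{proposition:CE-mapping}. Fix a market $\market = (I,\Omega,v)$ with fully substitutable agents and build the auction $\auction = (I, \Omega, \widehat{v})$ with $\widehat{v}^i(\Psi) = v^i(\tau^i(\Psi))$ for $\Psi \subseteq \Omega_i$. Goods outside $\Omega_i$ are worthless to agent $i$; if needed, set $\widehat v^i$ to $-\infty$ on bundles containing goods outside $\Omega_i$, so that such goods never matter. There are three steps.

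\begin{enumerate}[(i)]
\item Show that each $\widehat{v}^i$ is a substitutes valuation.
\item Invoke the known existence of competitive equilibrium in auctions with substitutes buyers and unit supply \citep{baldwin2019understanding,Ausubel-2006}.
\item Map the auction equilibrium back to the market using \cref{proposition:CE-mapping}.
\end{enumerate}

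For step (i), I will use the geometric route announced in the text. In vector notation, the auction bundle of agent $i$ is $\widehat{\tau}^i(\xb) = \xb + \eb^{\Omega_{i\rightarrow}}$, a translation of the market bundle by a fixed offset. \Cref{lemma:demand-mapping} says that, at every price $\pb$, the auction demand of $i$ is exactly the image under this translation of the market demand. Hence the set of prices at which $|\widehat D^i(\pb)|\ge 2$ coincides with the set at which $|D^i(\pb)|\ge 2$, so the two LIPs coincide as sets. Moreover, the difference between the bundles demanded on the two sides of any facet is unchanged by a common translation, so facet normals and weights also agree. The market valuation $v^i$ is fully substitutable, and the paper notes that in vector notation this coincides with the substitutes condition. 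By \cref{fact:substitutes-normals}, every facet of its LIP is normal to some $\eb^\omega$ or $\eb^\omega-\eb^\chi$. The auction LIP therefore has the same property, and \cref{fact:substitutes-normals} again gives that $\widehat v^i$ is substitutes.

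Step (ii) then yields an auction competitive equilibrium $(\pb,(\Psi_i)_{i\in I})$ in which all supply is allocated and each $\Psi_i\in\widehat D^i(\pb)$. For step (iii), define the market allocation $\Phi$ as the set of trades $\omega$ with $\omega\in\Psi_{b(\omega)}$. For this to be a feasible market allocation, each good must go to its buyer or its seller, so that $\Psi_i=\tau^i(\Phi_i)$ for every $i$. \Cref{proposition:CE-mapping} then gives that $(\pb,\Phi)$ is a market competitive equilibrium.

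The main obstacle is this last feasibility point: an auction equilibrium could a priori assign good $\omega$ to some third agent, or leave it unallocated. The $-\infty$ convention (or the fact that only $b(\omega)$ and $s(\omega)$ value $\omega$) handles the first issue. Full allocation of supply, condition (i) of \cref{definition:auction-CE}, handles the second: since supply is one unit, each good lies in exactly one $\Psi_i$, and that agent must be $b(\omega)$ or $s(\omega)$. If the cited existence theorem allows agents with $-\infty$ values, this suffices. Otherwise I would replace $-\infty$ by a large negative constant exceeding the total finite value range. A third agent would then strictly prefer to drop $\omega$ whenever $p_\omega\ge 0$, while the price must be high enough that not all three agents want the good, so an equilibrium never assigns $\omega$ to a third agent.
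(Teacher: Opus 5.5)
Your proposal is correct and follows the paper's route: you transfer substitutability to $\widehat v^i$ via the translation of bundles by $\eb^{\Omega_{i\rightarrow}}$ (same LIP, \cref{fact:substitutes-normals}), invoke existence for the substitutes auction, and pull back via \cref{proposition:CE-mapping}; your explicit check that each good ends up with its buyer or seller is a point the paper leaves implicit by defining $\widehat v^i$ only on subsets of $\Omega_i$. The restricted-domain ($-\infty$) formulation is exactly what the Baldwin--Klemperer framework handles, so the large-negative-constant fallback is unnecessary; it would also need an additive per-good penalty, since a single uniform constant can destroy substitutability.
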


Suppose $\Phi$ is a market allocation, and $(\Psi_i)_{i \in I}$ is the corresponding auction allocation with $\Psi_i = \tau^i(\Phi_i)$. Then the social welfares of these allocations in the market and the corresponding auction are identical, as
\[
\sum_{i \in I} v^i(\Phi_i) = \sum_{i \in I} v^i(\tau(\tau(\Phi_i))) = \sum_{i \in I} \widehat{v}^i(\tau(\Phi_i)) = \sum_{i \in I} \widehat{v}^i(\Psi_i).
\]

Hence, because there is a one-to-one correspondence between market arrangements and auction arrangements, we also see that the social welfare theorems hold in our networked setting.

\begin{theorem}
The social welfare theorems hold in the networked markets.
\end{theorem}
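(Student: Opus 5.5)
We derive both welfare theorems for the networked market by transferring them from the auction $\auction$ through the bijection $\tau$. Two facts are already in hand. By \cref{proposition:CE-mapping}, $(\pb,\Phi)$ is a competitive equilibrium of $\market$ iff $(\pb,(\tau^i(\Phi_i))_{i\in I})$ is a competitive equilibrium of $\auction$. By the identity displayed just above, corresponding allocations have equal social welfare. The one ingredient still missing is that $\tau$ is a bijection between the set of all feasible market allocations $\Phi\subseteq\Omega$ and the set of all auction allocations that exactly allocate the unit supply of each good. I would check this first.

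\emph{Step 1: the bijection.} Take $\Psi_i=\tau^i(\Phi)$. Then $\omega\in\Phi$ implies $\omega$ goes only to $b(\omega)$, and $\omega\notin\Phi$ implies $\omega$ goes only to $s(\omega)$. Moreover $\omega\notin\Psi_i$ whenever $i\notin\{b(\omega),s(\omega)\}$. Hence every good is allocated exactly once.

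Conversely, let $(\Psi_i)$ be an auction allocation in which each good goes to exactly one agent. Without loss of generality each $\omega$ goes to $b(\omega)$ or $s(\omega)$, since otherwise we may reassign it at no loss by restricting bundles to $\Omega_i$. Define $\Phi=\{\omega:\omega\in\Psi_{b(\omega)}\}$. Since $\tau=\tau^{-1}$, this inverts the map. The auction welfare theorems compare an allocation only against allocations not exceeding supply. I would therefore note that for bundles $\Psi\subseteq\Omega_i$, allocations leaving a good unallocated can be completed by giving $\omega$ to $s(\omega)$. Whether this weakly raises welfare needs a free-disposal/monotonicity remark, and this is where I expect care is needed. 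The cleanest fix is to restrict attention to exact allocations and invoke the version of the auction welfare theorems stated for allocations of exactly the supply, which is what CE condition (i) enforces anyway.

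\emph{Step 2: first welfare theorem.} Let $(\pb,\Phi)$ be a market CE. Its image is an auction CE, which maximizes auction welfare over all feasible auction allocations. For any market allocation $\Phi'$, its image $\tau(\Phi')$ is feasible and has the same welfare. So $\sum_i v^i(\Phi_i)\ge\sum_i v^i(\Phi'_i)$, and $\Phi$ is efficient.

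Alternatively, and more robustly, argue directly. Summing the CE inequalities $u^i(\Phi_i,\pb)\ge u^i(\Phi'_i,\pb)$ over $i$, the price terms cancel because each trade has one buyer and one seller. This gives the claim without the supply subtlety.

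\emph{Step 3: second welfare theorem.} Let $\Phi^*$ be efficient and suppose a competitive equilibrium $(\pb,\Phi)$ exists, which holds under full substitutability by \cref{market:CE-existence}. By Step 2, $\Phi$ is efficient, so $\sum_i v^i(\Phi^*_i)=\sum_i v^i(\Phi_i)$. Summing utilities and using price cancellation gives $\sum_i u^i(\Phi^*_i,\pb)=\sum_i u^i(\Phi_i,\pb)=\sum_i\max_{\Psi}u^i(\Psi,\pb)$. Since each term satisfies $u^i(\Phi^*_i,\pb)\le\max_\Psi u^i(\Psi,\pb)$, equality holds term by term. Hence $\Phi^*_i\in D^i(\pb)$ for every $i$, and $(\pb,\Phi^*)$ is a CE.

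This direct argument sidesteps the main obstacle, namely matching the auction theorem's "does not exceed supply" clause to exact allocations. I would present it as the proof, with the $\tau$-transfer as the motivating route.
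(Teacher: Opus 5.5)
Your proof is correct, but the argument you would actually present differs from the paper's. The paper proves this theorem purely by transfer. Having set up the bijection $\tau$ between market and auction arrangements, shown that competitive equilibria correspond (\cref{proposition:CE-mapping}), and checked that corresponding allocations have equal welfare, it imports the known welfare theorems for auctions in a single sentence.

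You instead prove both theorems inside the network market. Summing the equilibrium inequalities over agents gives the first theorem, because the price terms cancel: each trade of a feasible allocation has exactly one buyer and one seller. Applying the same cancellation to an efficient $\Phi^*$ at equilibrium prices $\pb$ forces $u^i(\Phi^*_i,\pb)=\max_\Psi u^i(\Psi,\pb)$ term by term. This gives the second theorem in the Hatfield et al.\ form: equilibrium prices support every efficient allocation.

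Your route is self-contained and more informative. It shows that the first theorem needs only quasilinearity. It shows that the second needs only the existence of some competitive equilibrium, so substitutability enters only through \cref{market:CE-existence}. It also sidesteps bookkeeping the paper leaves implicit: $\widehat v^i$ is defined only on bundles inside $\Omega_i$, and the auction theorems compare against allocations not exceeding supply rather than exact ones. The paper's route buys brevity and consistency with its existence proof, since all of the equilibrium theory is inherited from the auction literature at once.

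Incidentally, the supply subtlety you flag does not actually block the transfer. It is vacuous for the first theorem, since images of market allocations are themselves supply-feasible. For the second theorem, once an auction equilibrium exists, its allocation is exact and optimal over all supply-feasible allocations. Hence any exact allocation of equal welfare, such as the image of an efficient $\Phi^*$, is optimal too and is supported by the same prices.
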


\section{Computations in Practice}
\label{app:practical-optimisation}
We now focus on formulating optimization problems to compute leximin, leximax, and minvar core imputations. Fix a market $\market$ with agents $I$, trades $\Omega$, and valuation $v^i$ for each agent $i \in I$. The main purpose of this section is to specify problems that can be solved using the commercial solver Gurobi. This allows us to search for examples computationally.

Recall that a payoff profile $\xb \in \R^I$ lies in the core iff $\sum_{i \in I} x_i = w(I)$ and $\sum_{i \in C} x_i \geq w(C)$ for all coalitions $C \subseteq I$. So we first we compute the welfare function brute-force, and assume that $w$ is known.

\paragraph{Minimum variance core payoffs.}
To find a minvar core imputation, we formulate the convex program (P2).

\begin{mini*}
{\xb \in \R^I_+}{\sum_{i \in I} x_i^2}{}{}\tag{MINVAR}\label{opt:MINVAR}
\addConstraint{\sum_{i \in C} x_i}{\geq w(C),}{\quad \forall C \subseteq I}
\addConstraint{\sum_{i \in I} x_i}{= w(I).}
\end{mini*}

We formulate the dual of \ref{opt:MINVAR}. The Lagrangian is given by
\begin{align*}
    L(x,\lambda_C,\lambda_I) = \sum_{i \in I} x_i^2 + \sum_{C\subseteq I} \lambda_C \left( w(C) - \sum_{i\in C}x_i \right) + \lambda_I \left( \sum_{i\in I}x_i - w(I) \right)
\end{align*}
Let $g(\lambda_C,\lambda_I) = \inf_{x\in \R^I_+} L(x,\lambda_C,\lambda_I)$. Then the dual is
\begin{maxi*}
{\lambda_C, \lambda_I}{g(\lambda_C,\lambda_I)}{}{}\tag{$D-MINVAR$}\label{opt:D-MINVAR}
\addConstraint{\lambda_C,\lambda_I\geq 0}
\end{maxi*}

\paragraph{Leximin and leximax core imputations.}
We compute the leximin core imputation by iteratively maximizing the minimum utility among unassigned players. Assume $I = \{1, \ldots, n\}$ for convenience. Recall the core
\[
C = \left\{ \xb \in \R^n_+ \;\middle|\; \sum_{i \in C} x_i \geq w(C) \;\; \forall C \subseteq [n], \quad \sum_{i \in [n]} x_i = w([n]) \right\}.
\]

\medskip\noindent\textbf{Leximin.}
Initialize $F_0 = [n]$ (the set of \emph{free} players) and $C_0 = C$. At step $k = 1, 2, \ldots\,$:
\begin{maxi!}
{\xb,\, t}{t}
{\label{opt:leximin-k}}
{}\tag{$P3_k$}
\addConstraint{\xb}{\in C_{k-1}}
\addConstraint{x_i}{\geq t,}{\quad \forall i \in F_{k-1}.}
\end{maxi!}
Let $t_k^*$ be the optimal value. Define
\[
A_k = \bigl\{\, i \in F_{k-1} : x_i = t_k^* \text{ for every } \xb \in C_{k-1} \text{ with } x_j \geq t_k^* \text{ for all } j \in F_{k-1} \,\bigr\},
\]
and update $F_k = F_{k-1} \setminus A_k$ and $C_k = C_{k-1} \cap \{\xb : x_i = t_k^* \text{ for } i \in A_k\}$. The procedure terminates when $F_k = \emptyset$.

\begin{lemma}\label{lemma:leximin-progress}
At each step, $A_k \neq \emptyset$, so the procedure terminates in at most $n$ steps. Moreover, the resulting imputation is the leximin core imputation.
\end{lemma}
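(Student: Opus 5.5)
The proof will show two things: that $A_k \neq \emptyset$ at each step, and that the fixed point reached is the leximin core imputation. Both rely on convexity and compactness of $C_{k-1}$. By induction, $C_{k-1}$ is a nonempty compact convex polytope. It is the core, which is bounded because $x \geq 0$ and the coordinates sum to $w(I)$, intersected with affine equalities that are satisfiable by construction. Hence \eqref{opt:leximin-k} attains its optimum $t_k^*$. Let $S_k = \{\xb \in C_{k-1} : x_i \geq t_k^* \ \forall i \in F_{k-1}\}$ be its optimal face; it is nonempty, compact and convex.

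To get $A_k \neq \emptyset$, I argue by contradiction. Suppose every $i \in F_{k-1}$ admits some $\xb^{(i)} \in S_k$ with $x^{(i)}_i > t_k^*$. Take the average $\bar\xb = \frac{1}{|F_{k-1}|}\sum_{i} \xb^{(i)}$. It lies in $S_k$ by convexity. For each $i \in F_{k-1}$, $\bar x_i$ averages values that are all at least $t_k^*$, one of which is strictly larger, so $\bar x_i > t_k^*$. Thus $\min_{i \in F_{k-1}} \bar x_i > t_k^*$, contradicting optimality of $t_k^*$. So some free player is pinned at $t_k^*$ throughout $S_k$. Therefore $|F_k| < |F_{k-1}|$, the procedure stops within $n$ steps, and $C_k \supseteq S_k \neq \emptyset$ keeps the induction going.

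For leximin optimality, let $\xb^*$ be the unique point left at termination; it is unique because all coordinates are fixed. Let $\yb \in C$ be arbitrary, and suppose $\asc(\yb) \succ_L \asc(\xb^*)$ or they differ. I would show by induction on $k$ that if $\yb$ is leximin-weakly-better, then $\yb \in C_k$. Suppose $\yb \in C_{k-1}$. The coordinates outside $F_{k-1}$ equal those of $\xb^*$, and these are the smallest values $t_1^* \le \dots \le t_{k-1}^*$; the monotonicity of the $t_j^*$ follows since $S_j$ is feasible at step $j+1$. If $\min_{F_{k-1}} y_i < t_k^*$, then $\yb$ is lexicographically worse, because $\xb^*$ has all free coordinates at least $t_k^*$. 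So $\yb \in S_k$, and by definition of $A_k$ we get $y_i = t_k^*$ on $A_k$, hence $\yb \in C_k$. At termination this forces $\yb = \xb^*$, so $\xb^*$ is leximin, and it is the unique leximin imputation, in agreement with \citet{vazirani2025fair}.

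The main obstacle is the bookkeeping in this comparison of sorted vectors. Coordinates fixed at earlier steps and free coordinates can interleave in $\asc(\cdot)$ when values tie, so I would phrase the comparison by counting entries below each threshold. Specifically, I would compare $|\{i : y_i < t\}|$ against $|\{i : x^*_i < t\}|$ at $t = t_k^*$, rather than relying on sorted positions directly.
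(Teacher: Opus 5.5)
Your proposal is correct and follows essentially the same route as the paper. The averaging argument for $A_k \neq \emptyset$ is identical, and for optimality both arguments rest on one fact: a leximin-better core imputation would lie in $S_k$ and hence be forced to equal $t_k^*$ on $A_k$. The paper phrases this as a contradiction with $i \in A_k$.

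Your induction, which shows that any weakly better $\yb$ stays in every $C_k$ using $t_j^* \le t_{j+1}^*$ and the shared fixed coordinates, makes explicit a step the paper's sketch leaves implicit. That step is that the competitor agrees with the procedure's output on the already-fixed players and lies at or above $t_k^*$ on the free ones.
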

\begin{proof}
If $A_k = \emptyset$, then every free player can be raised strictly above $t_k^*$ while all other free players remain at or above $t_k^*$. Taking a convex combination of such solutions yields a feasible point with $x_i > t_k^*$ for all $i \in F_{k-1}$, contradicting the optimality of $t_k^*$.

For correctness, let $\yb^*$ be the imputation produced by the procedure, sorted in ascending order: its components are $t_1^* \leq t_2^* \leq \cdots$ with appropriate multiplicities. Suppose some core imputation $\xb'$ has sorted vector $\zb$ with $\yb^* \prec \zb$. Let $k$ be the first step at which they differ, and let $i \in A_k$. Then $x'_i \geq z_k > t_k^*$ (since $\zb$ agrees with $\yb^*$ on all earlier components and is strictly larger at position $k$), but $\xb'$ restricted to the constraints of $C_{k-1}$ with $x_j \geq t_k^*$ for $j \in F_{k-1}$ would witness that player $i$ can exceed $t_k^*$, contradicting $i \in A_k$.
\end{proof}

We remark that the set $A_k$ can be identified with $|F_{k-1}|$ additional linear programs: player $i \in F_{k-1}$ belongs to $A_k$ if and only if the maximum of $x_i$ over $\{\xb \in C_{k-1} : x_j \geq t_k^* \;\forall j \in F_{k-1}\}$ equals $t_k^*$.

\medskip\noindent\textbf{Leximax.}
The leximax core imputation is computed symmetrically, replacing the maximum over a lower bound with a minimum over an upper bound. Initialize $G_0 = [n]$ and $C_0 = C$. At step $\ell = 1, 2, \ldots\,$:
\begin{mini!}
{\xb,\, t}{t}
{\label{opt:leximax-ell}}
{}\tag{$P4_\ell$}
\addConstraint{\xb}{\in C_{\ell-1}}
\addConstraint{x_i}{\leq t,}{\quad \forall i \in G_{\ell-1}.}
\end{mini!}
Let $t_\ell^*$ be optimal. Define $B_\ell = \{i \in G_{\ell-1} : x_i = t_\ell^*$ for every feasible $\xb$ with $x_j \leq t_\ell^*$ for all $j \in G_{\ell-1}\}$, update $G_\ell = G_{\ell-1} \setminus B_\ell$ and $C_\ell = C_{\ell-1} \cap \{\xb : x_i = t_\ell^* \text{ for } i \in B_\ell\}$. The result is the leximax core imputation by a symmetric argument.

\end{document}